\DeclareMathOperator*{\argmin}{arg\,min}
\DeclareMathOperator{\tr}{tr}
\providecommand*{\diff}%
	{\@ifnextchar^{\DIfF}{\DIfF^{}}}
\def\DIfF^#1{%
	\mathop{\mathrm{\mathstrut d}}%
	\nolimits^{#1}\gobblespace}
\def\gobblespace{%
	\futurelet\diffarg\opspace}
\def\opspace{%
	\let\DiffSpace\!%
	\ifx\diffarg(%
		\let\DiffSpace\relax
	\else
		\ifx\diffarg[%
			\let\DiffSpace\relax
		\else
			\ifx\diffarg\{%
				\let\DiffSpace\relax
			\fi\fi\fi\DiffSpace}
\providecommand*{\deriv}[3][]{%
	\frac{\diff^{#1}#2}{\diff #3^{#1}}}
\definecolor{oxford_blue}{RGB}{14,31,71}
\tikzset{%
  >={stealth[width=1mm,length=1mm]},
            base/.style = {rectangle, rounded corners, draw=black,
                           minimum width=1.5cm, minimum height=1cm,
                           text centered, font=\sffamily},
  			mass/.style = {base, text width=3cm, thick},
         process/.style = {base, on chain,
         				   minimum width=1.2cm, fill=white,
         				   text width=1.2cm},
            data/.style = {trapezium, draw, on chain,
            			   anchor=center,
            			   minimum width=1.5cm, minimum height=.75cm,
            			   trapezium left angle=75, trapezium right angle=105,
            			   text centered, font=\ttfamily,
            			   text width=1.6cm,
            			   fill=red!30},
           	  or/.style = {draw, circle, minimum size=0.05cm, fill=blue!30},
}
\NewDocumentCommand\Perp{O{0.5}mmmO{1cm}}{%
  \coordinate (#4) at
    ($ ($ #2!#1!#3 $) ! {sin(90)} ! 90:#3 $) {};
  \draw[-latex, red] ($ #2!#1!#3 $) -- ($ ($ #2!#1!#3 $) ! #5 ! (#4)$);
  \coordinate (#4) at ($ ($ #2!#1!#3 $) ! #5 ! (#4)$);  
}
\tikzset{%
  square/.style={
    regular polygon,
    regular polygon sides=4,
    draw,
    minimum size=1.5cm
  },
  every neuron/.style={
    circle,
    draw,
    minimum size=1cm
  },
  trans neuron/.style={
    square,
    draw,
    minimum size=1.5cm
  },
  neuron missing/.style={
    draw=none, 
    scale=4,
    text height=0.333cm,
    execute at begin node=\color{black}$\vdots$
  },
  neuron empty/.style={
    draw=none, 
    scale=4,
    text height=0.333cm
  },
}
\newtheorem{theorem}{Theorem}
\newtheorem{proposition}{Proposition}
\theoremstyle{remark}
\newtheorem{remark}{Remark}
\begin{document}

\title{Arbitrage-free neural-SDE market models
}

\author{%
Samuel N. Cohen \and Christoph Reisinger \and Sheng Wang \and 
Mathematical Institute, University of Oxford \\
\texttt{ \{samuel.cohen, christoph.reisinger, sheng.wang\} }\\ 
\texttt{@maths.ox.ac.uk}
}

\date{}

\maketitle

\begin{abstract}

Modelling joint dynamics of liquid vanilla options is crucial for arbitrage-free pricing of illiquid derivatives and managing risks of option trade books. This paper develops a nonparametric model for the European options book respecting underlying financial constraints while being practically implementable. We derive a state space for prices which are free from static (or model-independent) arbitrage and study the inference problem where a model is learnt from discrete time series data of stock and option prices. We use neural networks as function approximators for the drift and diffusion of the modelled SDE system, and impose constraints on the neural nets such that no-arbitrage conditions are preserved. In particular, we give methods to calibrate \textit{neural SDE} models which are guaranteed to satisfy a set of linear inequalities. We validate our approach with numerical experiments using data generated from a Heston stochastic local volatility model.

\end{abstract}

{\bf MSC}: 91B28; 91B70; 62M45; 62P05

{\bf Keywords}: Market models; no-arbitrage; European options; neural networks; neural

SDE; constrained diffusions; statistical inference

\section{Introduction}

Consider a financial market where the following assets are liquidly traded: a stock\footnote{Although we refer to the asset as ``stock'' throughout the paper, our methods are equally applicable to other asset classes such as currencies or commodities.} $S$ and a collection of European call options $C(T,K)$ on $S$ with various expiries $T \in \mathcal{T}$ and strikes $K \in \mathcal{K}$. Pricing and risk management generally require a statistical model for these assets' prices, which are known to be related to each other in complex ways. For example, the volatility of the stock price through time is typically related to the value of call options, in a monotone but nonlinear fashion. These stylised relationships suggest that there is significant statistical information captured in the interrelated prices of the stock and options, however modelling these jointly is challenging. Furthermore, there are various constraints on prices which must hold in the absence of arbitrage, and these should be reflected in any statistical model.

We aim to construct a class of models for the stock and options that should
\begin{enumerate}[leftmargin=*, label=(\roman*)]
	\setlength\itemsep{1pt}
    \item permit no arbitrage;
    \item allow exact cross-sectional calibration; and 
    \item reflect stylised facts observed from market price dynamics.
\end{enumerate}
Importantly, it should be practically convenient to estimate these models, given that observations are discrete time series of prices for a large but finite collection of options. We aim to exploit the recent successes in the use of neural networks as function approximators in order to give a flexible class of models. However, the no-arbitrage constraints on option prices imply that we need to fit neural-nets where the resulting behaviour (when used in an SDE) will satisfy a family of linear (in)equalities. In this paper we will develop methods to solve this challenge, and demonstrate their effectiveness when building a financial model.

\subsection{Martingale and market model approaches}

Many models in the literature are derived using the \textit{martingale approach}. In this approach, one specifies the dynamics of the underlying $S$ (usually in the form of an SDE) under some pricing measure $\mathbb{Q}$, and uses no-arbitrage arguments and It\^o calculus to derive option prices, written as discounted conditional expectation of options' payoffs under $\mathbb{Q}$. For example, $C_t(T,K) = D_t(T) \cdot \mathbb{E}^\mathbb{Q} [(S_T - K)^+ | \mathscr{F}_t]$, where $D_t(T)$ is the time $t$ price of the zero-coupon bond expiring at time $T$. Prestigious examples of martingale models are the Black--Scholes (BS) \cite{BlackScholes1973} model, the local volatility (LV) model of Dupire \cite{Dupire1994} and Derman and Kani \cite{Derman1994}, stochastic volatility (SV) models such as Heston \cite{Heston1993} and SABR \cite{Hagan2002, Hagan2014}, and stochastic local volatility (SLV) models \cite{Jex1999}.
A data-driven approach to learn an overparametrised martingale model with
SDE coefficients expressed as neural-nets
is taken by Gierjatowicz, Sabate-Vidales, \v{S}i\v{ska}, Szpruch and \v{Z}uri\v{c} \cite{gierjatowicz2020robust}.

The martingale approach guarantees no-arbitrage by the First Fundamental Theorem of Asset Pricing (FFTAP) \cite{harrison1979}. Therefore, the martingale approach eliminates arbitrage by construction, and the LV model and SV models can calibrate exactly for \textit{static} price surfaces. However, some challenges remain:
\begin{enumerate}[leftmargin=*, label=(\roman*)]
\setlength\itemsep{1pt}
\item martingale models do not give explicit expressions for option prices, often requiring heavy, model-specific numerical methodology to calibrate these models to market data;
\item the values of calibrated model parameters are observed to change over time, even though they are assumed to be constant by the posited model;
\item martingale models are naturally posed under the (risk-neutral) measure $\mathbb{Q}$, and additional steps are often needed to use them to model dynamics under the historical measure $\mathbb{P}$, as is needed for risk management.
\end{enumerate}

An alternative family of models admitting exact calibration are \textit{market models}, where one specifies the joint dynamics of all liquid tradables simultaneously, and imposes additional conditions to exclude arbitrage, as option prices are no longer automatically discounted conditional expectations. This idea originates from the framework of Heath--Jarrow--Morton (HJM) \cite{HJM1992} for interest rate modelling, which establishes no-arbitrage through drift restrictions. The case for options is more complicated, as the contract specifications at maturity enforce convoluted relationships between option prices and the underlying price. In other words, the state space of these processes is heavily constrained. 

There is a useful distinction to be drawn, in this context, between ``model-free'' and ``model-based'' arbitrage. If a market model produces prices which display model-free arbitrages, then these are easily exploited, as they do not depend on the dynamic model used. A counterparty simply needs to observe prices inconsistent with a set of restrictions (in practice given by a finite set of linear inequalities), and can quickly identify arbitrage opportunities. On the other hand, if a market model only has the risk of producing model-based arbitrage, then this may be less of a concern; to exploit this opportunity, a counterparty would typically need to know the model being used (including all conventions around interpolation of prices). Indeed, if no model-free arbitrages are present, then it is known that there exists a dynamic model which is arbitrage free and replicates these prices (see, for example, Carr and Madan \cite{Carr2005}, Davis and Hobson \cite{davis2007}).

Conveniently, the model-free arbitrage restrictions correspond to \emph{static} arbitrage constraints, that is, to restrictions on the state-space of the price process. In previous work \cite{Cohen2020} (drawing on Cousot \cite{cousot2007}), we have reduced these constraints, for a collection of arbitrary strikes and maturities, to an efficient set of linear inequalities.

In this paper we will focus our attention, therefore, on estimating (under the historical measure $\mathbb{P}$) a market model where the model-free arbitrage constraints will be satisfied. We will see that this restriction of the state-space of our model yields significant benefits to the task of statistical calibration.

\subsection{Factor-based arbitrage-free market models}

To allow practical calibration, we will focus on constructing \textit{market models} for \textit{finitely} many options that may have an \textit{arbitrarily-shaped} lattice of strikes and expiries.

Typically, the range of strikes actively quoted on the market is much smaller for short-expiry options than for long-expiry options, as seen in Figure \ref{fig:optionLattice}\footnote{Unsurprisingly, the ranges of quoted Black--Scholes deltas appear quite similar across expiries, which is consistent with the OTC FX convention of quoting option price by delta (see Reiswich and Wystup \cite{Reiswich2009}).}. This causes problems for methods based on a \textit{rectangular} lattice\footnote{The lattice of strikes and expiries is called rectangular if prices are specified for a finite collection of discrete strikes and, for each strike, for a common finite set of expiries.}. One might augment market data using arbitrage-free interpolation techniques (see e.g. Kahal\'{e} \cite{Kahale2003} and Fengler \cite{Fengler2009}), but these are prone to extrapolate unrealistic prices for deep ITM and OTM options. Our models allow an arbitrarily shaped lattice of options, which can be specified according to the stylised liquidity profile of the options market. 

\begin{figure}[!h]
\centering
\includegraphics[scale=0.66]{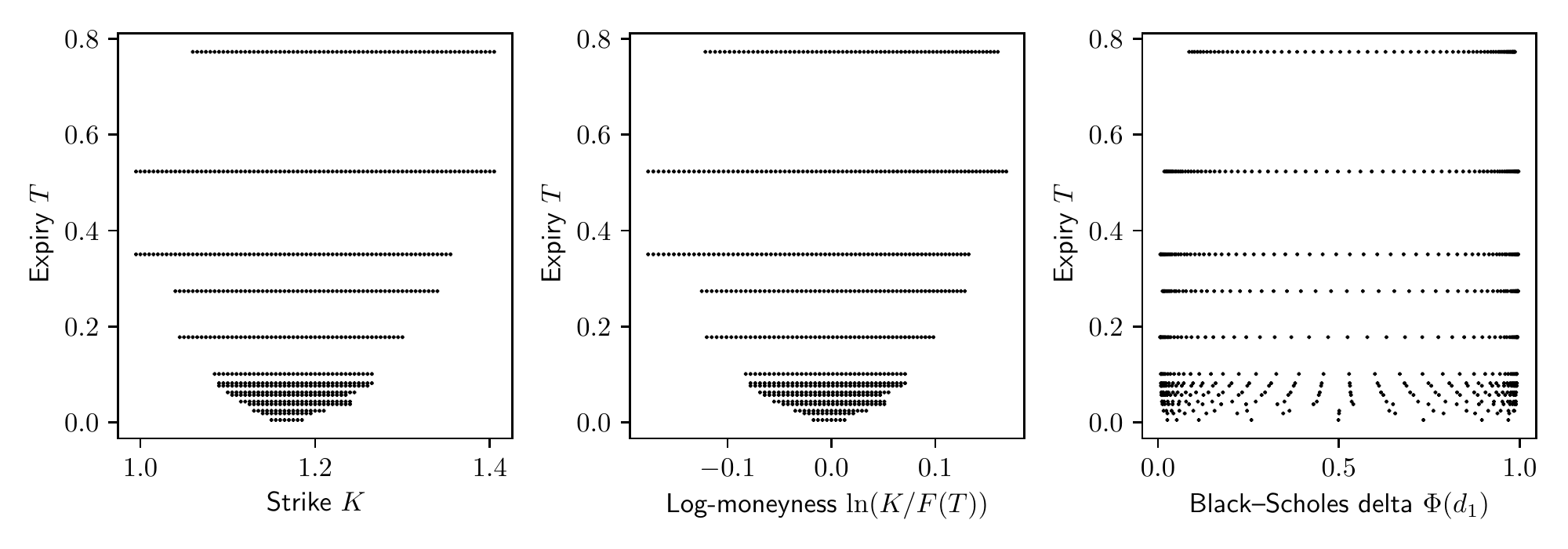}
\caption{The quoted strikes/log-moneynesses/Black--Scholes deltas (horizontal axis) for all expiries (vertical axis) of CME weekly- and monthly-listed EURUSD European call options as of 31st May, 2018. Each dot represents a market quote.}
\label{fig:optionLattice}
\end{figure}

Imposing no-arbitrage conditions is a key component in our models. To achieve this, our models first ensure that for each fixed time, prices take values in a constrained state space such that options with different strikes (moneynesses) and expiries are \textit{statically} arbitrage-free. Unlike the neat static arbitrage constraints for options on a rectangular lattice, e.g. see Carr and Madan \cite{Carr2005},  the flexibility of modelling options on an arbitrarily-shaped lattice induces challenges for deriving static arbitrage constraints. Based on the work of Davis and Hobson \cite{davis2007} and Cousot \cite{cousot2007}, our previous work \cite{Cohen2020} produces an efficient construction of static arbitrage constraints for large-scale practical problems. Second, we aim both to guarantee the statistical accuracy of our model (by training with historical data) and to minimise the opportunities for dynamic arbitrage -- the discounted price process of each traded asset, i.e.\ stock and all options, should be a martingale under \emph{some} joint risk-neutral measure (which is not generally the measure from which our training data are sampled). This involves compatibility restrictions on the model coefficients, analogous to the HJM \cite{HJM1992} drift restrictions.

The static arbitrage constraints are \textit{linear} with respective to option prices, and enforce simple geometric shape constraints on the surface $(T,K) \mapsto C(T,K)$, including positivity, monotonicity and convexity. This effectively restricts the evolution of the option price surface to a low-dimensional submanifold; see for example Cont, Fonseca and Durrleman \cite{Cont2002, Cont20022} for the use of Karhunen--Lo\`{e}ve decomposition on implied volatility (IV) surfaces to extract low-dimensional factor models. Rather than decomposing IVs, we decode factors directly from option prices, preserving the linearity of the static arbitrage constraints in the factor space. This results in a convex polytope state space for the factors.

We then assume that the evolution of the collection of option prices is driven by a small number of latent \textit{market factors}, for which we specify diffusion-like dynamics. To summarise, from the no-arbitrage conditions, we derive a class of low-dimensional market models with drift restrictions and a convex polytope state space. We shall see that this low-dimensional representation has various advantages, in particular it is possible to calibrate non-parametric models for the factor dynamics, the estimated models are noticeably more stable, and the number of constraints that need to be verified is significantly reduced.

\subsubsection*{Comparison with codebook models}

Most work on arbitrage-free market models for option prices has sought a convenient parametrisation, or \textit{codebook} in the language of Carmona \cite{Carmona2007}, such that the codebook processes have a simple state space and yet capture all the static arbitrage constraints. Specifying the dynamics of the codebook then leads to a tractable arbitrage-free dynamic model for options.

A proper choice of codebook processes improves the descriptive capability of the model in terms of the observed strikes and expiries ($\mathcal{K}$ and $\mathcal{T}$). Sch\"{o}nbucher \cite{Schonbucher1999} proposes a market model in terms of the BS \textit{implied volatility} $\sigma_\text{imp}$ for a single option, i.e.\ the case $\mathcal{K} = \{K\}$ and $\mathcal{T} = \{T\}$. The model is free from static arbitrage if $S>0$ and $\sigma_\text{imp} \geq 0$. However, as discussed by Schweizer and Wissel \cite{Schweizer2008}, market models of implied volatilities cannot be easily extended to the general case with more than one option, because the absence of static arbitrage between different options enforces awkward constraints between the corresponding implied volatilities (see, for example, Lemma 2.2 in \cite{Gatheral2014}). For the term structure case,  $\mathcal{K} = \{K\}$ and $\mathcal{T} = [0, +\infty)$, Sch\"{o}nbucher \cite{Schonbucher1999} and Schweizer and Wissel \cite{Schweizer2007} use the \textit{forward implied volatilities}, defined as $\sigma_\text{fw}^2(T):= \partial ((T-t)\sigma_\text{imp}^2(T)) / \partial T$, as the codebook. Their model is statically arbitrage-free if $S>0$ and $\sigma_\text{fw}^2(T) \geq 0$ for all $T \in \mathcal{T}$. Jacod and Protter \cite{Jacod2010} directly use \textit{call option prices} as the codebook and work in a more general setting of jump processes. Static arbitrage is ruled out by imposing a non-decreasing term structure on the call option prices. For the multi-strike case $\mathcal{K} = [0, +\infty)$ and $\mathcal{T} = \{T\}$, Schweizer and Wissel \cite{Schweizer2008} introduce a new parametrisation of option prices called \textit{local implied volatilities} $X(K)$ and \textit{price level} $Y$, see Definition 4.1 and 4.4 of \cite{Schweizer2008}. Their model does not admit static arbitrage if $X(K)>0$ for all $K \in \mathcal{K}$.

The surface case $\mathcal{K} = [0, +\infty)$ and $\mathcal{T}= [0, +\infty)$ has been considered by Derman and Kani \cite{Derman1998}, Carmona and Nadtochiy \cite{Carmona2009, Carmona2012} and Kallsen and Kr\"{u}hner \cite{Kallsen2013}. Carmona and Nadtochiy use Dupire's local volatilities $\sigma_\text{loc}(T,K)$ as the codebook processes and rigorously analyse the dynamic arbitrage conditions derived by Derman and Kani. Later, these authors independently built market models relying on time-inhomogeneous L\'{e}vy processes, which allow jumps and may hence be particularly suitable for short-term options. Both parametrisations make the static arbitrage constraints naturally hold.

The non-trivial construction of codebook processes, i.e.\ a PDE for local volatility \cite{Carmona2009} and Fourier transforms for L\'{e}vy process \cite{Carmona2012, Kallsen2013}, add complexities for model calibration. By considering a continuous spectrum of strikes and expiries, one is usually forced to consider an infinite-dimensional problem, where it is difficult to prove the existence of models and computationally infeasible to implement them exactly.  The only work, to our knowledge, that considers market models for a finite family of strikes and expiries is Wissel \cite{Wissel2008}. Wissel combines ideas from the LV model of Dupire with the market model of Schweizer and Wissel \cite{Schweizer2008} and uses local implied volatilities and price level as a codebook for parametrising statically arbitrage-free option prices on a rectangular lattice. The rectangular lattice setup is necessary for the codebook construction but is a rather restrictive assumption in practice. In addition, there is not a straightforward model calibration method for general cases.

\subsection{Model inference and constrained neural SDE}

For our models, inference consists of two independent steps: factor decoding and SDE model calibration.

The factor decoding step is to extract a smaller number of market factors from prices of finitely many options. These factors are built to reflect the joint goals of eliminating static and dynamic arbitrage in reconstructed prices and guaranteeing statistical accuracy.
Expressing option prices as affine functions of market factors, we use a modified form of principal component analysis (PCA) to decompose option prices into a small number of driving factors.

In the SDE model calibration step, we represent the drift and diffusion functions by neural networks, referred to as \textit{neural SDE} by Cuchiero, Khosrawi and Teichmann \cite{Cuchiero2020} and Gierjatowicz et al.\ \cite{gierjatowicz2020robust}. By leveraging deep learning algorithms, we train the neural networks by maximising the likelihood of observing the factor paths, subject to the derived arbitrage constraints. As argued by Gierjatowicz et al.\ in their framework of neural martingale models, this allows for calibration and model selection to be done simultaneously.

No-arbitrage conditions are embedded as part of model inference. Specifically, static arbitrage constraints are characterised by a convex polytope state space for the market factors; we identify sufficient conditions on the drift and diffusion to restrict the factors to their arbitrage-free domain, using the classic results of Friedman and Pinsky \cite{friedman1973}. Consequently, the neural network that is used to parameterise the drift and diffusion functions needs to be constrained\footnote{Chatatigner, Cr\'epey and Dixon \cite{Chataigner2020} compare the \textit{hard constraint} and \textit{soft constraint} approaches for imposing no-arbitrage constraints on put option prices. To enforce hard constraints, one modifies the network structure to embed the constraints (see Dugas, Benigo, B\'elisle, Dadeau and Garcia \cite{Dugas2009}), while the soft constraint approach introduces penalty terms favouring the satisfaction of the constraints (see Itkin \cite{itkin2019deep}). Though all these works are concerned with enforcing no-arbitrage conditions on neural networks that generate option prices, they are all solving cross-sectional calibration problems, rather than learning dynamic models.}. We achieve this by developing appropriate transformations for the output of the neural network.

Our models are friendly for implementation for the following reasons:
    \begin{enumerate}[leftmargin=*, label=(\roman*)]
    \setlength\itemsep{1pt}
        \item They are flexible enough to model options on an arbitrarily-shaped lattice of strikes and expiries. One can specify the lattice according to the stylised liquidity profile of the target option market.
        \item The model inference procedure includes both cross-sectional calibration (i.e.\ the factor decoding step) and time-series estimation (i.e.\ the neural SDE calibration step). Neural SDEs enable a data-driven model selection approach, and mitigate model risk arising from restrictive parametric forms.
        \item Further economic considerations (for example bounds on statistical arbitrage) can be incorporated into the training of these models through penalization.
    \end{enumerate}

\subsection{Key contributions}

We summarise the key contributions\footnote{We have implemented the factor decoding algorithm, constrained neural network training and market model simulation in Python in the repository \url{https://github.com/vicaws/neuralSDE-marketmodel}. } of this paper as follows.

\begin{enumerate}[leftmargin=*, label=(\roman*)]
\setlength\itemsep{1pt}
    \item We construct a family of factor-based market models (Section \ref{sec:market_model}), where the factor representation in principle allows exact static calibration, and the joint dynamics of options are straightforwardly available once the dynamics of the factors are specified. The models are given by a finite system of SDEs for the factors and the stock price. 
    
    \item We derive an HJM-type drift condition on the factor SDEs which guarantees freedom from dynamic arbitrage, and the state space of the market factor processes where the models are free from static arbitrage.
    
    \item To calibrate our models, we need to tackle the problem of calibrating a neural SDE with a convex polytope state space. We propose a novel hard constraint approach that modifies the network to respect sufficient conditions on the drift and diffusion to restrain the process within the polytope. Readers who are interested in this calibration problem, rather than market models, can directly jump to Section \ref{sec:neural_sde_polytope}. 
\end{enumerate}

\section{Arbitrage-free market models}
\label{sec:market_model}

We assume a frictionless market in continuous-time, with a distant fixed finite horizon $T^*$. We allow non-zero interest and dividends\footnote{When applying our methods to other asset classes, dividends of stocks are comparable to foreign currency interest for FX, or convenience yield for commodities.}, and assume a deterministic interest rate $r_t$ and dividend yield $q_t$ at time $t$. We use $D_t(T) = \exp (-\int_t^T r_s \diff s)$ to denote the market discount factor for time $T$, and $\Gamma_t(T) = \exp (\int_t^T q_s \diff s)$ to account for reinvestment of dividends. There is a model-independent, arbitrage-free forward price, $F_t(T) = S_t / (\Gamma_t(T)D_t(T))$, for delivery of the asset at $T$, where $S_t$ is the price of the underlying asset. The primary traded securities in the market are zero coupon bonds, the stock, and a collection of forwards and European call options written on the stock. We assume no derivatives expire after $T^*$.

\subsection{Liquid options}

While it is theoretically reasonable to search for a model to produce arbitrage-free option prices for every possible $T$ and $K$ at any time $t$, doing so can be both practically unnecessary and restrictive. It is of limited use to ensure absence of arbitrage with regards to deep in or out-of-the-money options, given these are not liquidly traded, and modelling these would impose significantly more constraints when calibrating to data. Our models focus on the joint dynamics of \textit{liquid} options, rather than options of \textit{any} strikes and expiries. Liquid options are usually those that are neither too far from the money nor too close to expiry (typically a few days). In addition, the range of liquid strikes typically broadens for longer expiries, as seen in Figure \ref{fig:optionLattice}.

Over time, the range of liquid options in $(T,K)$-coordinates tends to change \textit{stochastically}. For example, a liquid at-the-money option as of today will become deep out-of-the-money  if the underlying price plunges quickly. It is empirically advantageous, therefore, to cast the liquid range into relative coordinates $(\tau, m)$, where $\tau=T-t$ is time-to-expiry and $m$ denotes moneyness. There is usually a stable range of time-to-expiries and moneynesses for which the options are actively quoted and therefore price data are most readily available. A convenient parameterisation is given by the \textit{forward log-moneyness} $m_t$, which is
\begin{equation}
    m_t = M(K; F_t(T)) := \ln\left( \frac{K}{F_t(T)} \right).
\end{equation}
The function $M(\cdot)$ is referred to as the moneyness function. In Figure \ref{fig:dynamicsKT}, we highlight the trajectories of $(\tau_t, m_t)$ for three differently struck options (with the same expiry). The option that is very far from the money, i.e.\ $K=1.3$ (green line), moved out of the liquid range (which could be approximated by the gray dotted area) and was never traded after its time-to-expiry became shorter than $0.4$ years.
\begin{figure}[!ht]
\centering
\includegraphics[scale=0.66]{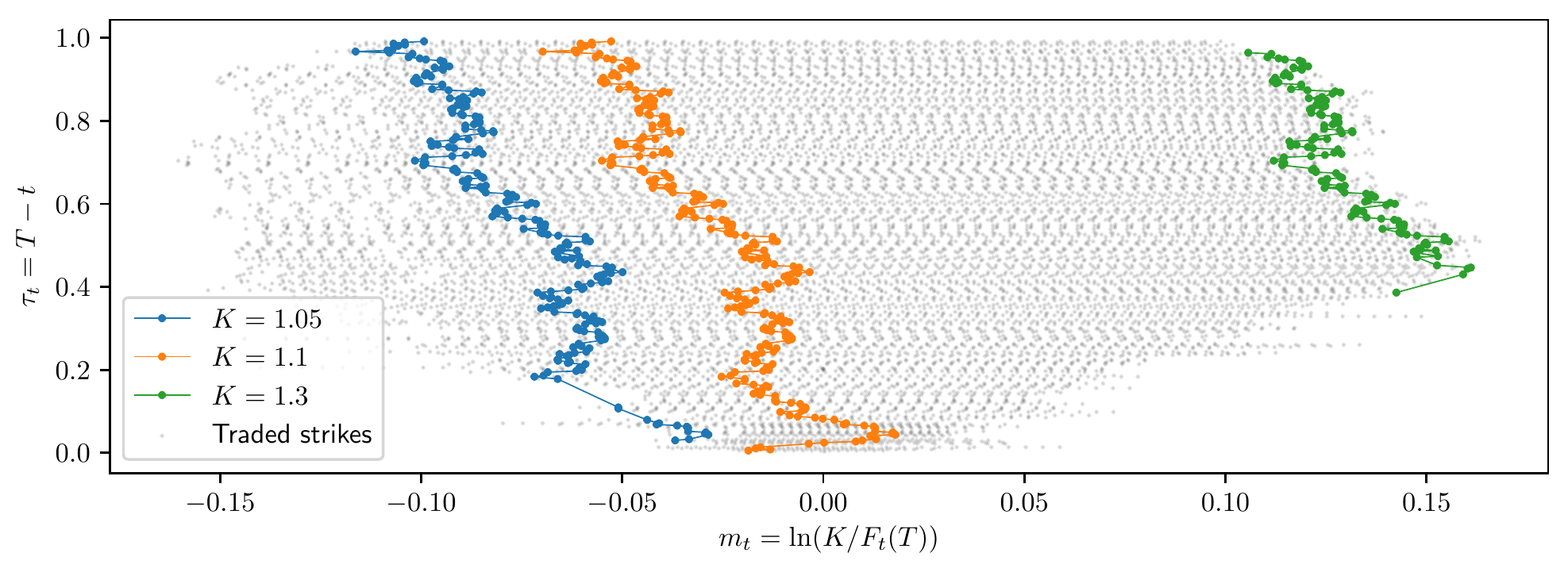}
\caption{An example of the trajectory of $(\tau_t, m_t)$ for fixed $(T, K)$'s. Grey dots indicate the traded strikes for CME EURUSD option contracts with expiry date 6th March, 2020, from their first listing day until expiry. We highlight the trajectories for three differently struck contracts.}
\label{fig:dynamicsKT}
\end{figure}

We assume the range of liquid options in the $(\tau, m)$-coordinates is fixed over time, and develop arbitrage-free models on the the liquid range. We denote by $\underline{\tau}$ and $\overline{\tau}$ the minimal and maximal liquid time-to-expiries. For fixed $\tau$, we denote by $\underline{m}(\tau)$ and $\overline{m}(\tau)$ the minimal and maximal liquid moneynesses. We then define the constant set $\mathcal{R}_\text{liq} = \{(\tau, m) \in \mathbb{R}^2:  \tau \in [\underline{\tau}, \overline{\tau}], m\in [\underline{m}(\tau), \overline{m}(\tau)] \}$ as the range of moneynesses and time-to-expiries where options are liquid. A typical market's range exhibits that $0 < \underline{\tau} < \overline{\tau} < T^*$; $\underline{m}(\tau) < 0 < \overline{m} (\tau)$ for all $\tau \in [\underline{\tau}, \overline{\tau}]$; $\underline{m}(\tau_1) \geq \underline{m}(\tau_2)$ and $\overline{m}(\tau_1) \leq \overline{m}(\tau_2)$ for any $\underline{\tau} \leq \tau_1 < \tau_2 \leq \overline{\tau}$. An example can be seen in Figure \ref{fig:liquidRange}.

\begin{figure}[!ht]
\centering
\includegraphics[scale=0.66]{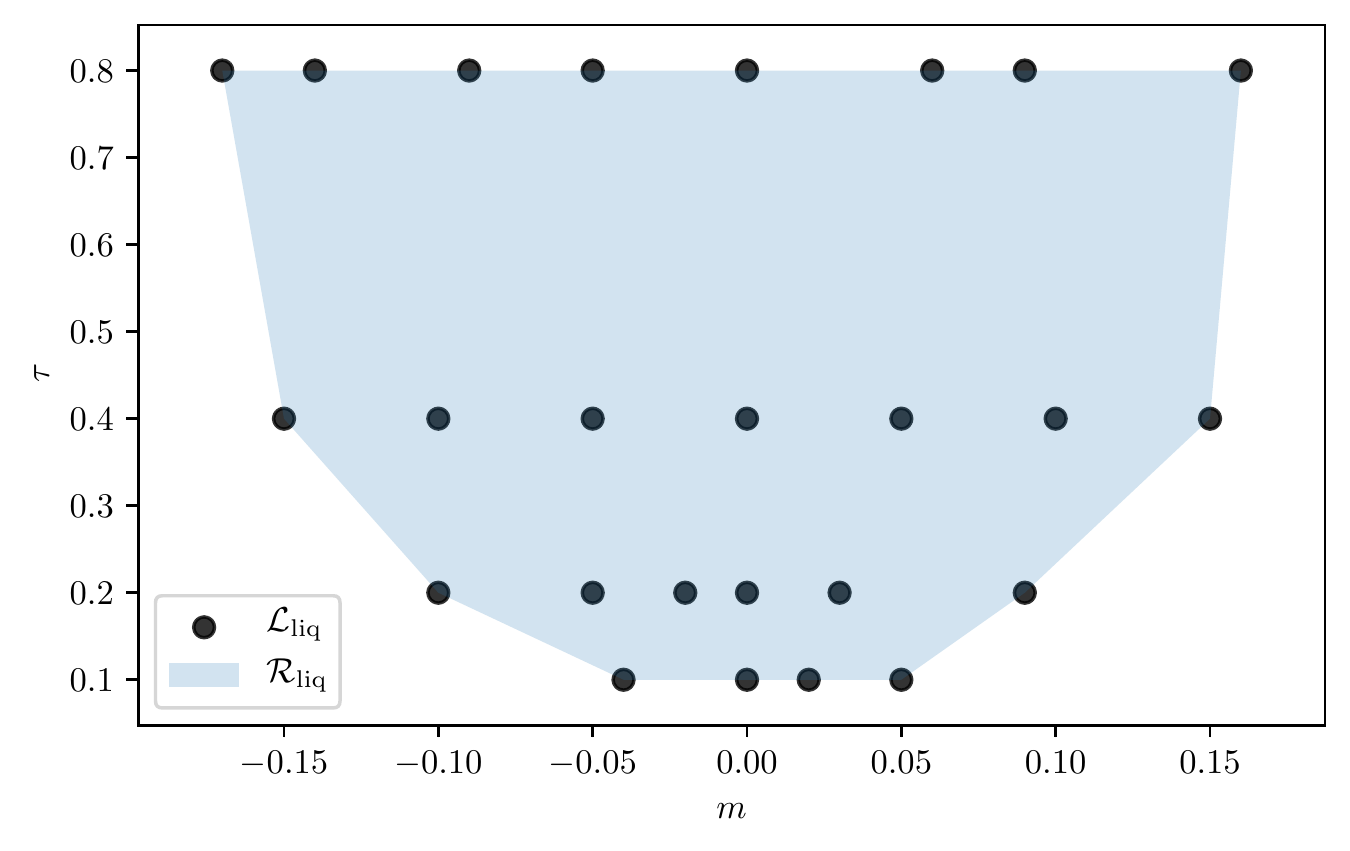}
\caption{An example of liquid range and lattice.}
\label{fig:liquidRange}
\end{figure}

Our models construct option price surfaces on $\mathcal{R}_\text{liq}$ through smooth interpolation of a finite collection of option prices, with parameters given by the set $\mathcal{L}_\text{liq} = \{ (\tau_i, m^i_j) \in \mathcal{R}_\text{liq}: \underline{m}(\tau_i) = m^i_1 < \cdots < m^i_{n_i} = \overline{m}(\tau_i), \underline{\tau} = \tau_1 < \cdots < \tau_p = \overline{\tau}, 1 \leq j \leq n_i, 1 \leq i \leq p \}$. We call $\mathcal{L}_\text{liq}$ an \textit{option lattice}. There are two reasons for us to have this model design. First, given the static arbitrage requirements concern monotonicity and convexity of the price surface, our models are free of static arbitrage on $\mathcal{R}_\text{liq}$ if options on $\mathcal{L}_\text{liq}$ are statically arbitrage-free and these prices are extended to $\mathcal{R}_\text{liq}$ using a shape-preserving interpolation method (in particular, a method which preserves monotonicity in time to maturity and monotonicity and convexity in moneyness). Second, options with very close strike and expiry will have strongly dependent price processes, and so contain little new statistical information. We use options on $\mathcal{L}_\text{liq}$ to represent local dynamics of different sub-areas of the surface on range $\mathcal{R}_\text{liq}$; this is analogous to a common practice for quoting OTC data. For example, in the OTC FX market, traders quote options on the lattice [1-day, 1-week, 1-month, ...] $\times$ [10-delta, 15-delta, ...] and smoothly interpolate them for pricing a specific contract.

\subsection{Factor representation of call option prices}

Even after reducing our model to the finite set $\mathcal{L}_\text{liq}$ of liquid options, one usually finds that these prices contain significant redundancies. For this reason, it is efficient to express them through a low-dimensional factor representation, for which a reliable model can be built. These factors will be built from data, but are simple enough to be intuitively understood and plotted, as they have a linear relationship with option prices.

We will build a model based on the randomness provided by $d+1$ standard independent Brownian motions $W_{0}, W_{1}, \dots, W_{d}$ under the objective measure $\mathbb{P}$. Writing $W =[W_{1} ~\cdots ~W_{d}]^\top$ and $\overline{W} = [W_{0} ~W_{1} ~\cdots ~W_{d}]^\top$, we describe the information available at time $t$ by the $\sigma$-algebra $\mathscr{F}_t$, which is the completion of $\cap_{s>t} \mathscr{F}_s^0$ with $\mathbb{P}$-null sets, where $\mathscr{F}_t^0 := \Sigma(\{\overline{W}_s: 0 \leq s \leq t\}) \vee \mathscr{F}_0$ (here we use $\Sigma(\cdot)$ to denote the sigma-algebra generator). Taking $\Omega$ to be the space of paths of $\overline W$, we have a filtered probability space $(\Omega, \{\mathscr{F}_t\}_{0\leq t\leq T^*}, \mathbb{P})$.

Suppose that the market is described by the underlying price $S \in \mathbb{R}$ and some latent \textit{market factor} $\xi \in \mathbb{R}^d$. We model the joint dynamics of the underlying price and the market factor by a $(d+1)$-dimensional diffusion process solving the time-homogeneous SDE:
\begin{equation}
\begin{dcases}
    \diff S_t = \left( \alpha(\tilde{\xi}_t)-q_t \right) S_t \diff t + \gamma(\tilde{\xi}_t) S_t \diff W_{0,t}, \quad & S_0 = s_0 \in \mathbb{R};\\
    \diff \xi_t = \mu (\tilde{\xi}_t) \diff t + \sigma (\tilde{\xi}_t) \diff W_t,  \quad & \xi_0 = \zeta_0 \in \mathbb{R}^d,
\end{dcases}
\label{eq:market_model}
\end{equation}
where $\tilde{\xi}_t = (S_t, \xi_t)$. We use $\alpha_t$, $\gamma_t$, $\mu_t$ and $\sigma_t$ to denote $\alpha(\tilde{\xi}_t)$, $\gamma(\tilde{\xi}_t)$, $\mu(\tilde{\xi}_t)$ and $\sigma(\tilde{\xi}_t)$, respectively. We denote by $L^p_\text{loc}(\mathbb{R}^d)$ the space of all $\mathbb{R}^d$-valued, progressively measurable, and locally $p$-integrable (in $t$, $\mathbb{P}$-a.s.) processes on $[0,T^*]$, and assume that $\alpha S \in L^1_\text{loc}(\mathbb{R})$, $\mu \in L^1_\text{loc}(\mathbb{R}^{d})$, $\gamma S \in L^2_\text{loc}(\mathbb{R})$ and $\sigma \in L^2_\text{loc}(\mathbb{R}^{d})$.

We now associate the latent market factor with call option prices. We denote by $C_t(T,K)$ the time $t$ market price of an European call option with expiry $T$ and strike $K$. We call the map $(T,K) \mapsto C_t(T,K)$ a \textit{call price surface}. After casting the surface into relative coordinates $(\tau, m)$, the surface is parametrised over the range $\{(\tau, m): \tau \in (0, T^*-t], m \in \mathbb{R} \}$. For some fixed $T$ and $K$, we define the following transformations of call option prices for all $t \in [0, T^*)$:
\begin{equation}
    \hat{c}_t(T,K) = \frac{C_t(T,K)}{D_t(T)F_t(T)}, \quad
    \tilde{c}_t(T-t, M(K; F_t(T))) = \hat{c}_t(T,K),
\label{eq:call_price_transformation}
\end{equation}
where $\hat{c}_t$ and $\tilde{c}_t$ are the \textit{normalised} prices but with different arguments. The parametric surface we shall model is the normalised call price surface $(\tau, m) \mapsto \tilde{c}_t(\tau, m)$ rather than the original call price surface $(T,K) \mapsto C_t(T,K)$, though $C(T,K)$ can be implied from a family of $\tilde{c}(\tau, m)$. A key advantage of $\tilde{c}(\tau, m)$ over $C(T,K)$ is that it is more reasonable to suppose $t\mapsto\tilde{c}_t(\tau, m)$ to be a time-homogeneous process, whereas $C_t(T,K) \rightarrow (S_T-K)^+$ as $t \uparrow T$, making a statistical model more difficult.

For some fixed $\tau$ and $m$, we assume that the normalised option price $\tilde{c}_t(\tau, m)$ is determined by the latent factor $\xi$ through a time-independent linear transformation
\begin{equation}
    \tilde{c}_t(\tau,m) = g(t, \tau, m, \xi_t) := G_0(\tau, m) + \sum_{i=1}^d G_i(\tau, m) \xi_{it},
\label{eq:factr_rep_linear}
\end{equation}
where $G_i \in C^{1,2}(\mathcal{R}_\text{liq})$ is called a \textit{price basis function}, for $i=0,\dots,d$. The smoothness requirements on $G_i(\cdot)$ are necessary for deriving the dynamics of $C(T,K)$, for a given $T$ and $K$, from the dynamics of a family of $\tilde{c}(\tau, m)$. Though this representation does not show explicit dependence between the normalised call price $\tilde{c}(\tau, m)$ and the underlying price $S$, it does relate the two variables through the moneyness $m$.

Combining (\ref{eq:market_model}) (\ref{eq:call_price_transformation}) and (\ref{eq:factr_rep_linear}), we have a model for the joint dynamics of $S$ and $C(T,K)$ for all relevant strikes and expiries. We observe that the drift and diffusion coefficients of $\xi$ are functions of $S$. In addition, given the potential nonlinearity in (\ref{eq:market_model}), we argue that a linear transformation in (\ref{eq:factr_rep_linear}) is not a restrictive assumption, as a smooth nonlinear relation with respect to one factor can be approximated by a linear combination of polynomials of that factor.

\begin{remark}
The use of a Markovian model for $\xi$ may seem a little restrictive. In theory, any observed data could be included in the model for $\xi$. In practice, we shall build factors $\xi$ to minimize reconstruction errors in \eqref{eq:factr_rep_linear}. As this involves all liquid option prices, it determines the law of $S$ under $\mathbb{Q}$ in the future (through the Breeden--Litzenberger formula \cite{breeden1978}); in other words, the joint process given by $S$ and all call option prices is guaranteed to be Markovian under the pricing measure. This is closely related to the endogenous completeness of market models with options, see Davis and Ob{\l}{\'o}j \cite{DavisObloj2008}, Schwarz \cite{Schwarz2017} and Wissel \cite{Wissel2008}.
\end{remark}

\subsubsection*{Convergence to terminal payoff}

To ensure prices of call options converge to correct payoffs at expiry, independently of the behavior of $\xi$, by examining \eqref{eq:factr_rep_linear} it is clearly sufficent (and generally necessary) that
\begin{equation}
    G_0(0, m) = (1 - e^m)^+, \text{~and~} G_i(0, m) = 0, ~\forall i=1,\dots,d.
    \label{eq:initial_G}
\end{equation}
Under this assumption, $\tilde{c}_t(0,m) = (1 - e^m)^+$ and, for any $T$ and $K$,
\begin{equation*}
\begin{aligned}
    \lim_{t \uparrow T} C_t(T,K) = & \lim_{t \uparrow T} D_t(T) F_t(T) \tilde{c}_t \left( T-t, \ln \left( \frac{K}{F_t(T)} \right) \right) \\ 
     = & S_T \tilde{c}_T \left(0, \ln \left( \frac{K}{S_T} \right) \right) = \left( S_T -K \right)^+.
\end{aligned}
\end{equation*}

\subsubsection*{Connections with the Black--Scholes model}
We can also see that the Black--Scholes model is within the class of factor models we consider (in particular, it is a trivial model where no additional factors $\xi$ are needed). The Black--Scholes model assumes deterministic drift and constant volatility for the underlying asset price, i.e.\ $\gamma_t \equiv \gamma$. In fact, with 
$d_{1,2} (\tau,m) = -\frac{m}{\gamma \sqrt{\tau}} \pm \frac{1}{2}\gamma \sqrt{\tau}$,
\begin{equation}
    \tilde{c}_t(\tau,m) = G(\tau,m) := \Phi \left( d_1 (\tau,m) \right) - e^m \Phi \left( d_2 (\tau,m) \right).
\label{eq:G_BS}
\end{equation}

More generally, we can see that $\tilde{c}(\tau, m)$ has a deterministic bijective relation with the Black--Scholes implied volatility, for arbitrary but fixed $\tau$ and $m$. With implied volatility $\sigma^\text{imp}_t(\tau, m)$, the transformation (\ref{eq:call_price_transformation}) indicates that
\begin{equation}
\begin{aligned}
    \tilde{c}_t(\tau,m) & = \frac{C_t(t+\tau, e^m F_t(t+\tau))}{D_t(t+\tau) F_t(t+\tau)} \\
    & = \Phi \left(-\frac{m}{\sigma^\text{imp}_t \sqrt{\tau}} + \frac{1}{2}\sigma^\text{imp}_t \sqrt{\tau} \right) - e^m \Phi \left(-\frac{m}{\sigma^\text{imp}_t \sqrt{\tau}} - \frac{1}{2}\sigma^\text{imp}_t \sqrt{\tau} \right),
\end{aligned}
\end{equation}
where $\Phi(\cdot)$ is the cumulative density of the standard normal distribution. In addition,
\begin{equation*}
	\frac{\diff \tilde{c}_t}{\diff \sigma^\text{imp}_t} (\sigma^\text{imp}_t; \tau,m)  = \sqrt{\tau} \phi \left(-\frac{m}{\sigma^\text{imp}_t \sqrt{\tau}} + \frac{1}{2}\sigma^\text{imp}_t \sqrt{\tau} \right) > 0,
\end{equation*}
where $\phi(\cdot)$ is the density of the standard normal distribution. Hence, there is a time-independent and deterministic bijective relation between $\tilde{c}_t$ and $\sigma^\text{imp}_t$ for arbitrary but fixed $\tau$ and $m$. Consequently, by modelling the normalised call prices, we are essentially modelling the implied volatility surface as a function of $\tau$ and $m$. This aligns with the conventional ``sticky delta'' rule (see Daglish, Hull and Suo \cite{Daglish2007}), where the process for implied volatility depends on $T$, $K$, $F$ and $t$ only through its dependence on $\tau$ and $m$.

\subsection{Absence of arbitrage}
\label{sec:absence_dynamic_arbitrage}

Arbitrage refers to a costless trading strategy that has zero risk and a positive probability (under $\mathbb{P}$) of profit. The First Fundamental Theorem of Asset Pricing (FFTAP) establishes an equivalence relation between no-arbitrage and the existence of an equivalent martingale measure (EMM). After the landmark work of Harrison and Kreps \cite{harrison1979}, there are various versions of the FFTAP and extensions of the no-arbitrage concept (e.g. no free lunch, Kreps \cite{Kreps1981}, no free lunch with vanishing risk, Delbaen and Schachermayer \cite{Delbaen1994}, no unbounded profit with bounded risk, Karatzas and Kardaras \cite{Karatzas2007}). In this article, we work with a simplified version of FFTAP as follows. Given the model, there is no arbitrage if and only if $\exists \mathbb{Q} \sim \mathbb{P}$, such that the discounted price processes for all tradable assets are martingales. These tradable assets include the stock and all forwards and options written on it. 

If the model is arbitrage-free, then, for \textit{any} fixed $T$ and $K$, the discounted option price $D_0(t)C_t(T,K)$ should be a $\mathbb{Q}$-martingale.
We derive an HJM-type drift restriction for this no-arbitrage condition to hold in Appendix \ref{sec:dynamic_arbitrage},
by analyzing
the existence of a \emph{market price of risk} process $\psi_t = [\psi_{1,t} ~\cdots ~\psi_{d,t}]^\top$. Specifically, suppose we model $N = |\mathcal{L}_\text{liq}|$ options,
with prices given in $(\tau,m)$-coordinates.
Then we look for a solution $\psi_t$ to
\begin{equation}
    \begin{cases}
    &\mathbf{G}^\top \sigma_t \psi_t=\mathbf{G}^\top \mu_t - z_t, \quad \text{~where~} \\
    & z_t = \left( -\pderiv{}{\tau} - \frac{1}{2} \gamma_t^2 \pderiv{}{m} + \frac{1}{2} \gamma_t^2 \pderiv[2]{}{m} \right) \begin{bmatrix}
    \tilde{c}_t(\tau_1, m_1) \\
    \vdots \\
    \tilde{c}_t(\tau_N, m_N)
    \end{bmatrix}.
    \end{cases}
    \label{eq:mpor_linear_system}
\end{equation}
Here, we define $\mathbf{G}$ as the $d\times N$ matrix with $i$-th row $\mathbf{G}_i$, where $\mathbf{G}_i = (G_{ij})_{j=1}^N \in \mathbb{R}^N$ for $i=0,\dots, d$, and $G_{ij} = G_i(\tau_j, m_j)$. We call $\mathbf{G}^\top$ a \textit{price basis} and each of its columns a \textit{price basis vector}. 

If $z_t$ lives in the column space of the price basis $\mathbf{G}^\top$, and given that $\sigma_t$ is invertible, then (\ref{eq:mpor_linear_system}) always admits a solution for $\psi_t$, regardless of what the drift and diffusion functions are. In other words, the HJM conditions are principally related to the choice of price basis $\mathbf{G}$, rather than to the specific choice of calibrated model (i.e.\ to $\mu, \sigma$).

This encourages us to construct factors from principal components of $z_t$, as will be explored in detail in Algorithm \ref{alg:decode_factor}. Unless $z_t \in \mathbb{R}^N$ lives in an $\mathbb{R}^d$ submanifold, equality (\ref{eq:mpor_linear_system}) always yields an over-determined linear system when the number of options $N$ is greater than the number of latent market factors $d$, which is usually the case in practice. Given a price basis $\mathbf{G}$, we will then use (\ref{eq:mpor_linear_system}) to encourage choices of $\mu$ and $\sigma$ that do not suggest unreasonable values for the market price of risk $\psi$, evaluated on our liquid lattice $\mathcal{L}_\text{liq}$.

\begin{remark}
If we modelled \textit{all} $\tau$ and $m$ and enforced the drift restrictions at \textit{all} times, then this would be enough to guarantee no-arbitrage. However, our models will be built on finitely many options on a fixed liquid lattice in the $(\tau, m)$-coordinates, as specified by $\mathcal{L}_\text{liq}$. These options do not correspond to fixed contracts (i.e.\ fixed $T$ and $K$) over time. Therefore, the HJM-type drift restrictions cannot be implemented in practice, and further consideration of no-arbitrage conditions is needed.
\end{remark}

\subsection{Constrained state space for static arbitrage}
\label{sec:static_arbitrage}

To ensure that our model does not generate static/model-free arbitrage, we introduce static arbitrage constraints on the latent market factors.

\subsubsection*{Static arbitrage-free relations}

We study the static arbitrage-free relations at time $t=0$ without loss of generality. For some $T$ and $K$, we define $M_T = S_T / F_0(T)$, $k = K / F_0(T)$, and $\breve{c} (\tau, k) = \tilde{c}_0 (\tau, \ln k)$. We assume that $\breve{c}(T,k) \in C^{1,2}(\mathbb{R}_{\geq 0}^2)$.

Carr, G\'{e}man, Madan and Yor \cite{CGMY2003} and Carr and Madan \cite{Carr2005} establish the equivalence of no static arbitrage and the existence of a Markov Martingale Measure (MMM) $\mathbb{Q}$, under which
\begin{equation}
\breve{c} \left( T, k \right) = \mathbb{E}^\mathbb{Q} \left[ \left. \left( M_T - k \right)^+ \right| \mathscr{F}_0 \right].
\label{eq:call_function_arbfree}
\end{equation}
 For an arbitrary but fixed $T$, using Breeden and Litzenberger's \cite{breeden1978} analysis, the marginal measure $\mathbb{Q}_T = \mathbb{Q}(\cdot|\mathscr{F}_T)$ exists if $-1 \leq \partial \breve{c} / \partial k \leq 0$ and $\partial^2 \breve{c} / \partial k^2 \geq 0$. Furthermore, if a family of marginal measures $\{\mathbb{Q}_T\}_{T \in [0, T^*)}$ on $(\mathbb{R}, \mathcal{B}(\mathbb{R}))$ exist, which  are non-decreasing in convex order\footnote{The convex order is given by $\mathbb{Q}_{T_1} \geq_\text{cx} \mathbb{Q}_{T_2}$ if  $\int_\mathbb{R} \phi \diff \mathbb{Q}_{T_1} \geq \int_\mathbb{R} \phi \diff \mathbb{Q}_{T_2}$ for each convex function $\phi:\mathbb{R} \rightarrow \mathbb{R}$. We say $\{\mathbb{Q}_T\}_{T \in [0, T^*)}$ is non-decreasing in convex order if $\mathbb{Q}_{T_1} \geq_\text{cx} \mathbb{Q}_{T_2}$ whenever $T_1 \ge T_2$.} (NCDO), then there exists a Markov martingale measure with these marginals, by Kellerer's theorem \cite{kellerer1972}.  The convex order can be characterised in terms of call price functions:
\begin{equation*}
\mathbb{Q}_{T_1} \geq_\text{cx} \mathbb{Q}_{T_2} \Longleftrightarrow
\begin{dcases}
& \mathbb{Q}_{T_i} \text{ and } \mathbb{Q}_{T_j} \text{ have equal means}; \\
& \int_\mathbb{R} (x-k)^+ ~\text{d}\mathbb{Q}_{T_1} \geq \int_\mathbb{R} (x-k)^+ ~\text{d}\mathbb{Q}_{T_2} \quad \forall x\in \mathbb{R}.
\end{dcases}%
\end{equation*}
Hence, $\{\mathbb{Q}_T\}_{T \in [0, T^*)}$ is NDCO if $\partial \breve{c} / \partial T \geq 0$. 

Therefore, if we define the set of surface functions $s(x,y): D \rightarrow \mathbb{R} \cap [0,1]$, where $D \subseteq \mathbb{R}_{\geq 0}^2$ are compact sets, by
\begin{equation}
    \mathcal{S}(D) = \left\{s(x, y) \in C^{1,2}(D): 0 \leq s \leq 1, \pderiv{s}{x} \geq 0, -1 \leq \pderiv{s}{y} \leq 0, \pderiv[2]{s}{y} \geq 0 \right\},
\label{eq:surface_no_arbitrage}
\end{equation}
then no arbitrage can be constructed on the surface $(T,k) \mapsto \breve{c}(T,k)$ if $\breve{c} \in \mathcal{S}(\mathbb{R}_{\geq 0}^2)$.

Our models aim to ensure absence of static arbitrage for the liquid range $\mathcal{R}_\text{liq}$ defined on the $(\tau, m)$-coordinates, or equivalently the liquid range $\breve{\mathcal{R}}_\text{liq}$ defined on the $(\tau, e^m)$-coordinate, where $\breve{\mathcal{R}}_\text{liq} = \{(\tau, k): (\tau, \ln k) \in \mathcal{R}_\text{liq} \}$. Accordingly, the corresponding discrete lattice $\mathcal{L}_\text{liq}$ in $(\tau, e^m)$-coordinates is denoted by $\breve{\mathcal{L}}_\text{liq} = \{(\tau_i, k^i_j): (\tau_i, \ln k^i_j) \in \mathcal{L}_\text{liq} \}$. We will construct models to ensure that $\breve{c} \in \mathcal{S}(\breve{\mathcal{R}}_\text{liq})$. We achieve this in two steps:
\begin{enumerate}[leftmargin=*, label=(\roman*)]
\setlength\itemsep{1pt}
\item Ensure that no static arbitrage can be constructed from the finitely many liquid options on $\breve{\mathcal{L}}_\text{liq}$, i.e.\
\begin{equation}
\exists s \in \mathcal{S}(\breve{\mathcal{R}}_\text{liq}), \text{ s.t. } \forall (\tau, k) \in \breve{\mathcal{L}}_\text{liq}, ~s(\tau, k) = \breve{c}(\tau, k). 
\label{eq:call_shape_constraints}
\end{equation}

\item Given statically arbitrage-free prices on $\breve{\mathcal{L}}_\text{liq}$, apply some shape-preserving smooth interpolation method which ensures monotonicity (on $\tau$ and $k$ axes) and convexity (on $k$ axis) of the surface. The interpolant function then belongs to $\mathcal{S}(\breve{\mathcal{R}}_\text{liq})$. For example, Fengler and Hin \cite{Fengler2015} use bivariate tensor-product B-spines to estimate a statically arbitrage-free call price surface from observed prices of options on an arbitrarily shaped lattice.
\end{enumerate}

\subsubsection*{Static arbitrage constraints on the liquid lattice}

For notational simplicity, we reindex our liquid lattice $\mathcal{L}_\text{liq}=\{(\tau_i, m^i_j)\}_{i,j}$ to a single sequence $\{(\tau_j, m_j)\}_{j=1}^N$, and denote $c_j = \breve{c}(\tau_j, k_j) = \tilde{c} (\tau_j, m_j = \ln k_j)$.

Cousot \cite{cousot2007} finds practically verifiable conditions for (\ref{eq:call_shape_constraints}), which we have simplified in \cite{Cohen2020}. We can write these constraints in the form $\mathbf{A} \mathbf{c} \geq \widehat{\mathbf{b}}$, where $\mathbf{c} = [c_1 ~\cdots ~c_{N}]^\top \in \mathbb{R}^N$ is the vector of normalised call prices, $\mathbf{A} = (A_{ij}) \in \mathbb{R}^{R \times N}$ and $\widehat{\mathbf{b}} = (\hat{b}_j) \in \mathbb{R}^{R}$ are a known constant matrix and vector. Here $R$ is the number of static arbitrage constraints. The construction of $\mathbf{A}$ and $\widehat{\mathbf{b}}$ depends on the choice of $\mathcal{L}_\text{liq}$, and is given in \cite{Cohen2020}. We assume for simplicity that $\mathcal{L}_\text{liq}$ is fixed over time.

Using our factor representation \eqref{eq:factr_rep_linear} of option prices, we have $\mathbf{c}_t = \mathbf{G}_0 + \mathbf{G}^\top \xi_t$, with $\mathbf{G}$ as in \eqref{eq:mpor_linear_system} and $\mathbf{G}_0$ the vector with entries $G_0(\tau_j, m_j)$. Consequently, the market model allows no static arbitrage among options on the liquid lattice $\mathcal{L}_\text{liq}$ if $\xi_t$ satisfies
\begin{equation}
\mathbf{A} \mathbf{G}^\top \xi_t \geq \mathbf{b} := \widehat{\mathbf{b}} - \mathbf{A G}_0^\top, \text{ for all } t \in [0, T^* - \overline{\tau}].
\label{eq:factor_static_arbitrage}
\end{equation}

The dynamics of the diffusion process $\xi$ are characterised by its drift and diffusion coefficients, which are assumed to be functions of $\tilde{\xi} = (S, \xi)$. We will address how to enforce this constraint on $\xi_t$ when training a model in Section \ref{sec:neural_sde_polytope}.

\subsection{Model inference}

Given an observable normalised option price surface, we estimate the model following two steps:
\begin{enumerate}[leftmargin=*, label=(\roman*)]
\setlength\itemsep{1pt}
    \item we decode the factor representation $\tilde{c}_t(\tau, m) = G_0(\tau, m) + \sum_{i=1}^d G_i(\tau, m) \xi_{it}$ by calibrating the price basis functions $\{G_i\}_{i=0}^d$; 
    \item we infer the drift and diffusion functions of the decoded factor process $\xi$.
\end{enumerate}

While these models are written in a continuous-time setting, inference is made from discretely observed data. Let us assume that we observe a discrete time series of normalised call option prices $\mathbf{c}_t = [\tilde{c}_t(\tau_1, m_1) ~\cdots ~\tilde{c}_t(\tau_N, m_N)]^\top$ at times $0 = t_0 < t_1 < \cdots < t_L = T$, where $\{(\tau_j, m_j)\}_{1 \leq j \leq N} = \mathcal{L}_\text{liq}$, assuming $d \ll N \ll L$. 

\subsubsection*{Decoding the factor representation}
\label{sec:decode_factor_rep}

Projecting the data from $\tilde{c}$ to $\xi$ reduces dimensions by a linear transformation. The matrix representation of this linear projection gives an estimate of the price basis vectors $\{\mathbf{G}_i\}_{i=0,\dots,d}$. Define matrices $\mathbf{C} = (C_{lj}) \in \mathbb{R}^{(L+1) \times N}$ and $\bm\Xi = (\Xi_{li}) \in \mathbb{R}^{(L+1) \times d}$, where $C_{lj} = \tilde{c}_{t_{l-1}}(\tau_j, m_j)$ and $\Xi_{li} = \xi_{i, t_{l-1}}$, respectively. We represent the observed data $\mathbf{C}$, with residuals $\bm\Upsilon \in \mathbb{R}^{(L+1) \times N}$, by
\begin{equation}
	\mathbf{C} = \mathbf{1}_{L+1} \otimes \mathbf{G}_0 + \bm\Xi \mathbf{G} + \bm\Upsilon,
\label{eq:represent_C}
\end{equation}
where $\mathbf{1}_{L+1}$ is an $(L+1)$-vector of ones, and $\otimes$ denotes the outer product of two vectors. We call $\mathbf{C} - \bm\Upsilon$ the \textit{reconstructed prices}, and the Frobenius norm $||\bm\Upsilon||_F$ the \textit{reconstruction error}. For some fixed $d < N$, an ideal factor representation should achieve the following three objectives:
\begin{enumerate}[leftmargin=*, label=(\roman*)]
\setlength\itemsep{1pt}
\item \textbf{Statistical accuracy}: The reconstruction error $||\bm\Upsilon||_F$ should be as small as possible. Among linear dimension reduction techniques, PCA  applied to the observed prices $\mathbf{C}$ gives the optimal reconstruction error. 
\item \textbf{Minimal dynamic arbitrage}: There should be as few violations of the HJM-type drift restrictions (\ref{eq:mpor_linear_system}) as possible.  For this purpose, we estimate $z_t$ from our observed prices $\mathbf{C}$, then construct price basis from principal components of $z_t$, to minimize the discrepancy between $z_t$ and the space spanned by the factors.
\item \textbf{No static arbitrage}:  The reconstructed prices should violate the static arbitrage constraints (\ref{eq:factor_static_arbitrage})  as seldom as possible.
\end{enumerate}

Suppose we construct $d^\text{st}$, $d^\text{da}$ and $d^\text{sa}$ factors for the above-mentioned objectives, respectively, where $d^\text{st} + d^\text{da} + d^\text{sa} = d$. Algorithm \ref{alg:decode_factor} gives the details of the factor decoding process.

\begin{algorithm}[!ht]
\footnotesize
\SetAlgoLined
\SetKwInOut{Input}{Input}\SetKwInOut{Output}{Output}
\Input{Matrix of price data $\mathbf{C} \in \mathbb{R}^{(L+1) \times N}$; number of factors $d$; number of statistical accuracy factors $d^\text{st}$; number of dynamic arbitrage factors $d^\text{da}$; number of static arbitrage factors $d^\text{sa}$; data for $z_t$ in \eqref{eq:mpor_linear_system}, denoted as $\mathbf{Z} \in \mathbb{R}^{(L+1) \times N}$; constraint matrices $\mathbf{A}$, $\mathbf{b}$ from (\ref{eq:factor_static_arbitrage}).}
\Output{Factor data $\bm\Xi \in \mathbb{R}^{(L+1) \times d}$; price basis vectors $\mathbf{G}_i \in \mathbb{R}^N$ for $i=0,\dots, d$.}
\BlankLine

Let $G_{0j} = \frac{1}{L+1} \sum_{l=1}^{L+1} C_{lj}$ for $j=1,\dots,N$ \;

\tcc{Construct factors to minimize dynamic arbitrage.}

Compute the residual data $\mathbf{R}_0 = \mathbf{C} - \mathbf{1}_{L+1} \otimes \mathbf{G}_0$\;

Compute the principal component decomposition of $\mathbf{Z}$ and assign, $\forall i=1,\dots, d^\text{da}$ and $\forall l=1,\dots,L+1$,
\begin{equation*}
    \mathbf{G}_i = i\text{-th principal component of } \mathbf{Z}, ~\Xi_{li} = \langle (\mathbf{R}_0)_l, \mathbf{G}_i\rangle,
\end{equation*}
where $(\mathbf{R}_0)_l$ is the $l$-th row of $\mathbf{R}_0$\;

\tcc{Construct factors to maximize statistical accuracy.}

Compute the residual data $\mathbf{R}_{d^\text{da}} = \mathbf{R}_0 - \sum_{k=1}^{d^\text{da}} \bm\Xi_k \otimes \mathbf{G}_k$, where $\bm\Xi_k = [\Xi_{1k} \cdots \Xi_{L+1,k}]$\;

Compute the principal component decomposition of $\mathbf{R}_{d^\text{da}}$ and assign, $\forall i=d^\text{da} + 1,\dots, d^\text{da} + d^\text{st}$ and $\forall l=1,\dots,L+1$, 
\begin{equation*}
    \mathbf{G}_i = (i-d^\text{da})\text{-th principal component of } \mathbf{R}_{d^\text{da}}, ~\Xi_{li} = \langle (\mathbf{R}_{d^\text{da}})_l, \mathbf{G}_i\rangle,
\end{equation*}
where $(\mathbf{R}_{d^\text{da}})_l$ is the $l$-th row of $\mathbf{R}_{d^\text{da}}$\;

\tcc{Construct factors to minimize static arbitrage.}
\ForEach{$i = d^\text{da} + d^\text{st}, \dots, d-1$}{
	Compute the residual data $\mathbf{R}_{i} = \mathbf{R}_0 - \sum_{k=1}^{i} \bm\Xi_k \otimes \mathbf{G}_k$, where $\bm\Xi_k = [\Xi_{1k} \cdots \Xi_{L+1,k}]$\;
	
	Compute the covariance of the residual data  $\mathbf{\Sigma}_i = \mathbf{R}_i^\top \mathbf{R}_i$ \;
	
	Eigen-decompose $\mathbf{\Sigma}_i = \mathbf{Q}_i \bm\Lambda_i \mathbf{Q}_i^{-1}$ (where $\mathbf{Q}_i$ is a unitary matrix) \;
	
	For some $\mathbf{w} \in \mathbb{R}^{d}$ such that $\|\mathbf{w}\|_2=1$, define
	\begin{equation*}
	    \mathbf{q}_i(\mathbf{w}) = \mathbf{Q}_i \mathbf{w}, ~ \mathbf{s}_i(\mathbf{w}) = \mathbf{R}_{i} \mathbf{q}_i(\mathbf{w}).
	\end{equation*}
	Then solve the following minimization problem:
	\begin{equation}
	    \mathbf{w}^* = \argmin_{\|\mathbf{w}\|_2=1}\Big( - \sum_{l=1}^{L+1} \mathbbm{1}_{\left\{ \langle \mathbf{A} \mathbf{q}_i(\mathbf{w}), \mathbf{s}_i(\mathbf{w})\rangle \geq \mathbf{b} - \mathbf{A} \left(\sum_{k=1}^{i-1} \Xi_{lk} \mathbf{G}_k \right)
	    \right\}} + \lambda \left\| \mathbf{R}_i - \mathbf{s}_i(\mathbf{w}) \otimes \mathbf{q}_i (\mathbf{w}) \right\|_2\Big),
	    \label{eq:factor_opt}
	\end{equation}
	where we require $\lambda < \min \left(1, \frac{1}{\|\mathbf{R}_i \|}\right)$ so that the penalty term is strictly less than 1\;
	
	Let $\mathbf{G}_{i+1} = \mathbf{q}_i(\mathbf{w}^*)$ and $\mathbf{\Xi}_{i+1} = \mathbf{s}_i (\mathbf{w}^*)$ \;
	}
\caption{Decoding the factor representation}
\label{alg:decode_factor}
\end{algorithm}

\begin{remark}
We discuss here some practical aspects of Algorithm \ref{alg:decode_factor}:
\begin{enumerate}[leftmargin=*, label=(\roman*)]
\setlength\itemsep{1pt}
\item The order of fulfilling the three objectives could be altered. However, if the factor construction for no static arbitrage was not placed in the last step, it is possible that any factors defined afterwards could introduce more static arbitrage.
\item The objective function (\ref{eq:factor_opt}) calculates the number of arbitrage inequalities which are violated in the reconstructed data, with an additional penalty term concerning the reconstruction errors. The penalty term plays a role in ensuring uniqueness of the preferred weights $\mathbf{w}$ in the case that multiple weights produce the same number of arbitrage-free samples.
\item The decoding algorithm depends on the choice of the parameters $d^\text{st}$, $d^\text{da}$ and $d^\text{sa}$, which should be determined through some preliminary analysis of the given data. We show an example in the numerical experiments in Section \ref{sec:numerics}.
\item For numerical convenience, it is useful to ensure that the factors constructed are orthonormal (i.e.\ $\mathbf{G}$ is a unitary matrix). This can be done through a simple Gram--Schmidt process once an initial set of factors has been constructed.
\end{enumerate}
\end{remark}

\subsubsection*{Estimating the constrained diffusion process}

Our goal is to estimate the functions $\alpha: \mathbb{R}^{d+1} \rightarrow \mathbb{R}$, $\mu: \mathbb{R}^{d+1} \mapsto \mathbb{R}^{d}$, $\gamma: \mathbb{R}^{d+1} \rightarrow \mathbb{R}$, and $\sigma: \mathbb{R}^{d+1} \mapsto \mathbb{R}^{d \times d}$ of the diffusion process $\tilde{\xi} = (S, \xi)$, given observations of the solution of the SDE (under the objective measure $\mathbb{P}$)
\begin{subequations}
\begin{align}
    \label{eq:factor_S}
    \diff S_t = & \left( \alpha(\tilde{\xi}_t)-q_t \right) S_t \diff t + \gamma(\tilde{\xi}_t) S_t \diff W_{0,t}, \\
    \label{eq:factor_diffusion} \diff \xi_t = & \mu (\tilde{\xi}_t) \diff t + \sigma (\tilde{\xi}_t) \diff W_t.
\end{align}
\label{eq:diffusion_estimation_overall}%
\end{subequations}
To ensure reconstructed prices do not generate static arbitrage, $\mu, \sigma$ in \eqref{eq:factor_diffusion} should be chosen to guarantee that $\xi$ takes values in the (deterministic) polytope given by \eqref{eq:factor_static_arbitrage}.

The SDE system (\ref{eq:diffusion_estimation_overall}) can be divided into two subsystems and estimated independently, as the constraints are only needed in  (\ref{eq:factor_diffusion}), and do not depend on $S$. In the next section, we will discuss an estimation method for the model of $\xi$ (\ref{eq:factor_diffusion}), which is generic enough for estimating the model of $S$ (\ref{eq:factor_S}).

\section{Neural SDE constrained by a polytope}
\label{sec:neural_sde_polytope}

Consider a stochastic process $Y \in \mathbb{R}^d$ which is a solution of the  SDE
\begin{equation}
    \diff Y_t = \mu (Y_t) \diff t + \sigma (Y_t) \diff W_t,
    \label{eq:sde_polytope}
\end{equation}
where $W$ is a standard $d$-dimensional Brownian motion under the measure $\mathbb{P}$. In addition, $\mu, \sigma$ satisfy the usual Lipschitz regularity conditions such that a unique strong solution $Y$ of the SDE exists. Given discrete observations of $Y$, we study the problem of estimating the drift $\mu: \mathbb{R}^d \rightarrow \mathbb{R}^d$ and the diffusion $\sigma: \mathbb{R}^d \rightarrow \mathbb{R}^{d \times d}$ in the model (\ref{eq:sde_polytope}), such that paths of $Y$ are guaranteed to lie within the deterministic convex polytope generated by a finite matrix-vector pair $(\mathbf{V}, \mathbf{b})$, that is,
\[Y_t \in \mathcal{P}:= \{y \in \mathbb{R}^d: \mathbf{V} y \geq \mathbf{b}\}\qquad\text{for all }t\ge 0\quad \mathbb{P}\text{-a.s.}\] 
Without loss of generality (by rescaling), we assume the rows of $\mathbf{V}$ are unit vectors.

\subsection{Nonattainability of a diffusion process}
\label{sec:nonattainability_diffusion}

Let $\mathbf{v}_k$ denote the $k$-th row of $\mathbf{V}$. For each $k$, the diffusion process $Y$ should not enter the domain $\mathcal{P}_k^c := \{y \in \mathbb{R}^d: \mathbf{v}_k^\top y < b_k \}$ at any time $t > 0$, $\mathbb{P}$-a.s. Writing $\mathcal{P}^c := \cup_{k} \mathcal{P}_k^c = \mathbb{R}^d \setminus \mathcal{P}$, we formulate the desired nonattainability of $\mathcal{P}^c$ from its exterior by the solution $Y$ of the SDE (\ref{eq:sde_polytope}) as
\begin{equation}
1-\mathbb{P}\big(Y_t \in \mathcal{P}\quad \text{for all } t>0\big| Y_0 \in \mathcal{P} \big) = \mathbb{P} \big(  \exists t >0 \text{ s.t. } Y_t \in \mathcal{P}^c ~\big| Y_0 \in \mathcal{P} \big) = 0.
\label{eq:nonattainability_formulation}
\end{equation}

Friedman and Pinsky \cite{friedman1973} give a set of sufficient conditions for (\ref{eq:nonattainability_formulation}) in terms of $\mu$ and $\sigma$. It suffices that the normal components of the diffusion and the drift vanish on any boundaries $\partial \mathcal{P}_k : = \{ y \in \mathbb{R}^{d}: \mathbf{v}_k^\top y = \mathbf{b} \}$, and that an additional ``convexity'' relation between the drift and diffusion coefficients on $\partial \mathcal{P}_k$ is imposed. Conveniently, this results in separate conditions for each of our $k$ constraints, which are easily analyzed.

For each $k$, let $\rho_k(y):= \text{dist}(y, \mathcal{P}_k^c)$ be the distance function defined for $y \not\in \mathcal{P}_k^c$, and let $\nu:\mathbb{R}^{d} \rightarrow \mathbb{R}^{d}$ be the outward normal to $\partial \mathcal{P}_k$ (relative to $\mathcal{P}^c_k$, that is, a vector towards the interior of $\mathcal{P}$). Then, restricting to the closure of $\mathcal{P}$, we have $\nu_i = \partial \rho_k / \partial y_i$ on $\partial \mathcal{P}_k$. Defining $a:= \sigma \sigma^\top$, to ensure that $\mathcal{P}_k^c$ is nonattainable we verify
\begin{subequations}
\begin{align}
\label{eq:friedman1973_diffusion} \sum_{i,j=1}^{d} a_{ij} \nu_i \nu_j = 0 & \quad \text{on } \partial \mathcal{P}_k, \\ 
\label{eq:friedman1973_drift} \sum_{i=1}^{d} \mu_i \nu_i + \frac{1}{2} \sum_{i,j=1}^{d} a_{ij} \pderiv{\rho_k}{y_i, y_j} \geq 0 & \quad \text{on }\partial \mathcal{P}_k,
\end{align}
\label{eq:friedman1973}%
\end{subequations}
assuming that $\rho_k$ has second-order derivatives in a neighbourhood of $\partial \mathcal{P}_k$.  This is formally written as Theorem \ref{thm:friedman1973}. 

\begin{theorem}[Friedman and Pinsky (1973) \cite{friedman1973}]
\label{thm:friedman1973}
Suppose $\mu$ and $\sigma$ satisfy Lipschitz and linear growth conditions, and let $Y_t$ be any solution to (\ref{eq:sde_polytope}) with initial data $Y_0 \in \mathcal{P}$. If (\ref{eq:friedman1973}) holds for all $k$, then $\mathcal{P}^c$ is nonattainable by $Y_t$, i.e.\ $\mathbb{P} \left( \left. \exists t >0 \text{ s.t. } Y_t \in \mathcal{P}^c ~\right| Y_0 \in \mathcal{P} \right) = 0$.
\end{theorem}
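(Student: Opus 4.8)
The plan is to adapt the classical argument of Friedman and Pinsky, handling one facet of the polytope at a time. \textbf{Reduction to a single constraint.} Since the rows of $\mathbf{V}$ are unit vectors and $t\mapsto Y_t$ is continuous, whenever $Y$ leaves $\mathcal{P}$ its first exit time $\tau = \inf\{t>0: Y_t\in\mathcal{P}^c\}$ satisfies $Y_\tau\in\partial\mathcal{P}_k$ for some $k$, with $Y$ entering $\mathcal{P}_k^c$ at times arbitrarily close to, and after, $\tau$. A union bound over the finitely many $k$ then reduces the claim to showing that, for each fixed $k$, the open half-space $\mathcal{P}_k^c$ is a.s.\ never entered by $Y$ started from any point of $\mathcal{P}$ (in particular from $\partial\mathcal{P}_k$). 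Fix $k$ and write $\rho = \rho_k$. Because the constraint is linear, $\rho(y) = \mathbf{v}_k^\top y - b_k$ is affine with $\nabla\rho\equiv\nu = \mathbf{v}_k$ and vanishing Hessian, so \eqref{eq:friedman1973_diffusion}--\eqref{eq:friedman1973_drift} simplify to $\mathbf{v}_k^\top a(y)\mathbf{v}_k = |\sigma(y)^\top\mathbf{v}_k|^2 = 0$ and $\mathbf{v}_k^\top\mu(y)\geq 0$ for $y\in\partial\mathcal{P}_k$ (for a curved boundary one keeps the Hessian term, exactly as in \cite{friedman1973}).

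\textbf{Quantitative estimates near the facet and a supermartingale.} Projecting a point $y$ with $0\leq\rho(y)\leq\delta$ orthogonally onto $\partial\mathcal{P}_k$ and using the global Lipschitz property of $\sigma,\mu$ gives $|\sigma(y)^\top\mathbf{v}_k|\leq K\rho(y)$ and $\mathbf{v}_k^\top\mu(y)\geq -K\rho(y)$ for some constant $K$. Set $Z_t = \rho(Y_t)$; by It\^o, $\diff Z_t = \tilde\beta_t\,\diff t + \tilde\sigma_t^\top\,\diff W_t$ with $\tilde\beta_t = \mathbf{v}_k^\top\mu(Y_t)$ and $\tilde\sigma_t = \sigma(Y_t)^\top\mathbf{v}_k$, so while $Y$ stays in the slab $\{0\leq\rho\leq\delta\}$ we have $\tilde\beta_t\geq -KZ_t$ and $|\tilde\sigma_t|^2\leq K^2 Z_t^2$. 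Applying It\^o to $-\log Z_t$ on $\{Z_t>0\}$, the drift is $-\tilde\beta_t/Z_t + |\tilde\sigma_t|^2/(2Z_t^2)\leq K + \tfrac12 K^2 =: C'$ and the quadratic variation rate is $|\tilde\sigma_t|^2/Z_t^2\leq K^2$. Hence, up to a stopping time that keeps $Y$ in the slab and inside a large ball (the latter to control the generic behaviour away from $\partial\mathcal{P}_k$ and to use the linear growth of $\mu,\sigma$, which prevents explosion), the process $-\log Z_t - C't$ is a bounded-below supermartingale.

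\textbf{Conclusion when $\rho(Y_0)>0$.} If $Y$ hit $\{\rho = 0\}$ at a finite time, $-\log Z_t$ would tend to $+\infty$ there, contradicting almost-sure convergence of a bounded-below supermartingale to a finite limit (first with the ball fixed, then letting it grow). Thus $\mathbb{P}(\exists\, t: \rho(Y_t)=0)=0$, and by the strong Markov property the same holds after every time at which $\rho(Y)>0$, which excludes any crossing from $\{\rho>0\}$ into $\{\rho<0\}$.

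\textbf{The boundary case and the main obstacle.} The remaining case $Y_0\in\partial\mathcal{P}_k$, i.e.\ $Z_0=0$, is the genuinely delicate one, since $-\log Z_0 = +\infty$ and the test function is useless at the start. Strong-Markov patching reduces it to excluding the scenario in which $Y$ dwells on $\partial\mathcal{P}_k$ and then enters $\mathcal{P}_k^c$; closing this requires the local comparison at the boundary more carefully. On the part of $\partial\mathcal{P}_k$ where $\mathbf{v}_k^\top\mu>0$, the inward drift drives $Y$ into the interior; where $\mathbf{v}_k^\top\mu$ vanishes, the estimates force the normal coordinate to behave like a nonnegative process with diffusion $O(Z)$ and drift bounded below by $-KZ$, which, like geometric Brownian motion, cannot cross from nonnegative to negative values. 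This is precisely the content of Friedman and Pinsky's analysis \cite{friedman1973}, which we invoke; in the polytope setting it is organised facet by facet as above. I expect this boundary analysis to be the main obstacle to a fully self-contained proof.
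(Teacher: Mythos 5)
The paper does not actually prove this theorem --- it is quoted verbatim from Friedman and Pinsky \cite{friedman1973} --- so your attempt is measured against the classical argument you are adapting rather than against anything in the text. Your facet-by-facet reduction is fine, as are the key quantitative estimates: since the conditions \eqref{eq:friedman1973_1} are imposed on the whole hyperplane $\partial\mathcal{P}_k$ and $\mu,\sigma$ are Lipschitz, projecting onto the hyperplane indeed gives $|\sigma(y)^\top\mathbf{v}_k|\leq K|\rho_k(y)|$ and $\mathbf{v}_k^\top\mu(y)\geq -K|\rho_k(y)|$ globally, and your $-\log\rho_k(Y_t)$ supermartingale argument correctly rules out hitting the hyperplane from a strictly interior start, modulo the excursion/localization bookkeeping you indicate. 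The genuine gap is exactly the one you flag yourself: the theorem permits $Y_0\in\partial\mathcal{P}_k$ (the set $\mathcal{P}$ is closed, and in the paper's application the data may sit on or arbitrarily near a no-arbitrage boundary), and for that case you offer only the heuristic that the normal coordinate ``like geometric Brownian motion, cannot cross from nonnegative to negative values'' together with an appeal to \cite{friedman1973}. As a self-contained proof this is circular --- the boundary start is precisely the nontrivial content that remains --- and the $-\log$ Lyapunov function genuinely cannot be started there.

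The gap can be closed (and your interior case subsumed) by running the estimate on $Z_t^-:=\max(-\rho_k(Y_t),0)$ instead of $-\log Z_t$. Writing $\mathrm{d}Z_t=\tilde\beta_t\,\mathrm{d}t+\tilde\sigma_t^\top\mathrm{d}W_t$ as you do, Tanaka's formula gives $Z_t^- = Z_0^- -\int_0^t \mathbf{1}_{\{Z_s\leq 0\}}\,\mathrm{d}Z_s + \tfrac12 L_t^0$, and the local time $L^0$ of $Z$ at zero vanishes because $\mathrm{d}\langle Z\rangle_t = |\tilde\sigma_t|^2\,\mathrm{d}t \leq K^2 Z_t^2\,\mathrm{d}t$, which is $O(\varepsilon^2)$ on $\{|Z_t|\leq\varepsilon\}$, so the occupation-density limit defining $L^0_t$ is zero. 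Since on $\{Z_s\leq 0\}$ one has $-\tilde\beta_s\leq K Z_s^-$, the process $e^{-Kt}Z_t^-$ is a nonnegative local supermartingale, hence a supermartingale, started at $Z_0^-=0$ (because $Y_0\in\mathcal{P}$); therefore it is indistinguishable from $0$, i.e.\ $\rho_k(Y_t)\geq 0$ for all $t$ almost surely, for each $k$, and the union bound over the finitely many facets finishes the proof. This handles interior and boundary initial points uniformly, dispenses with the last-exit/strong-Markov patching, and uses the linear growth hypothesis only where it belongs, namely for existence and non-explosion of $Y$.
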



In our case of interest, the distance function $\rho_k(y)$ measures the (Euclidean) distance between a point $y \in \mathbb{R}^{d}$ and a hyperplane $\partial \mathcal{P}_k $, that is, $\rho_k(y) = \mathbf{v}_k^\top y - b_k$. (Note that $\rho_k(y)$ is guaranteed to be non-negative for all $y$  within the interior of $\mathcal{P}$.) Therefore, $\partial \rho_k (y) / \partial y = \mathbf{v}_k$ and $\partial^2 \rho_k (y) / \partial y_i \partial y_j = 0$ on $\partial \mathcal{P}_k$. Therefore, we are able to re-write the conditions (\ref{eq:friedman1973}) as, for all $k$,
\begin{subequations}
\begin{align}
\mathbf{v}_k^\top a(y) \mathbf{v}_k = 0 & \quad \text{if } \mathbf{v}_k^\top y = b_k,\quad (\text{\emph{no diffusion over the boundary}}); \label{eq:friedman1973_1_1} \\ 
\mathbf{v}_k^\top \mu(y) \geq 0 & \quad \text{if } \mathbf{v}_k^\top y = b_k,\quad  (\text{\emph{inward-pointing drift at the boundary}}).  \label{eq:friedman1973_1_2}
\end{align}
\label{eq:friedman1973_1}%
\end{subequations}
By enforcing these simple linear (in)equality constraints on $a= \sigma\sigma^\top$ and $\mu$, we ensure that our process cannot leave the desired region $\mathcal{P}$.

\subsection{The likelihood function approximation}
\label{sec:likelihood_function}

Denote by $y_{0:T}$ the observed discrete sample path of $Y_t$ at times $0 = t_0 < t_1 < \cdots < t_L = T$. The problem of estimating the model (\ref{eq:sde_polytope}) is formulated as the constrained optimization problem:
\begin{equation}
\max_{\mu, \sigma} ~p(y_{0:T}; \mu, \sigma),
\text{ subject to } (\ref{eq:friedman1973_1_1}) \text{ and } (\ref{eq:friedman1973_1_2}), \text{ for all } k.
\label{eq:inference}
\end{equation}
Here, $p(y_{0:T}; \mu, \sigma)$ is the likelihood (under measure $\mathbb{P}$) of observing the time series $y_{0:T}$ given $\mu$ and $\sigma$. The formulation (\ref{eq:inference}) is an infinite-dimensional optimization problem, where the optimization variables are real-valued Lipschitz functions that are subject to boundary conditions.

We can compute the likelihood $p(y_{0:T}; \mu, \sigma)$ using the marginal density of the finite dimensional observations $y_{0:T}$. However, the transition density is generally not available in closed form. A simple approximation is to use the Euler--Maruyama scheme for the SDE of $Y_t$. Suppose we have a uniform time mesh and let $\Delta t = t_i - t_{i-1} = T / L$ for all $1 \leq i \leq L$. We have the single step approximation
\begin{equation*}
Y_{t_{i+1}} - Y_{t_{i}} \approx \mu(Y_{t_{i}}) \Delta t + \sigma(Y_{t_{i}}) (W_{t_{i+1}}-W_{t_{i}}),
\end{equation*}
where $W_{t_{i+1}}-W_{t_{i}} \sim \mathcal{N} (\mathbf{0}, \mathbf{I}_{d})$ and $0\le i<L$. 
For notional simplicity, let $\mu(i)= \mu(y_{t_{i}})$, $\sigma (i) = \sigma(y_{t_{i}})$ and $a (i) = \sigma (i) \sigma (i)^\top$. The likelihood of observing $y_{t_{i+1}}$ given $y_{t_{i}}$ is then approximately
\begin{equation*}
p(y_{t_{i+1}} | y_{t_{i}}; \mu, \sigma)
\propto
\frac{1}{\sqrt{|a(i)|}}
\exp \left\{ -\frac{\| y_{t_{i+1}} - y_{t_{i}} - \mu(i) \Delta t \|_{a(i)}}{2 \Delta t} \right\},
\end{equation*}
where $|a(i)| = \det(a(i))$, and for $u, ~\nu \in \mathbb{R}^{d}$ and a positive definite matrix $a \in \mathbb{R}^{d \times d}$,
\begin{equation*}
(u, \nu)_a := u^\top a^{-1} \nu, ~~ \|u\|^2_a := (u, u)_a.
\end{equation*}
Due to the Markov property, we have
\begin{equation*}
\begin{split}
p(y_{0:T} | y_0; \mu, \sigma) 
 & = \prod_{i=0}^{L-1} p(y_{t_{i+1}} | y_{t_{i}}; \mu, \sigma) \\
 & \propto \bigg(\prod_{i=0}^{L-1} |a(i)|^{-\frac{1}{2}}\bigg)  \times \exp \bigg\{ -\frac{1}{2 \Delta t} \sum_{i=0}^{L-1} \| y_{t_{i+1}} - y_{t_{i}} \|^2_{a(i)} \bigg\} \\
 & ~~~~ \times \exp \bigg\{ -\frac{1}{2} \sum_{i=0}^{L-1} \Big[ \| \mu(i)\|^2_{a(i)} \Delta t - 2 \big(\mu(i), y_{t_{i+1}} - y_{t_{i}}\big)_{a(i)} \Big] \bigg\}.
\end{split}
\end{equation*}
Taking the log of the approximate likelihood $p$, we have
\begin{equation}
\begin{split}
 & \ln p(y_{0:T} | y_{t_{0}}0; \mu, \sigma) \\ 
\propto & -\frac{1}{2} \sum_{i=0}^{L-1}
\left[
\ln |a(i)| + 
\frac{1}{\Delta t} \| y_{t_{i+1}} - y_{t_{i}} \|^2_{a(i)} + \| \mu(i) \|^2_{a(i)} \Delta t - 2 \left(\mu(i), y_{t_{i+1}} - y_{t_{i}}\right)_{a(i)}
\right].
\end{split}
\label{eq:approx_likelihood}
\end{equation}

\subsection{Deep learning as function approximation algorithm}

Given the approximated likelihood function (\ref{eq:approx_likelihood}), the optimization (\ref{eq:inference}) can be viewed as a \textit{supervised learning} problem, where the input variable is $y_t$ and the output variable is the increment $y_{t+\Delta t} - y_t$.

We will use a neural network to represent functions $\mu(\cdot)$ and $\sigma(\cdot)$ in a non-parametric way, limited only by properties such as continuity and boundedness.
We design a neural network such that $\mu(\cdot)$ and $\sigma(\cdot)$ always satisfy the constraints listed in \eqref{eq:friedman1973_1}. This is achieved by viewing the \textit{target} functions $\mu(\cdot)$ and $\sigma(\cdot)$ as transformations of some \textit{underlying} functions $\hat{\mu}(\cdot)$ and $\hat{\sigma}(\cdot)$ through some operators $\mathcal{G}_\mu$ and $\mathcal{G}_\sigma$, respectively:
\begin{equation*}
    \mu = \mathcal{G}_\mu [\hat{\mu}], \quad \sigma = \mathcal{G}_\sigma [\hat{\sigma}].
    \label{eq:underlying_transform}
\end{equation*}
We construct the operators $\mathcal{G}_\mu$ and $\mathcal{G}_\sigma$ such that \eqref{eq:friedman1973_1} holds for any functions $\hat{\mu}(\cdot)$ and $\hat{\sigma}(\cdot)$, but impose essentially no further restrictions. We then proceed by modelling the functions $\hat{\mu}(\cdot)$ and $\hat{\sigma}(\cdot)$ using a standard neural network. In Figure \ref{fig:nn}, we show an illustration of the proposed constrained neural network model.

\begin{figure}[!ht]
    \centering
    \resizebox{\textwidth}{!}{
    \begin{tikzpicture}[x=1.5cm, y=1.5cm, >=stealth]

\foreach \m/\l [count=\y] in {1,2,3,missing,4}
  \node [every neuron/.try, neuron \m/.try] (input-\m) at (0,2.5-\y) {};

\foreach \m [count=\y] in {1,missing,2}
  \node [every neuron/.try, neuron \m/.try ] (hidden1-\m) at (2,2-\y*1.25) {};

\foreach \m [count=\y] in {1,missing,2}
  \node [every neuron/.try, neuron \m/.try ] (hidden2-\m) at (5,2-\y*1.25) {};

\foreach \m [count=\y] in {1,empty,2}
  \node [every neuron/.try, neuron \m/.try ] (output-\m) at (7,1.5-\y) {};
  
\foreach \m [count=\y] in {1,empty,2}
  \node [trans neuron/.try, neuron \m/.try ] (transform-\m) at (9,1.5-\y) {};
\node[] at (transform-1) {$\mathcal{G}_\mu$};
\node[] at (transform-2) {$\mathcal{G}_\sigma$};
  
\foreach \m [count=\y] in {1,empty,2}
  \node [every neuron/.try, neuron \m/.try ] (target-\m) at (10.8,1.5-\y) {};

\foreach \l [count=\i] in {S, \xi_1 , \xi_2, \xi_d}
  \draw [<-] (input-\i) -- ++(-1,0)
    node [above, midway] {$\l$};

\foreach \l [count=\i] in {1,n}
  \node [above] at (hidden1-\i.north) {}; 

\foreach \l [count=\i] in {1,n}
  \node [above] at (hidden2-\i.north) {}; 

\foreach \l [count=\i] in {\hat{\mu}, \hat{\sigma}}
  \draw [->] (output-\i) -- (transform-\i)
    node [above, midway] {$\l$};
\foreach \l [count=\i] in {\hat{\mu}, \hat{\sigma}}
  \draw [->] (transform-\i) -- (target-\i) {};
  
\foreach \l [count=\i] in {\mu, \sigma}
  \draw [->] (target-\i) -- ++(1,0)
    node [above, midway] {$\l$};

\foreach \i in {1,...,4}
  \foreach \j in {1,...,2}
    \draw [->] (input-\i) -- (hidden1-\j);

\foreach \i in {1,...,2}
  \foreach \j in {1,...,2}
    \draw [->] (hidden1-\i) -- (hidden2-\j);

\foreach \i in {1,...,2}
  \foreach \j in {1,...,2}
    \draw [->] (hidden2-\i) -- (output-\j);

\node [align=center, above] at (0,2) {Input \\layer};
\node [align=center, above] at (2,2) {Hidden \\layer};
\node [align=center, above] at (5,2) {Hidden \\layer};
\node [align=center, above] at (7,1.68) {Output layer \\(underlying \\ function)};
\node [align=center, above] at (9,2) {Transform \\ layer};
\node [align=center, above] at (10.8,1.68) {Output layer \\(target \\ function)};

\node[fill=white,scale=4,inner xsep=0pt,inner ysep=5mm] at ($(hidden1-1)!.5!(hidden2-2)$) {$\dots$};

\end{tikzpicture}
    }
    \caption{Constrained neural network.}
    \label{fig:nn}
\end{figure}
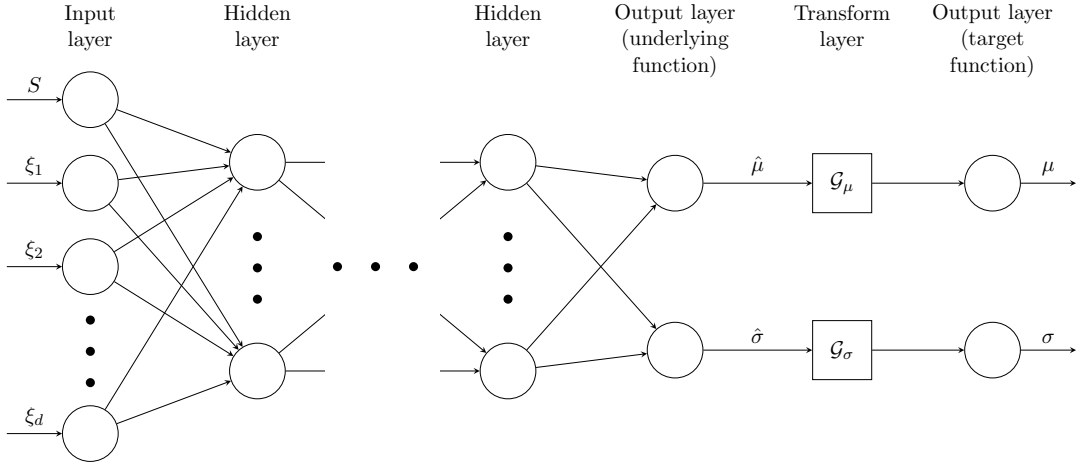

\subsubsection*{Operator for diffusion constraints}

The diffusion function $\sigma$ is unconstrained except when $y$ approaches a boundary of $\mathcal{P}$. More specifically, if $y$ is on the $k$-th boundary, i.e.\ $\rho_k(y) = \mathbf{v}_k^\top y - b_k = 0$, the diffusion matrix $\sigma (y)$ will live in the kernel of $\mathbf{v}_k^\top$, as $(\mathbf{v}_k^\top \sigma)^2 = \mathbf{v}_k^\top a \mathbf{v}_k = 0$. We want to construct a Lipschitz operator $\mathcal{G}_\sigma$ such that, for any non-degenerate locally (Lipschitz) continuous $\hat{\sigma} : \mathbb{R}^{d} \rightarrow \mathbb{R}^{d \times d}$,
\begin{equation*}
    \sigma = \mathcal{G}_\sigma [\hat{\sigma}] \in \ker(\mathbf{v}_k^\top), ~\text{whenever}~ \rho_k(y) = 0.
\end{equation*}

We propose to achieve this by \textit{shrinking} $\mathbf{v}_k^\top \sigma $ to zero as $\rho_k \downarrow 0$. We accordingly construct an operator which is able to act when near multiple boundaries, and should be unaffected by boundaries distant from $y$. Specifically, we define
\begin{equation}
    \sigma(y) = \mathcal{G}_\sigma [\hat{\sigma}](y) := (\mathbf{P}(y))^\top \hat{\sigma} (y),
    \label{eq:constrained_sigma}
\end{equation}
where $\mathbf{P}(y)$ is a linear shrinking matrix constructed using Algorithm \ref{alg:diffusion_shrinking}.

\begin{algorithm}[!ht]
\footnotesize
\SetAlgoLined
\SetKwInOut{Input}{Input}\SetKwInOut{Output}{Output}
\Input{A non-empty $d$-polytope $\mathcal{P} = \{x \in \mathbb{R}^d : \mathbf{V} x \geq \mathbf{b}, \mathbf{V} \in \mathbb{R}^{R \times d}, \mathbf{b} \in \mathbb{R}^d \}$;  a  Lipschitz continuous and monotonically increasing function $h^\sigma: \mathbb{R}_{\geq 0} \rightarrow \mathbb{R} \cap [0,1]$ satisfying $h^\sigma(0)=0$ and $h^\sigma(x) \rightarrow 1$ for $x\rightarrow\infty$; some fixed $y \in \mathbb{R}^d$. }
\Output{The shrinking transformation $\mathbf{P} (y) \in \mathbb{R}^{d \times d}$.}
\BlankLine
\tcc{In our experiments, we use the simple example $h^\sigma(x) = 1 - 1/(1+x)$.}
\ForEach{$k = 1, \dots, R$}{
    Construct the normalised distances  $\varepsilon^\sigma_k = h^\sigma \circ \rho_k(y)$, where $\rho_k$ is the Euclidean distance to the $k$-th boundary of $\mathcal{P}$, as defined in Section \ref{sec:nonattainability_diffusion}.
}

Sort $\{\varepsilon^\sigma_k\}_{k}$ in ascending order, and denote the corresponding permutation of the indices of $\varepsilon^\sigma$ as $\pi$. That is, $\varepsilon^\sigma_{k} = \varepsilon^\sigma_{(\pi(k))}$\;

Keep the first $d$ minimal $\varepsilon^\sigma$s, denoted by $\{\varepsilon^\sigma_{(k)}\}_{1 \leq k \leq d}$, and the associated hyperplane boundary coefficients $\{ \mathbf{v}_{(k)} \}_{1 \leq k \leq d}$. Let $\mathbf{U} = [\mathbf{v}_{(1)} ~\cdots ~\mathbf{v}_{(d)}]$ and assume it has full rank. (To ensure $\mathbf{U}$ is of full rank, we can exclude from the ordering any $k$ such that $\mathbf{v}_{(k)}\in \mathrm{span}\{\mathbf{v}_{(1)}, \dots, \mathbf{v}_{(k-1)}\}$)\;

\tcc{The set $\{ \mathbf{v}_{(k)} \}_k$ represents the $d$ directions on which the projected diffusion components most need to vanish. Therefore, we consider shrinking the diffusion along the orthogonal basis that is formed by these directions according to the order and scale of the corresponding distances $\{\varepsilon^\sigma_{(k)}\}_{k}$.}

Let $\mathbf{q}_1 = \mathbf{v}_{(1)}$\;

\ForEach{$k = 2, \dots, d$}{
    Following the Gram--Schmidt process, let
    \begin{equation}
        \mathbf{q}_k = \frac{1}{\sqrt{1 - \sum_{j=1}^{k-1} \langle \mathbf{q}_j, \mathbf{v}_{(k)} \rangle^2}} \left( \mathbf{v}_{(k)} - \sum_{j=1}^{k-1} \langle \mathbf{q}_j, \mathbf{v}_{(k)} \rangle \mathbf{q}_j \right),
        \label{eq:alg2GramSchmidt}
    \end{equation}
    then $\mathbf{q}_k$ is the unit component of $\mathbf{v}_{(k)}$ that is orthogonal to $\mathrm{span} \{\mathbf{v}_{(1)}, \dots, \mathbf{v}_{(k-1)}\}$\;
}

Let $\mathbf{Q} = [\mathbf{q}_{1} ~\cdots ~\mathbf{q}_{d}]$, be the resulting change of basis matrix 

\tcc{A geometric illustration of the construction of $\mathbf{Q}$ is shown in Figure \ref{fig:diffusion_operator}}

Define
\begin{equation}
    \mathbf{P}(y) = \text{diag} \left( \sqrt{\varepsilon_{(1)}}, \cdots, \sqrt{\varepsilon_{(d)}} \right) \times \mathbf{Q};
    \label{eq:construct_P}%
\end{equation}

\tcc{Unless the directions $\{ \mathbf{v}_{(k)} \}_k$ are orthogonal, shrinking the diffusion along $\mathbf{q}_j$ will shrink it along all directions in $\{ \mathbf{v}_{(k)} \}_{k\geq j}$. This may cause the diffusion along $\mathbf{v}_{(k)}$, for any $k \geq 2$, to be shrunk by a much smaller factor than $\sqrt{\varepsilon_{(k)}}$. To mitigate this compounded shrinking issue, we propose an alternative method to compute adaptive shrinking scales in Appendix \ref{sec:adaptive_shrinking}.}

\caption{Constructing diffusion shrinking transformation matrix}
\label{alg:diffusion_shrinking}
\end{algorithm}

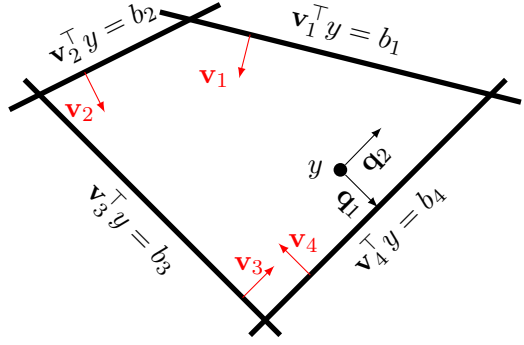
\begin{figure}[!ht]
    \centering
    \begin{tikzpicture}[scale=1]
    \node (v1) at (0,0) {};
    \node (v2) at (4,-1) {};
    \node (v3) at (-2,-1) {};
    \node (v4) at (1, -4) {};
    
    \draw[black, line width=2] (-.4, .1) -- (4.4, -1.1) node[midway, above, sloped] {$\mathbf{v}_1^\top y = b_1$};  
    \draw[black, line width=2] (.4, .2) -- (-2.4, -1.2) node[midway, above, sloped] {$\mathbf{v}_2^\top y = b_2$};  
    \draw[black, line width=2] (-2.2, -.8) -- (1.2, -4.2) node[midway, below, sloped] {$\mathbf{v}_3^\top y = b_3$}; 
    \draw[black, line width=2] (4.2, -.8) -- (.8, -4.2) node[midway, below, sloped] {$\mathbf{v}_4^\top y = b_4$}; 
    
    \Perp[.2]{(v1)}{(v2)}{point1}[-6mm]
    \node[left, red] at (point1) {$\mathbf{v}_1$};
    \Perp[.3]{(v3)}{(v1)}{point2}[-6mm]
    \node[left, red] at (point2) {$\mathbf{v}_2$};
    \Perp[.1]{(v4)}{(v3)}{point3}[-6mm]
    \node[left, red] at (point3) {$\mathbf{v}_3$};
    \Perp[.8]{(v2)}{(v4)}{point4}[-6mm]
    \node[right, red] at (point4) {$\mathbf{v}_4$};
    
    \node (xi) at (2,-2) [circle, fill, scale=.5, label=left:{$y$}] {};
    
    \draw [latex-] ($(v2)!(xi)!(v4)$) -- (xi) node[midway, below, sloped] {$\mathbf{q}_{1}$};
    \node (point5) at ($(v2)!8.5mm!(v1)$) {};
    \draw [latex-] ($(point5)!9mm!(xi)$) -- (xi) node[midway, below, sloped] {$\mathbf{q}_{2}$};
    
\end{tikzpicture}
    \caption{Orthogonal directions for shrinking the diffusion function. As $y$ is closer to boundary $4$ (bottom right) than the others, direction $\mathbf{q}_1$ points to this boundary, and will be shrunk the most by the linear shrinking matrix $\mathbf{P}(y)$. }
    \label{fig:diffusion_operator}
\end{figure}

\begin{proposition}
Given the operator $\mathcal{G}_\sigma$ defined by (\ref{eq:constrained_sigma}) and (\ref{eq:construct_P}), we have $\sigma = \mathcal{G}_\sigma [\hat{\sigma}] \in \ker(\mathbf{v}_k^\top)$ whenever $\rho_k(y) = 0$, for each $k$ and any function $\hat{\sigma}: \mathbb{R}^d \rightarrow \mathbb{R}^{d \times d}$.
\end{proposition}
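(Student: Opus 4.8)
The plan is to reduce the statement to the purely linear-algebraic fact that the shrinking matrix annihilates the relevant boundary normal, i.e.\ that $\mathbf{P}(y)\mathbf{v}_k=\mathbf{0}$ whenever $\rho_k(y)=0$. Granting this, \eqref{eq:constrained_sigma} gives $\mathbf{v}_k^\top\sigma(y)=\mathbf{v}_k^\top(\mathbf{P}(y))^\top\hat\sigma(y)=\bigl(\mathbf{P}(y)\mathbf{v}_k\bigr)^\top\hat\sigma(y)=\mathbf{0}$, so $\sigma(y)\in\ker(\mathbf{v}_k^\top)$ irrespective of $\hat\sigma$ (which is the content of the diffusion constraint \eqref{eq:friedman1973_1_1}, since then $\mathbf{v}_k^\top a(y)\mathbf{v}_k=\|\mathbf{v}_k^\top\sigma(y)\|^2=0$). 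So everything comes down to inspecting $\mathbf{P}(y)=\text{diag}(\sqrt{\varepsilon_{(1)}},\dots,\sqrt{\varepsilon_{(d)}})\,\mathbf{Q}$ produced by Algorithm \ref{alg:diffusion_shrinking}.

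First I would pin down the ordering structure. Since $h^\sigma$ is monotone increasing with $h^\sigma(0)=0$, the hypothesis $\rho_k(y)=0$ forces the normalised distance $\varepsilon^\sigma_k=h^\sigma(\rho_k(y))$ to equal $0$, its minimal possible value, so the index $k$ sits in the front block of the sorted list. Let $r$ be the number of retained directions $\mathbf{v}_{(j)}$, $1\le j\le d$, with $\varepsilon_{(j)}=0$, so that $\varepsilon_{(1)}=\dots=\varepsilon_{(r)}=0<\varepsilon_{(r+1)}\le\dots\le\varepsilon_{(d)}$. I then claim $\mathbf{v}_k\in\mathrm{span}\{\mathbf{v}_{(1)},\dots,\mathbf{v}_{(r)}\}$ in every case: (a) if $\mathbf{v}_k$ is itself one of the retained rows, then since the retained list is non-decreasing in $\varepsilon$ it must occupy one of the first $r$ slots; (b) if $\mathbf{v}_k$ was skipped because it lay in the span of the previously retained rows, those rows have normalised distance $\le\varepsilon^\sigma_k=0$ and hence lie among $\mathbf{v}_{(1)},\dots,\mathbf{v}_{(r)}$; (c) if $\mathbf{v}_k$ was never reached because $d$ independent rows had already been retained, those $d$ rows likewise all have normalised distance $\le\varepsilon^\sigma_k=0$, so $r=d$ and $\mathrm{span}\{\mathbf{v}_{(1)},\dots,\mathbf{v}_{(r)}\}=\mathbb{R}^d$.

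Next I would push this through the Gram--Schmidt recursion \eqref{eq:alg2GramSchmidt}: by its defining property $\mathrm{span}\{\mathbf{q}_1,\dots,\mathbf{q}_j\}=\mathrm{span}\{\mathbf{v}_{(1)},\dots,\mathbf{v}_{(j)}\}$ for each $j$, and $\{\mathbf{q}_1,\dots,\mathbf{q}_d\}$ is orthonormal. Hence $\mathbf{v}_k\in\mathrm{span}\{\mathbf{q}_1,\dots,\mathbf{q}_r\}$, so its coefficients $\langle\mathbf{q}_j,\mathbf{v}_k\rangle$ in this basis vanish for $j>r$, and $\mathbf{v}_k=\sum_{j=1}^r\langle\mathbf{q}_j,\mathbf{v}_k\rangle\,\mathbf{q}_j$. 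Reading off \eqref{eq:construct_P}, $\mathbf{P}(y)$ sends each $\mathbf{q}_j$ to $\sqrt{\varepsilon_{(j)}}\,e_j$, where $e_j$ is the $j$-th standard basis vector (this is precisely what the orientation of $\mathbf{Q}$ in \eqref{eq:construct_P} is set up to encode), so $\mathbf{P}(y)\mathbf{v}_k=\sum_{j=1}^r\langle\mathbf{q}_j,\mathbf{v}_k\rangle\sqrt{\varepsilon_{(j)}}\,e_j=\mathbf{0}$ because $\varepsilon_{(1)}=\dots=\varepsilon_{(r)}=0$. Together with the reduction in the first paragraph, this yields $\sigma=\mathcal{G}_\sigma[\hat\sigma]\in\ker(\mathbf{v}_k^\top)$ on $\{\rho_k=0\}$.

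The step I expect to be the main obstacle is the case analysis in the second paragraph: one must be careful that every boundary index $k$ with $\rho_k(y)=0$ is accounted for — retained early, or discarded for linear dependence on earlier zero-distance rows, or cut off only after a full complement of zero-distance rows has been retained — and that in all three situations $\mathbf{v}_k$ ends up inside $\mathrm{span}\{\mathbf{q}_1,\dots,\mathbf{q}_r\}$. Once that span inclusion and the vanishing of the first $r$ diagonal entries of $\text{diag}(\sqrt{\varepsilon_{(1)}},\dots,\sqrt{\varepsilon_{(d)}})$ are in hand, the rest is immediate; note that no regularity or nondegeneracy of $\hat\sigma$ is needed, the assertion being purely algebraic.
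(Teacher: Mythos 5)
Your proof is correct and follows essentially the same route as the paper's: the hypothesis $\rho_k(y)=0$ forces $\varepsilon^\sigma_k=h^\sigma(0)=0$ so the zero-distance normals sort to the front, the Gram--Schmidt construction places $\mathbf{v}_k$ in $\mathrm{span}\{\mathbf{q}_1,\dots,\mathbf{q}_{r}\}$, and the vanishing of the corresponding shrink factors $\sqrt{\varepsilon_{(1)}},\dots,\sqrt{\varepsilon_{(r)}}$ kills $\mathbf{v}_k^\top\sigma$ regardless of $\hat\sigma$ (the paper phrases this through the quadratic form $\mathbf{v}_k^\top a\,\mathbf{v}_k=0$, you phrase it through the slightly stronger linear statement $\mathbf{P}(y)\mathbf{v}_k=\mathbf{0}$). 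Your explicit case analysis for indices $k$ that are skipped for linear dependence or never reached is a more careful treatment of a point the paper's two-case split ($\pi(k)\geq d$ versus $\pi(k)<d$) handles only implicitly, but the underlying argument is the same.
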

\begin{proof}
Suppose that $y$ is on the $k$-th boundary. Recall that in Algorithm \ref{alg:diffusion_shrinking} we define $\pi$ as the permutation of the indices of $\varepsilon^\sigma$ when sorting $\{\varepsilon^\sigma_j\}_{j}$ in ascending order. If $\pi(k) \geq d$, then $y$ is on a vertex of $\mathcal{P}$ with $\varepsilon_{(j)}=0$ for all $j=1,\dots,d$. Hence,
\begin{equation*}
    \mathbf{v}_k^\top a \mathbf{v}_k = \sum_{i,j=1}^d a_{ij} (\mathbf{v}_k^\top \mathbf{q}_i) (\mathbf{v}_k^\top \mathbf{q}_j) \sqrt{\varepsilon_{(i)} \varepsilon_{(j)}} = 0.
\end{equation*}

Otherwise, when $1 \leq \pi(k) < d$, the Gram--Schmidt process ensures that $\mathbf{v}_k = \mathbf{v}_{(\pi(k))} \in \textrm{span}\{\mathbf{q}_1, \dots, \mathbf{q}_{\pi(k)}\} \perp \mathbf{q}_j$ for all $j > \pi(k)$. Hence,
\begin{equation*}
    \mathbf{v}_k^\top a \mathbf{v}_k = \sum_{i,j=1}^{\pi(k)} a_{ij} (\mathbf{v}_k^\top \mathbf{q}_i) (\mathbf{v}_k^\top \mathbf{q}_j) \sqrt{\varepsilon_{(i)} \varepsilon_{(j)}} = 0,
\end{equation*}
as $\varepsilon_{(j)}=0$ for all $j=1,\dots,\pi(k)$. Therefore, $\sigma \in \ker(\mathbf{v}_k^\top)$ as $(\mathbf{v}_k^\top \sigma)^2 = \mathbf{v}_k^\top a \mathbf{v}_k = 0$.
\end{proof}

\subsubsection*{Operator for drift constraints}

As with the diffusion function, the drift function $\mu$ should be unconstrained except when $y$ approaches a boundary of $\mathcal{P}$. The geometric interpretation of the constraint $\mathbf{v}_k^\top \mu \geq 0$ is that the drift is pointing
inwards whenever $y$ is on the boundary $\mathcal{P}_k$. 

As $\mathcal{P}$ is convex, we know that the vector from any point on the boundary $y$ towards any interior point $\zeta_k$ will satisfy $\mathbf{v}_{k}^\top (\zeta_k - y) >0$. Therefore, we construct $\mu$ as a unconstrained locally (Lipschitz) continuous function $\hat{\mu}$, \textit{corrected} by some $\mathcal{P}$-inward-pointed $d$-vectors, i.e.\
\begin{equation}
    \mu (y) = \mathcal{G}_\mu [\hat{\mu}] (y) := \hat{\mu}(y) + \sum_{k} \lambda_k (y) (\zeta_k - y),
    \label{eq:constrained_mu}
\end{equation}
where $\{\zeta_k\}_{k}$ is a collection of interior points in $\mathcal{P}$, and $\lambda_k$ is a scalar weight chosen such that $\zeta_k - y$ will dominate $\hat{\mu}$ when $y$ moves close to the $k$-th boundary. 

In order to ensure our correction acts smoothly as we approach a boundary, for some small $\rho^*>0$, we select $\{\zeta_k\}_k$ from the $\rho^*$-interior points, defined as $\mathcal{P}^{\rho^*} := \{x \in \mathbb{R}^d: \mathbf{V} x \geq \mathbf{b} - \rho^* \mathbf{1}\}$, provided that $\mathcal{P}^{\rho^*} \neq \emptyset$ (which is guaranteed for $\rho^*$ sufficiently small, as long as our inequality constraints are not degenerate). We pre-compute one $\rho^*$-interior point of $\mathcal{P}$ for each boundary (see Algorithm \ref{alg:interior_points} in Appendix \ref{sec:interior_point} for a method for constructing interior points). Ideally, an interior point should be as far from its corresponding boundary as possible, so that the correction vector $\zeta_k - y$ aligns more with the inward orthogonal vector to the corresponding boundary, where the correction is more efficient. An example is shown in Figure \ref{fig:drift_interior}.

\begin{figure}[!ht]
    \centering
    \begin{tikzpicture}[scale=1]
    \node (v1) at (0,0) {};
    \node (v2) at (4,-1) {};
    \node (v3) at (-2,-1) {};
    \node (v4) at (1, -4) {};
    
    \draw[black, line width=2] (-.4, .1) -- (4.4, -1.1) node[midway, above, sloped] {$\mathbf{v}_1^\top y = b_1$};  
    \draw[black, line width=2] (.4, .2) -- (-2.4, -1.2) node[midway, above, sloped] {$\mathbf{v}_2^\top y = b_2$};  
    \draw[black, line width=2] (-2.2, -.8) -- (1.2, -4.2) node[midway, below, sloped] {$\mathbf{v}_3^\top y = b_3$}; 
    \draw[black, line width=2] (4.2, -.8) -- (.8, -4.2) node[midway, below, sloped] {$\mathbf{v}_4^\top y = b_4$}; 
    
    \draw[dashed] (-.4, -.2) -- (4-0.24, -1-0.24);
    \draw[dashed] (.4, -.1) -- (.2-2, -1-.2);
    \draw[dashed] (-2+.4, -1+.2) -- (1+.3, .3-4);
    \draw[dashed] (-.4+4, .1-1) -- (-.25+1, .25-4);
    
    \draw[stealth-stealth] (-.5, -2.5) -- (-.2, -2.2) node[midway, above, sloped] {$\rho^*$};
    
    \node (z1) at (.8,-3) [circle, fill, scale=.2, label=above:{$\zeta_1$}] {};
    \node (z2) at (1.2,-3.2) [circle, fill, scale=.2, label=above:{$\zeta_2$}] {};
    \node (z3) at (.7, -.7) [circle, fill, scale=.2, label=left:{$\zeta_3$}] {};
    \node (z4) at (-.3, -.6) [circle, fill, scale=.2, label=below:{$\zeta_4$}] {};
    
\end{tikzpicture}
    \caption{An example of $\rho^*$-interior points for calculating drift correction directions}
    \label{fig:drift_interior}
\end{figure}
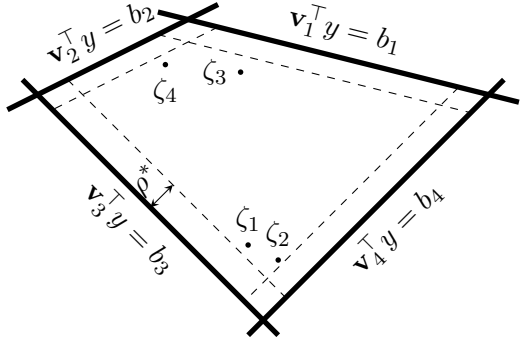

Since the drift constraint is only required on the boundary, we require the correction vectors $\{\zeta_k - y\}_k$ to vanish quickly when $y$ moves away from the boundary. To achieve this, it is sufficient to define
\begin{equation}
    \lambda_k(y) = \left( \frac{\mathbf{v}_k^\top \hat{\mu} - \varepsilon^\mu_k(y)}{\mathbf{v}_k^\top (\zeta_k - y)} \right)^+,
    \label{eq:constrained_mu_lambda}
\end{equation}
where $\varepsilon^\mu_k = h^\mu \circ \rho_k$, for some continuous and monotonically increasing function $h^\mu:\mathbb{R}_{\geq 0} \rightarrow \mathbb{R}_{\geq 0}$ with $h^\mu(0)=0$ and $h^\mu(x) \geq \varepsilon^{\mu*}$ for $x\geq \rho^*$ and some sufficiently large real number $\varepsilon^{\mu*}$. For example, we define $h^\mu(x) = \varepsilon^{\mu*} (e^{x} -1) / (e^{\rho^*} -1)$. 

\begin{proposition}
Given the operator $\mathcal{G}_\mu$ defined by (\ref{eq:constrained_mu}) and (\ref{eq:constrained_mu_lambda}), we have $\mathbf{v}_k^\top \mu \geq 0$ whenever $\rho_k(y)=0$, for each $k$ and any function $\hat{\mu}: \mathbb{R}^d \rightarrow \mathbb{R}^d$.
\end{proposition}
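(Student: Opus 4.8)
The plan is to establish the inequality pointwise on the $k$-th face and read it straight off the construction \eqref{eq:constrained_mu}--\eqref{eq:constrained_mu_lambda}. Fix an index $k$ and a point $y$ with $\rho_k(y)=0$, so that $\mathbf{v}_k^\top y = b_k$. Two elementary facts hold at such a $y$ and drive the whole argument: first, $\varepsilon^\mu_k(y) = h^\mu(\rho_k(y)) = h^\mu(0) = 0$; second, the chosen $\zeta_k$ lies strictly on the inward side of the $k$-th hyperplane — it is a $\rho^*$-interior point, so $\mathbf{v}_k^\top\zeta_k \ge b_k + \rho^* > b_k$ — whence $\mathbf{v}_k^\top(\zeta_k - y) > 0$ and the weight $\lambda_k(y)$ in \eqref{eq:constrained_mu_lambda} is well defined.

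First I would project \eqref{eq:constrained_mu} onto $\mathbf{v}_k$ and peel off the $j=k$ term:
\[
  \mathbf{v}_k^\top\mu(y) \;=\; \mathbf{v}_k^\top\hat\mu(y)\;+\;\lambda_k(y)\,\mathbf{v}_k^\top(\zeta_k-y)\;+\;\sum_{j\ne k}\lambda_j(y)\,\mathbf{v}_k^\top(\zeta_j-y).
\]
The remaining sum is $\ge 0$: each $\zeta_j$ lies in $\mathcal{P}$, so $\mathbf{v}_k^\top\zeta_j \ge b_k = \mathbf{v}_k^\top y$, i.e.\ $\mathbf{v}_k^\top(\zeta_j-y)\ge 0$, while $\lambda_j(y)\ge 0$ by the positive part in \eqref{eq:constrained_mu_lambda}. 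This is exactly where convexity of $\mathcal{P}$ enters: an interior point sits on the inward side of \emph{every} face, so none of the foreign correction vectors can contribute a negative $\mathbf{v}_k$-component. It therefore suffices to show $\mathbf{v}_k^\top\hat\mu(y) + \lambda_k(y)\,\mathbf{v}_k^\top(\zeta_k-y) \ge 0$.

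This last point is the crux and uses the precise form of \eqref{eq:constrained_mu_lambda}. Since the denominator $\mathbf{v}_k^\top(\zeta_k-y)$ is strictly positive it cancels against the factor $\mathbf{v}_k^\top(\zeta_k-y)$, so that $\lambda_k(y)\,\mathbf{v}_k^\top(\zeta_k-y)$ equals the positive part of the slack-minus-drift $\varepsilon^\mu_k(y) - \mathbf{v}_k^\top\hat\mu(y)$, which at our boundary point (where $\varepsilon^\mu_k(y)=0$) is $\bigl(-\mathbf{v}_k^\top\hat\mu(y)\bigr)^+$; in words, the correction injects exactly the amount of inward drift needed to neutralise any outward component of $\hat\mu$. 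Hence
\[
  \mathbf{v}_k^\top\mu(y) \;\ge\; \mathbf{v}_k^\top\hat\mu(y) + \bigl(-\mathbf{v}_k^\top\hat\mu(y)\bigr)^+ \;=\; \max\bigl\{\mathbf{v}_k^\top\hat\mu(y),\,0\bigr\} \;\ge\; 0,
\]
for every $\hat\mu$, which is the assertion.

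I expect the substantive care here to be bookkeeping rather than conceptual: (i) ruling out that the corrections attached to the \emph{other} faces drag $\mu$ outward across face $k$, which reduces to the sign fact $\mathbf{v}_k^\top(\zeta_j-y)\ge0$ — convexity together with $\zeta_j\in\mathcal{P}$; (ii) ensuring $\mathbf{v}_k^\top(\zeta_k-y)>0$ on face $k$ so that the division in \eqref{eq:constrained_mu_lambda} is legitimate and the cancellation above is exact, which is why $\zeta_k$ is drawn from the strictly-interior set $\mathcal{P}^{\rho^*}$ with $\rho^*>0$; and (iii) checking that the numerator of \eqref{eq:constrained_mu_lambda} is oriented so that on the boundary $\lambda_k(y)\,\mathbf{v}_k^\top(\zeta_k-y)$ picks out $\bigl(-\mathbf{v}_k^\top\hat\mu(y)\bigr)^+$ rather than $\bigl(\mathbf{v}_k^\top\hat\mu(y)\bigr)^+$ — i.e.\ that it is the inward-deficit $\varepsilon^\mu_k(y)-\mathbf{v}_k^\top\hat\mu(y)$ of $\hat\mu$ over the slack that is clipped. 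With those points in place the claim is the one-line positive-part computation displayed above.
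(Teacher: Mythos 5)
Your argument is correct and is essentially the paper's own proof: you project \eqref{eq:constrained_mu} onto $\mathbf{v}_k$, discard the cross-face corrections (nonnegative because each $\zeta_j$ is a $\rho^*$-interior point, so $\mathbf{v}_k^\top(\zeta_j-y)\ge 0$, and $\lambda_j\ge 0$ by the positive part), and observe that on the $k$-th face the own-face correction contributes $\bigl(-\mathbf{v}_k^\top\hat\mu\bigr)^+$ because $\varepsilon^\mu_k=h^\mu(0)=0$, which is exactly the paper's chain $\mathbf{v}_k^\top\mu\ge \mathbf{v}_k^\top\hat\mu+\lambda_k\mathbf{v}_k^\top(\zeta_k-y)\ge-\varepsilon^\mu_k(y)=0$. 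Your caveat (iii) is well taken: read literally, the numerator $\mathbf{v}_k^\top\hat\mu-\varepsilon^\mu_k$ in \eqref{eq:constrained_mu_lambda} would make the claim fail (e.g.\ $\hat\mu=-\mathbf{v}_k$ at a boundary point gives $\lambda_k=0$ and $\mathbf{v}_k^\top\mu<0$), so both you and the paper's proof tacitly use the flipped orientation; the intended numerator is presumably $-\mathbf{v}_k^\top\hat\mu-\varepsilon^\mu_k$ rather than your $\varepsilon^\mu_k-\mathbf{v}_k^\top\hat\mu$ (so that the correction also vanishes for $\rho_k(y)\ge\rho^*$), but the two readings coincide when $\rho_k(y)=0$, so your conclusion is unaffected.
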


\begin{proof}
For each $k$ and any $\rho^*$-interior point $\zeta$, whenever $\rho_k(y) < \rho^*$, we have
\begin{equation}
    \mathbf{v}_k^\top (\zeta - y) = (\mathbf{v}_k^\top \zeta - b_k) - (\mathbf{v}_k^\top y - b_k) \geq \rho^* - \rho^* = 0.
    \label{eq:correction_property}
\end{equation}
Consequently, by the definition of $\lambda_k$ in (\ref{eq:constrained_mu_lambda}),
\begin{equation*}
    \mu = \hat{\mu} + \sum_{\rho_{k}(y) < \rho^*} \lambda_k (\zeta_k - y) + \sum_{\rho_{k}(y) \geq \rho^*} 0 \cdot (\zeta_k - y).
\end{equation*}

If $y \in \mathcal{P}^{\rho^*}$, we have $\rho_i(y) \geq \rho^*$ for all $k$, which leads to $\mu = \hat{\mu}$. Therefore, no direction correction vectors are imposed on the drift when $y$ is in the $\rho^*$-interior region of $\mathcal{P}$. For any $k$ such that $\rho_{k}(y) < \rho^*$,
\begin{equation*}
    \mathbf{v}_k^\top \mu = \mathbf{v}_k^\top \hat{\mu} + \sum_{\rho_{i}(y) < \rho^*} \lambda_i \mathbf{v}_k^\top (\zeta_i - y) \geq \mathbf{v}_k^\top \hat{\mu} + \lambda_k \mathbf{v}_k^\top (\zeta_k - y) \geq - \varepsilon^\mu_k(y),
\end{equation*}
where we use the property of $\rho^*$-interior points as shown in (\ref{eq:correction_property}) and the definition of $\lambda_k$ in (\ref{eq:constrained_mu_lambda}). It follows that, when $\rho_k(y)=0$, we have $\mathbf{v}_k^\top \mu \geq - \varepsilon^\mu_k(y) = 0$.

\end{proof}

\subsubsection*{Loss functional}

With the approximated likelihood function and the transformed drift and diffusion functions, we are able to re-define the model estimation problem from the constrained optimization problem (\ref{eq:inference}) to the following unconstrained optimization problem:
\begin{equation}
\begin{aligned}
    \min_{\hat{\mu}, \hat{\sigma}} J[\hat{\mu}, \hat{\sigma}] = & \sum_{i=0}^{L-1}
\left[
\ln |a(i)| + 
\frac{1}{\Delta t} \| y_{t_{i+1}} - y_{t_{i}} \|^2_{a(i)} + \| \mu(i) \|^2_{a(i)} \Delta t - 2 \left(\mu(i), y_{t_{i+1}} - y_{t_{i}}\right)_{a(i)} 
\right] \\
 & + \lambda \mathcal{R}(\hat{\mu}, \hat{\sigma}),
\end{aligned}
\label{eq:inference_cost_functional}
\end{equation}
where $\mathcal{R}(\hat{\mu}, \hat{\sigma})$ is a regularization term added to the loss function to encourage useful properties such as sparsity; $\lambda > 0$ is a tuning parameter which controls the importance of the regularization term. Here, $\mu$ and $a$ are constructed from $\hat{\mu}$ and $\hat{\sigma}$ by $\mu = \mathcal{G}_\mu [\hat{\mu}]$ and $a = \mathcal{G}_\sigma[\hat{\sigma}] (\mathcal{G}_\sigma[\hat{\sigma}])^\top$.

Finally, to prevent the likelihood becoming degenerate, it is convenient to ensure $\hat{\sigma}$ (and hence $a$) is of full rank everywhere within $\mathcal{P}$. We do this by parametrising $\hat{\mu}$, $\hat{\sigma}$ using $\phi^\theta : \mathbb{R}^{d} \rightarrow \mathbb{R}^{\frac{1}{2}d(d+3)}$, a neural network with weight parameters $\theta$, and let $\phi^\theta = \left( \phi^\theta_1, \dots, \phi^\theta_{\frac{1}{2}d(d+3)} \right)$. We construct
\begin{equation}
    \hat{\sigma} = \begin{bmatrix}
        \exp(\phi^\theta_1) & 0 & \cdots & 0 \\
        \phi^\theta_2 & \exp(\phi^\theta_3) & \cdots & 0 \\
        \vdots & \vdots & \ddots & \vdots \\
        \phi^\theta_{\frac{1}{2}d(d-1)+1} & \phi^\theta_{\frac{1}{2}d(d-1)+2} & \dots & \exp \left(\phi^\theta_{\frac{1}{2}d(d+1)} \right)
    \end{bmatrix},~
    \hat{\mu} = \begin{bmatrix}
        \phi^\theta_{\frac{1}{2}d(d+1)+1} \\
        \phi^\theta_{\frac{1}{2}d(d+1)+2} \\
        \vdots \\
        \phi^\theta_{\frac{1}{2}d(d+3)} \\
    \end{bmatrix}.
\label{eg:nn_dirft_diffusion}
\end{equation}
Here we choose a lower triangular form for $\hat{\sigma}$ without loss of generality, and its diagonal terms have been exponentiated to ensure positivity. In addition, the lower triangular structure of $\hat{\sigma}$ greatly simplifies the computation of the likelihood part of the loss functional $J$. Specifically,
\begin{equation*}
    \ln |a| = \ln \left|\mathbf{P}^\top \hat{\sigma} \hat{\sigma}^\top \mathbf{P}\right| = \ln \left( |\mathbf{P}|^2 \prod_{i=1}^d \hat{\sigma}_{ii}^2 \right) = 2 \ln |\mathbf{P}| + 2 \sum_{i=1}^d \phi^\theta_{\frac{1}{2}i(i+1)},
\end{equation*}
and for any $u,v\in\mathbb{R}^d$,
\begin{equation*}
    (u, v)_{a} = u^\top \left( \mathbf{P}^\top \hat{\sigma} \hat{\sigma}^\top \mathbf{P} \right)^{-1} v = \left[ \hat{\sigma}^{-1} \left(\mathbf{P}^\top\right)^{-1} u  \right]^\top \left[ \hat{\sigma}^{-1} \left(\mathbf{P}^\top\right)^{-1} v \right]
\end{equation*}
can be computed efficiently, as $\hat \sigma$ is a triangular matrix.

\section{Numerical results -- estimating the market model}
\label{sec:numerics}

We validate our modelling and inference approach with numerical experiments using data generated from a stochastic local volatility model (Jex, Henderson and Wang \cite{Jex1999}), which is a state-of-the-art model frequently used on equity and FX derivatives desks in investment banks. We will here use simulated data, as our aim is to study the effectiveness of these methods in a situation with a well understood ground truth model, where we are not limited by any lack of data. Application to market data has further market-specific complexities, which we shall explore in future work.

\subsection{Input simulation data}
\label{sec:inputsimulation}
The Heston-type Stochastic Local Volatility (Heston-SLV) model we use is given by:

\begin{equation}
    \begin{aligned}
        \diff S_u & = \left( r_u - q_u \right) S_u \diff u + \mathcal{L}_{t}(u, S_u) \sqrt{\nu_t} S_u \diff W_u^S, \\
        \diff \nu_u & = \kappa (\theta - \nu_u) \diff u + \sigma \sqrt{\nu_u} \diff W_u^\nu, \\
        \diff \langle W_u^S, W_u^\nu \rangle & = \rho \diff u,
        \qquad \qquad \qquad \qquad \qquad \qquad \qquad u \in (t,T^*).
    \end{aligned}
    \label{eq:model_hestonslv}
\end{equation}

This model is popular in industry for its realistic dynamic properties through the stochastic volatility component $\sqrt{\nu_t}$, and its ability to exactly reprice vanilla options through fitting a local volatility component, the so-called \emph{leverage function} $\mathcal{L}$. Here, the subscript $t$ of $\mathcal{L}$ indicates that calibration of $\mathcal{L}$ is performed from derivative prices observed at $t$. The calibration typically involves an initial optimisation of the fit over the Heston parameters, followed by a calibration of the leverage function using Markovian projections (see Piterbarg \cite{Piterbarg2006} for a review of early approaches and the references in Bain, Mariapragassam and Reisinger \cite{Bain2021} for more recent contributions).

We obtain the parameters and the leverage function used for generating synthetic input data in our tests by calibrating a Heston-SLV model to Bloomberg OTC USDBRL option price data on 28th October, 2008, using the QuantLib library \cite{quantlib}.
The Heston parameters are listed in Table \ref{tab:params_values},
and we plot the leverage function $\mathcal{L}$ in Figure \ref{fig:leverage_surface}. 

We simulate discrete time series data for $S$, $\nu$ and call options with $N$ fixed values of moneyness and maturity, over a time period of $(L+1)\Delta t$ (see Table \ref{tab:params_values} for the values used). These values are chosen to broadly emulate the observation, at moderate frequency  (of the order of every 10 minutes), of a single year's prices for the underlying asset and liquid European options book.
Here, a single path of $(S,\nu)$ is sampled by an Euler--Maruyama approximation to \eqref{eq:model_hestonslv}
on $[0,T^*]$\footnote{Since the leverage function is only defined on a bounded interval of $S$, we choose a path where the trajectory of $S$ is always within the interval in order to avoid extrapolating the input leverage function. The trajectory of $S$ can be found on the top right graph of Figure \ref{fig:simulation_trajectory}.}, and, for each $t$, option values at $(S_t,\nu_t)$
are produced using the calibrated Heston parameters and the initial portion, i.e. $\mathcal{L}_t(u,\cdot) = \mathcal{L}_0(u-t,\cdot)$, $u \in (t,T^*)$.\footnote{This is consistent with the re-calibration approach in financial practice, which lends justification to our assumption of a stationary model for each fixed maturity.}

This step is only required to generate the simulated input data, but not when real market data are used.
The SDE \eqref{eq:model_hestonslv} is formulated in the risk-neutral measure used for pricing options for each $t$, started at $(S_t,\nu_t)$ along the chosen trajectory. We also use \eqref{eq:model_hestonslv} for generating the $(S_t,\nu_t)$ trajectory on $[0,T]$. Here, any drift could be used, and we choose 0 for simplicity, but note that this does not imply simulation under the risk-neutral measure.\footnote{In fact, selection of a path within a certain range implicitly introduces a drift, and the use of the leverage function translated forward at different $t$, means that the simulation measure differs from the pricing measure.}

\begin{table}[!ht]
\centering
\footnotesize
\begin{tabular}{cccccccccccc}
\toprule
\multicolumn{1}{l}{} & \multicolumn{8}{c}{Heston-SLV model}         & \multicolumn{3}{c}{Simulation}       \\ \cmidrule(lr){1-1} \cmidrule(lr){2-9} \cmidrule(lr){10-12}
Parameter            & $r_t$ & $q_t$ & $S_0$ & $\nu_0$ & $\theta$ & $\kappa$  & $\sigma$ & $\rho$ & $L$ &  $\Delta t$ & $N$  \\
Value                & 0 & 0 & 100 & 0.0083  & 0.0085    & 8.3 & 0.32     & -0.42   & 10000 & 0.0001 & 46  \\
\bottomrule
\end{tabular}
\caption{Parameter values}
\label{tab:params_values}
\end{table}

\begin{figure}[!ht]
    \centering
    \includegraphics[scale=.66]{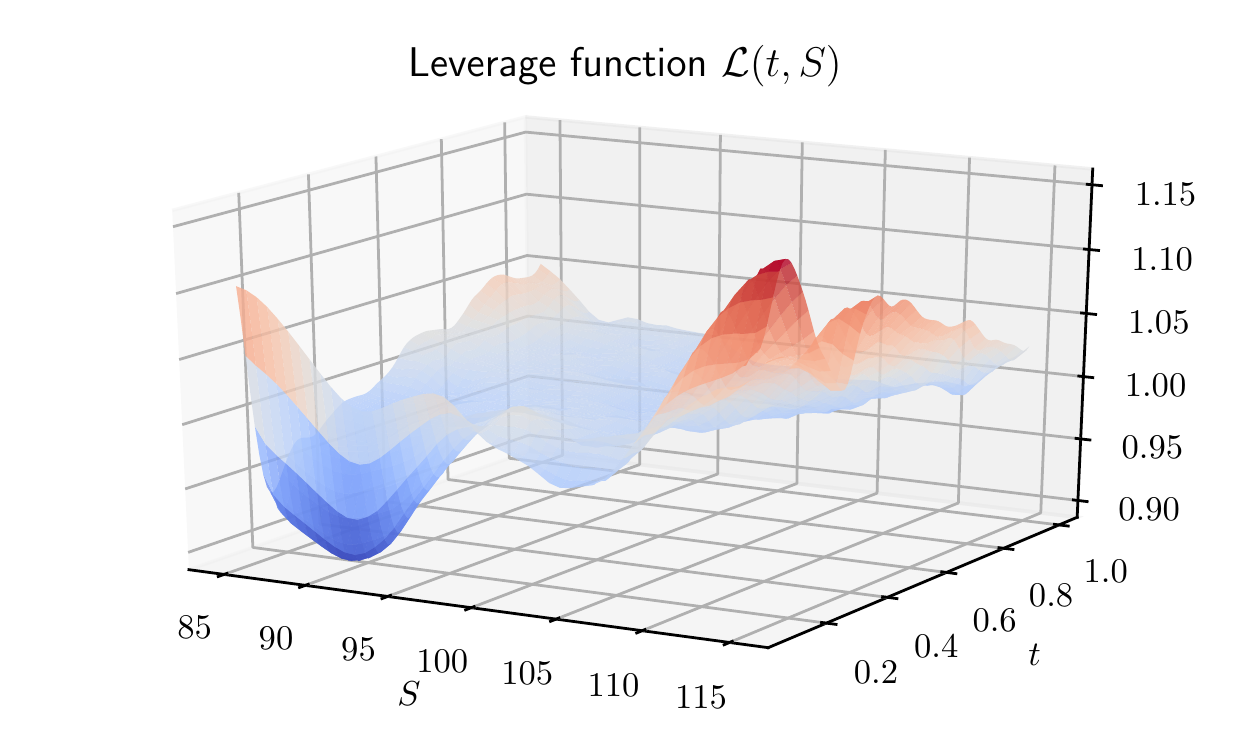}
    \caption{The leverage function $\mathcal{L}(t,S)$ of the Heston-SLV model used for simulation.}
    \label{fig:leverage_surface}
\end{figure}

\subsection{Implementation details of the model estimation strategy}
\label{sec:estimation_strategy}

\paragraph*{Approximation of $z_t$ in \eqref{eq:mpor_linear_system}.}

We need $\mathbf{Z} = (z_t)_{t}$, as inputs for Algorithm \ref{alg:decode_factor} to decode factors. However, neither $\gamma_t$ (see \eqref{eq:market_model} and thereafter) nor the partial derivatives of call option prices, as seen in (\ref{eq:mpor_linear_system}), are readily available for computing $z_t$.

To estimate $\gamma_t$, we fit an \textit{initial} model\footnote{We use the neural SDE estimation method described in Section \ref{sec:neural_sde_polytope}, where the neural network architecture is defined in \eqref{eq:nn_architecture_S}.} for $S$ with state variable $\tilde{\xi} = (S, \xi^{(0)})$, where $\xi^{(0)}$ is the factor projected from the first $d^{(0)}$ principal components of the input call prices. We use a large $d^{(0)}$, such that the reconstruction error is small. This initial calibration gives us a candidate volatility $\hat{\gamma}_t^{(0)}$. 

Next, we need to evaluate the partial derivatives $\partial \tilde{c} / \partial \tau$, $\partial \tilde{c} / \partial m$ and $\partial^2 \tilde{c} / \partial m^2$ on the option lattice $\mathcal{L}_{\mathrm{liq}} =\{(\tau_j, m_j)\}_{j=1,\dots,N}$. We interpolate these values using $C^{1,2}$ basis functions\footnote{One could also use shape-preserving interpolation or convex regression to ensure monotonicity and convexity of the surface. This is likely to induce significant additional computational costs, as the interpolation method would depend on the state of the risk factors $\xi$.}, from which we can approximate the required derivatives. In Figure \ref{fig:call_derivatives}, we show the approximated partial derivatives at $\{(\tau_j, m_j)\}_{j=1,\dots,N}$ for some fixed $t$. With $\hat{\gamma}_t^{(0)}$ and the approximated partial derivatives, we can compute $z_t$.

\begin{figure}[!ht]
    \centering
    \includegraphics[scale=.76]{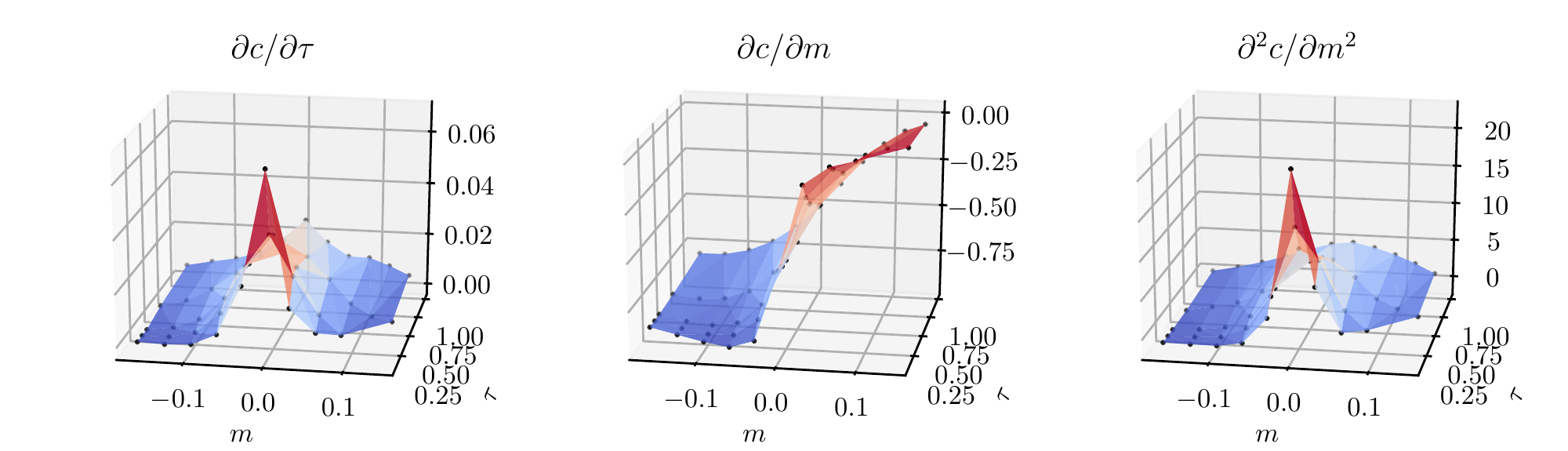}
    \caption{Partial derivatives of the normalised call price surface approximated by interpolation at $t=0$.}
    \label{fig:call_derivatives}
\end{figure}

\paragraph*{Decorrelation and normalisation of factors.}

Though the price basis decoded from Algorithm \ref{alg:decode_factor} is orthogonal by construction, i.e.\ $\mathbf{G} \mathbf{G}^\top = \mathbf{I}_d$, the corresponding factors $\bm\Xi$ could still have high correlations. To enhance the explanatory capability of factors, we decorrelate the factor data by considering the full principal component decomposition $\bm\Xi^{(1)} = \bm\Xi \mathbf{H}$, where columns of $\mathbf{H}$ are eigenvectors of $\bm\Xi^\top \bm\Xi$.

In addition, data of different factors could distribute over very different scales. To improve numerical stability of model training, we further normalise the decorrelated factor data $\bm\Xi^{(1)}$ as $\bm\Xi^{(2)} = \bm\Xi^{(1)} \bm\Lambda^{-1}$, where
\begin{equation*}
    \bm\Lambda = \textrm{diag} \left( \lambda_1, \dots, \lambda_d \right), ~ \text{with } \lambda_j = \lambda_0 \left[\max_l \left(\bm\Xi^{(1)}_{lj}\right) - \min_l\left(\bm\Xi^{(1)}_{lj}\right) \right].
\end{equation*}
Here $\lambda_0$ is a postive constant; we choose $\lambda_0 = 10$ in our numerical example so that the min-max ranges of all normalised factor data are $1/\lambda_0 = 0.1$. Consequently, the factor representation of call price data as described in (\ref{eq:represent_C}) can be re-written as
\begin{equation*}
    \mathbf{C} = \mathbf{1}_{L+1} \otimes \mathbf{G}_0 + \bm\Xi^{(2)} \mathbf{G}^{(2)} + \bm\Upsilon, ~\text{where } \mathbf{G}^{(2)} = \bm\Lambda \mathbf{H}^\top \mathbf{G}.
\end{equation*}
We will build models for the decorrelated and normalised factors. In the following sections, for notational simplicity, we let $\bm\Xi \leftarrow \bm\Xi^{(2)}$ and $\mathbf{G} \leftarrow \mathbf{G}^{(2)}$. 

\paragraph*{Parametrisation of the drift $\mu_t$.}

To improve the efficiency of the maximum likelihood estimator of the drift $\mu_t$ for the model of $\xi$, we provide a baseline estimator of the drift and use neural network only as a modification to the baseline model. Specifically, rather than constructing $\hat{\mu}$ with the neural network $\phi^\theta$ directly as in (\ref{eg:nn_dirft_diffusion}), we define, for some baseline drift function $\bar{\mu} (\tilde{\xi}) : \mathbb{R}^{d+1} \rightarrow \mathbb{R}^d$,
\begin{equation*}
    \hat{\mu} = \textrm{diag} \left( \phi^\theta_{\frac{1}{2}d(d+1) + 1}, \dots,  \phi^\theta_{\frac{1}{2}d(d+3)} \right)
    \times \bar{\mu}.
\end{equation*}

A sensible baseline drift model can be built based on the assumption that the input option data do not permit dynamic arbitrage, i.e. the HJM drift restriction (\ref{eq:mpor_linear_system}) holds, and that the market price of risk $\psi_t$ is small. In particular, assuming zero market price of risk $\psi_t \equiv 0$, the drift restriction (\ref{eq:mpor_linear_system}) implies\footnote{The normalisation of factors results in $\mathbf{G} \mathbf{G}^\top = \bm\Lambda^2$, rather than an identity matrix.} that $\mu_t = \bm\Lambda^{-2} \mathbf{G} z_t$. In the following tests, we first perform a regression on the data $\{(\tilde{\xi}_{t_l}, \bm\Lambda^{-2} \mathbf{G} z_{t_l})\}_l$ to obtain a baseline drift model $\tilde{\xi} \mapsto \bar{\mu} (\tilde{\xi})$.

\subsection{Factor construction and no-arbitrage boundaries}

We apply Algorithm \ref{alg:decode_factor} to decode factors from the normalised call prices that are generated from the Heston-SLV model. To see how well the decoded factors reconstruct the input data, we examine the following metrics:
\begin{itemize}
\setlength\itemsep{1pt}
\item Mean absolute percentage error (MAPE):
\begin{equation*}
    \text{MAPE} = \frac{1}{(L+1)N} \sum_{l=0}^L \sum_{j=1}^N \frac{\left| \tilde{c}_{t_l}(\tau_j, m_j) - G_{0j} - \sum_{i=1}^d G_{ij} \xi_{it_l} \right|}{\tilde{c}_{t_l}(\tau_j, m_j)}.
\end{equation*}
\item Proportion of dynamic arbitrage (PDA):
\begin{equation*}
    \text{PDA} = 1 - \frac{\tr \left(\bm\Lambda^{-1} \mathbf{G}\widetilde{\mathbf{Z}}^\top \widetilde{\mathbf{Z}}\mathbf{G}^\top \bm\Lambda^{-1} \right)}{\tr \left(\widetilde{\mathbf{Z}}^\top \widetilde{\mathbf{Z}} \right)},
\end{equation*}
where $\widetilde{\mathbf{Z}} = (\tilde{z}_{lj}) \in \mathbb{R}^{(L+1) \times N}$, with $\tilde{z}_{lj} = z_{t_l j} - \frac{1}{L+1} \sum_{k} z_{t_k j}$. This metric gives the fraction of variance of $z_t$ that is unexplained by the constructed factors.
\item Proportion of statically arbitrageable samples (PSAS):
\begin{equation*}
    \text{PSAS} = 1 - \frac{ \sum_{l=0}^L \bm{1}_{\{\mathbf{A} \mathbf{G}^\top \xi_{t_l} \geq \mathbf{b} \}}}{L+1}.
\end{equation*}
\end{itemize}

We show the three metrics for a few combinations of factors in Table \ref{tab:factor_metrics}. Using only two factors, i.e.\ one dynamic arbitrage factor and one static arbitrage factor\footnote{The minimization problem \eqref{eq:factor_opt} in the factor decoding algorithm has a discrete objective function, where gradient-based optimisation methods do not apply. We use Py-BOBYQA (Cartis, Fiala, Marteau and Roberts \cite{cartis2019}), a derivative-free optimization solver, to find its global optimum heuristically.} (last row in the table), can represent the whole collection of call prices with reasonable accuracy. Note that the use of a static arbitrage factor is significant in reducing the number of violations of the static arbitrage constraints, as evidenced by the corresponding reduction in PSAS. 

\begin{table}[!ht]
    \centering
    \footnotesize
    \begin{tabular}{lccc}
        \toprule
        Factors & MAPE & PDA & PSAS  \\
        \cmidrule(lr){1-1} \cmidrule(lr){2-4}
        Dynamic arb. & $24.37\%$ & $3.51\%$ & $60.67\%$  \\
        Dynamic arb. + Statistical acc. & $5.11\%$ & $3.21\%$ & $28.11\%$ \\
        Dynamic arb. + Static arb. & $3.85\%$ & $3.04\%$ & $0.37\%$ \\
        \bottomrule
    \end{tabular}
    \caption{MAPE, PDA and PSAS metrics when including different combinations of factors.}
    \label{tab:factor_metrics}
\end{table}

As a proof-of-concept, we will restrict our attention to this simple three-factor model (i.e.\ two vectors $\xi$, in addition to the stock price), as it also allows us to demonstrate qualitative features of the model easily. We plot the price basis functions of these two factors, denoted as $G_1$ and $G_2$, as well as $G_0$, the constant term of $\tilde{c}$, in Figure \ref{fig:Gs}. The points in the liquid lattice (in $(\tau, m)$ coordinates)\footnote{These lattice points are chosen to mimic a typical liquid range of EURUSD options traded at CME, as seen in Figure \ref{fig:optionLattice}.} are also shown on this plot. Here the real-valued functions $G_1$ and $G_2$ are obtained by interpolating the price basis vectors $\mathbf{G}_1$ and $\mathbf{G}_2$, and $G_0$ is obtained by interpolating the normalised call prices averaged over time. Since the minimal option expiry in the input data is nonzero, there is no guarantee that the terminal payoff convergence condition (\ref{eq:initial_G}) will be satisfied by simply extrapolating the input data along expiry. Hence, we introduce a few artificial data points at $\tau=0$ for $G_i$, where $i=0,1,2$, that satisfy (\ref{eq:initial_G}) before interpolation.

\begin{figure}[!ht]
    \centering
    \includegraphics[scale=.76]{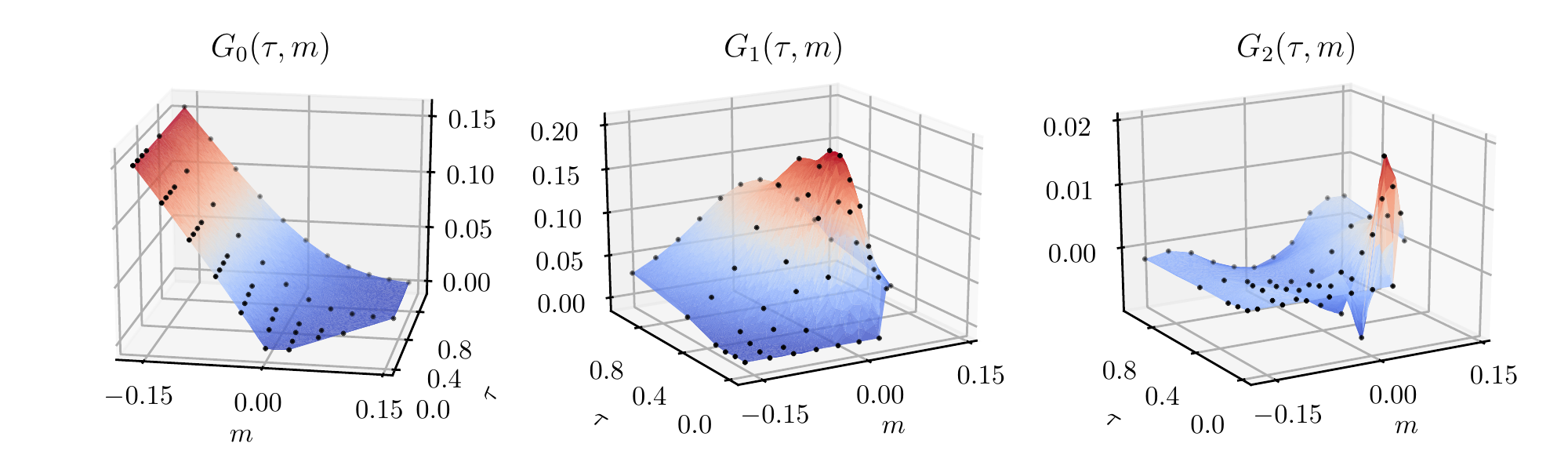}
    \caption{Price basis functions of the normalised call price surface.}
    \label{fig:Gs}
\end{figure}

Given these factors, we use the linear programming method (Caron, McDonald and Ponic \cite{caron1989}) to eliminate redundant constraints in the system $\mathbf{A} \mathbf{G}^\top \xi \geq \mathbf{b}$, which is the projection of the original no-arbitrage constraints, constructed in price space, to the $\mathbb{R}^2$ factor space. This results in only 7 constraints, which we indicate as red dashed lines in Figure \ref{fig:factors}. The convex polygonal domain surrounded by these constraints (light green area) is the statically arbitrage-free zone for the factors, that is, provided the factor process remains in this region, we are guaranteed to have no static arbitrage in the reconstructed call prices on our liquid lattice $\mathcal{L}_{\text{lip}}$. 

\begin{figure}[!ht]
    \centering
    \includegraphics[scale=.66]{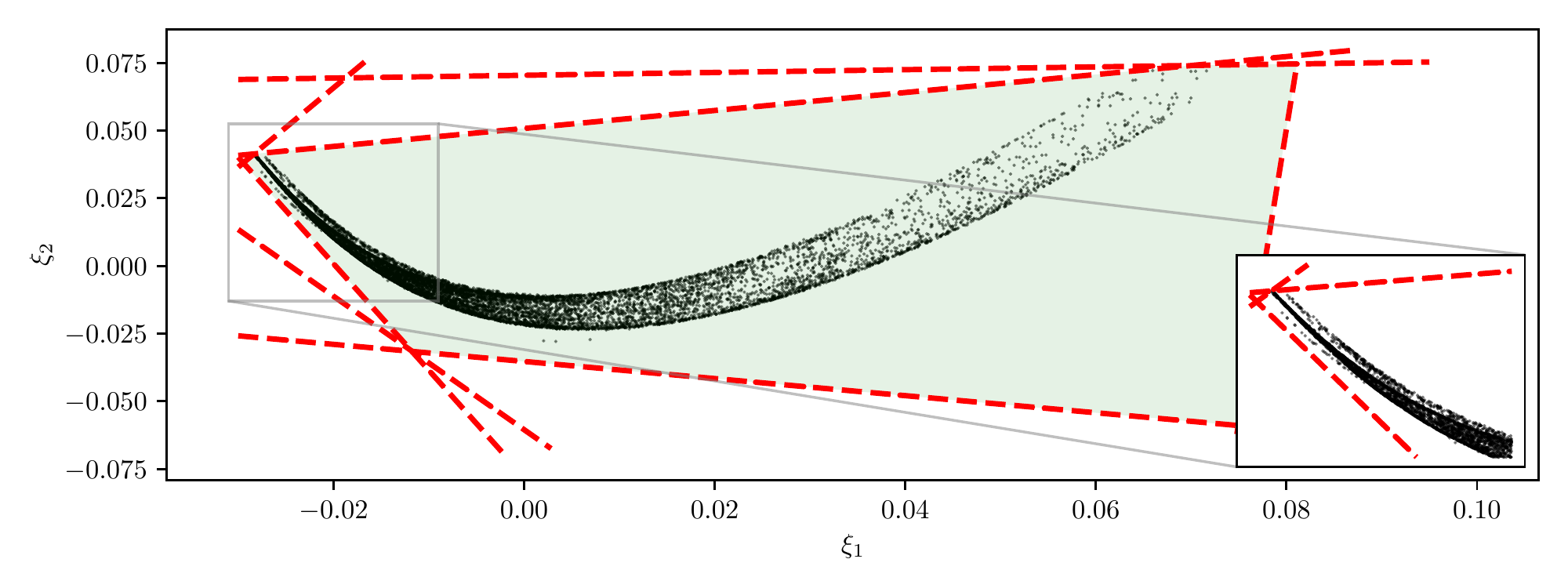}
    \caption{Trajectory (black dots) of the $\mathbb{R}^2$ factors and the corresponding static arbitrage constraints (red dashed lines) projected to the $\mathbb{R}^2$ factor space.}
    \label{fig:factors}
\end{figure}

The data we use for training the market model should be in the interior of the arbitrage-free factor space, otherwise we either have data exhibiting model-free arbitrage, or (if points are on the boundary) the covariance matrix $a$ will be singular, causing a failure in computing the likelihood part of the loss function $J$. Therefore, we truncate the factor data by removing those observations outside or on the boundary of the no-arbitrage region\footnote{We are solving a supervised learning problem with the input-label data as $(x_l, y_l) = (\tilde{\xi}_{t_l}, \xi_{t_{l+1}} - \xi_{t_{l}})$ for $l=1,\dots,L$. Hence, if $\xi_{t_l}$ is an arbitrageable data point, we need to remove two samples $(\tilde{\xi}_{t_{l-1}}, \xi_{t_{l}} - \xi_{t_{l-1}})$ and $(\tilde{\xi}_{t_l}, \xi_{t_{l+1}} - \xi_{t_{l}})$ from the training data. For our training data, this leads to removing 49 samples given that we have observed 37 arbitrageable data points (after factor reconstruction) in our simulation.}.

\subsection{Neural network training results}
\label{sec:training_results}

We apply the method in Section \ref{sec:neural_sde_polytope} to estimate the dynamics of the decoded factors. Specifically, we let $y = \xi$ and $\mathbf{V} = \mathbf{AG}^\top$ and, by slight abuse of notation, $\mu$ and $\sigma$ take $\tilde{\xi} = (S, \xi)$ as their argument rather than $\xi$. This does not alter the estimation method, except that the input layer of the neural network shall consist of $d+1$, instead of $d$, neurons. 

For the neural-SDE model of $\xi$, we use a simple architecture\footnote{The robustness of the estimated neural network is assessed via sensitivity analysis in Appendix \ref{sec:nn_sensitivity}.} that is a composition of fully-connected layers and activation functions in the following orders:
\begin{equation}
\phi^\theta = \mathcal{F}_{d+1} \circ \mathcal{A}_\text{ReLU} \circ \mathcal{F}_{256} \circ \mathcal{A}_\text{ReLU} \circ \mathcal{F}_{256} \circ \mathcal{A}_\text{ReLU} \circ \mathcal{F}_{256},
\label{eq:nn_architecture_xi}
\end{equation}
where $\mathcal{F}_x$ is a fully connected layer, or affine transformation, with $x$ units, and $\mathcal{A}_\text{xxx}$ is an activation function. Each layer $\mathcal{F}$ is parametric, but we omit the parameters for notational simplicity. We use a smaller network $\phi^{S,\theta}: \mathbb{R}^{d+1} \rightarrow \mathbb{R}^2$ for the neural-SDE model of $S$:
\begin{equation}
    \phi^{S,\theta} = \mathcal{F}_{d+1} \circ \mathcal{A}_\text{ReLU} \circ \mathcal{F}_{128} \circ \mathcal{A}_\text{ReLU} \circ \mathcal{F}_{128} \circ \mathcal{A}_\text{ReLU} \circ \mathcal{F}_{128}.
    \label{eq:nn_architecture_S}
\end{equation}
We train the model for $S$ separately from that for $\xi$. In addition, to mitigate over-fitting problems, we train both networks with $50\%$ sparsity, meaning that the $50\%$ smallest weights are pruned to zero. We implement and train our model using the standard tools within the Tensorflow \cite{tensorflow2015-whitepaper} environment.

In Figure \ref{fig:loss_history}, we show the evolution of training losses and validation losses\footnote{The first $90\%$ of the dataset is used for training and the last $10\%$ is reserved as validation data.} over epochs during the training of the models for $S$ and $\xi$, respectively. For the model of $S$, the loss value quickly drops during the first 10 epochs, and slowly declines and converges within 100 epochs. The loss function for the model of $\xi$ is much more complex than that for the model of $S$ (due to the higher dimensionality and transformations near boundaries), so we train it for significantly more epochs. Similar to the loss history for the model of $S$, the loss value for the model of $\xi$ has a rapid decline for the first a few epochs, and then gradually converges.

\begin{figure}[!ht]
    \centering
    \includegraphics[scale=.66]{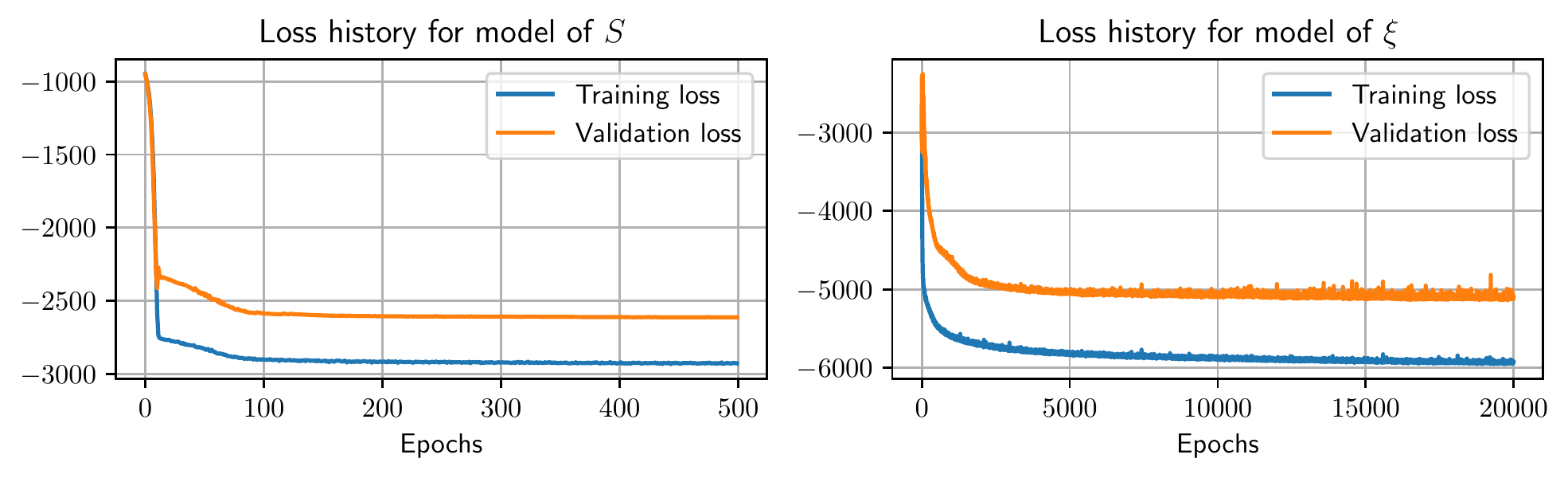}
    \caption{Evolution of training losses and validation losses.}
    \label{fig:loss_history}
\end{figure}

In addition to observing the convergence of the loss functions, we need to demonstrate that the model has been trained sensibly compared with the ground truth Heston-SLV model, and that the learnt model is capable of generating data similar to the input data. We consider the following in-sample and out-of-sample tests.

\subsection{In-sample test}

We assess how well the learnt model recovers the ground truth Heston-SLV model (\ref{eq:model_hestonslv}). Specifically, we study $\sigma_t^S = \mathcal{L}(t,S_t) \sqrt{\nu_t} S_t$ (the diffusion coefficient of $S$), $\mu_t = \kappa(\theta - \nu_t)$ (the drift coefficient of $\nu$) and $\sigma_t = \eta \sigma \sqrt{\nu_t}$ (the diffusion coefficient of $\nu$), and compare values deduced from the input data against those generated by the learned model.

\paragraph*{Model for $S$.}

The neural network for the model of $S$ is written as $\phi^{S,\theta} = (\phi^{S, \theta}_\mu, \phi^{S, \theta}_\sigma)$, where $\phi^{S, \theta}_\mu, \phi^{S, \theta}_\sigma: \mathbb{R}^{d+1} \rightarrow \mathbb{R}$ give the approximation to the drift and diffusion coefficients of $S$, respectively. Using in-sample data $\{\tilde{\xi}_{t_l} = (S_{t_l}, \xi_{t_l})\}_{l=0,\dots,L}$, we can compute $\{\phi_\mu^{S, \theta} (\tilde{\xi}_{t_l})\}_{l}$ and $\{\phi_\sigma^{S, \theta} (\tilde{\xi}_{t_l})\}_{l}$.

It is well known that the estimation of drift terms in an SDE, from observation of a single time series over a short period, is prone to significant error. We hope that the drift estimates will be close to $0$, the ground truth drift of $S$. However, as seen in Figure \ref{fig:test_model_S_mu}, there are some deviations from $0$, as well as some polynomial relations with the factors. Nevertheless, these deviations from $0$ are small when compared to the size of the estimated diffusion, as will be seen later, and the polynomial relations with regards to the factors are possibly also artifacts introduced by selecting a path where $S$ does not leave the domain of the leverage function.

The diffusion estimates $\{\phi_\sigma^{S, \theta} (\tilde{\xi}_{t_l})\}_{l}$ approximate $\{ \sigma^S_{t_l} \}_{l}$, which is generated from the ground truth Heston-SLV model. As we see in Figure \ref{fig:test_model_S}, the volatility of $S$ has been essentially correctly captured, with some noise and a slight upward bias for very low volatilities. We observe a $4.96\%$ mean absolute percentage error in this volatility estimate, computed as $\frac{1}{L+1} \sum_{l} | \phi_\sigma^{S, \theta} (\tilde{\xi}_{t_l}) -  \sigma^S_{t_l}| / \sigma^S_{t_l}$.

We could also use these methods to perform an out-of-sample calibration check, provided that we can simulate $\tilde{\xi}_t$ realistically -- this is indeed verified in Section \ref{sec:outsample_test}; here we present in-sample performance to avoid any potential biases introduced by simulation methods. 

\begin{figure}[!ht]
    \centering
    \begin{subfigure}[b]{.68\textwidth}
    \centering
        \includegraphics[scale=.66]{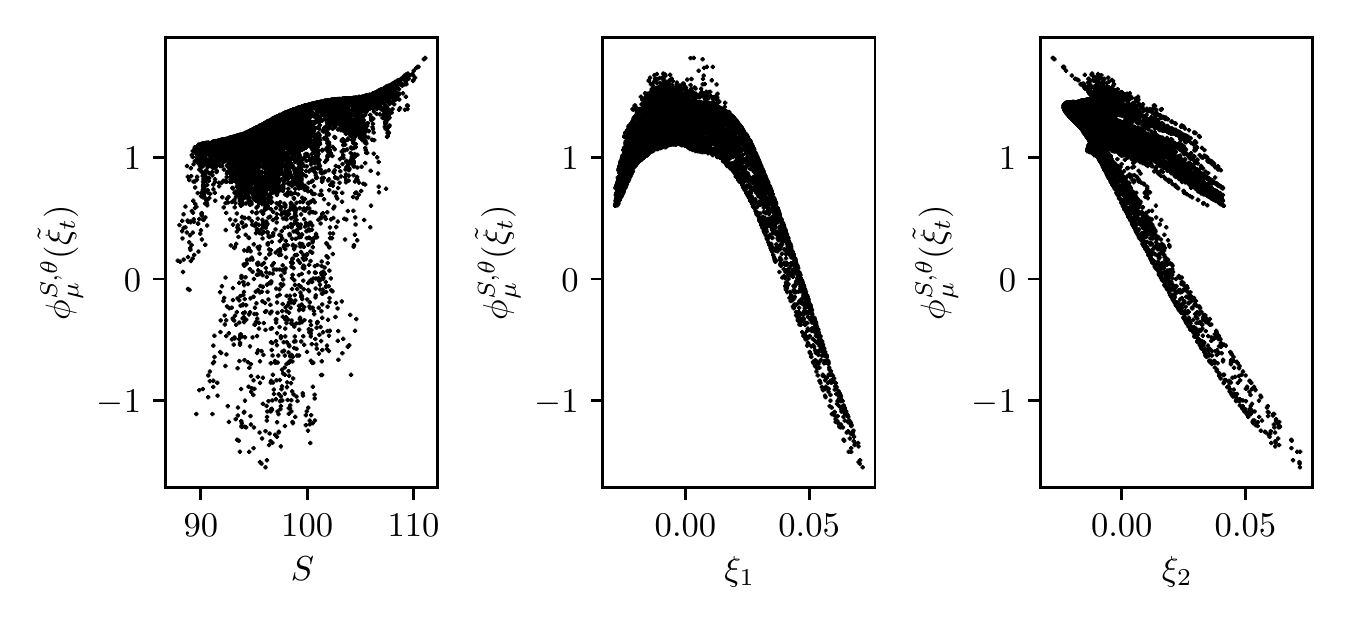}
        \caption{Scattergrams of $\phi_\sigma^{S, \mu} (\tilde{\xi}_{t_l})$ against $S$, $\xi_1$ and $\xi_2$. The estimated drift $\phi_\sigma^{S, \mu} (\tilde{\xi}_{t_l})$ has values around $0$, the ground-truth drift.}
        \label{fig:test_model_S_mu}
    \end{subfigure}
    \hfill
    \begin{subfigure}[b]{.31\textwidth}
    \centering
        \includegraphics[scale=.66]{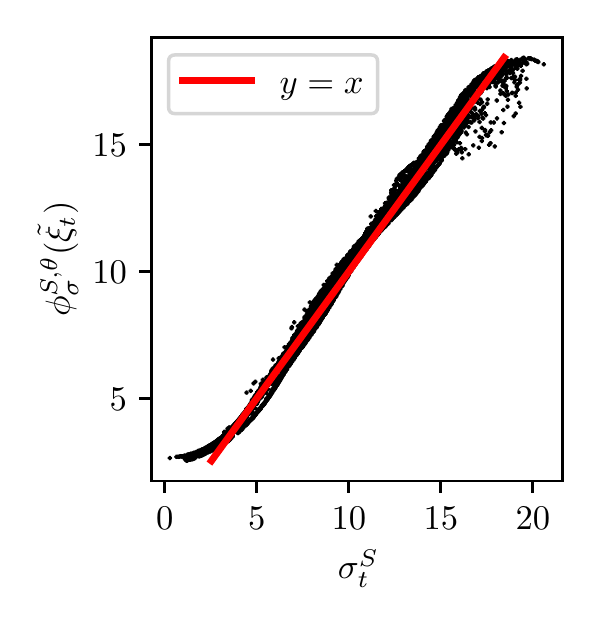}
        \caption{Scattergram of $\phi_\sigma^{S, \theta} (\tilde{\xi}_{t_l})$ against $\sigma^S_{t}$.}
        \label{fig:test_model_S}
    \end{subfigure}
    \caption{Estimated drift and diffusion coefficients for $S$.}
\end{figure}

\paragraph*{Model for $\xi$.}

It is non-trivial to derive the ground-truth model for the factor $\xi$, given that the input data are generated from a Heston-SLV model. Nevertheless, the first calibrated factor $\{\xi_{1t_l}\}_{l=0,\dots,L}$ has a very strong linear relationship with the Heston-SLV simulated variance process $\{\nu_{t_l}\}_{l=0,\dots,L}$, as shown in Figure \ref{fig:xi1nu}.  Using this, an approximated ground-truth model for $\xi_1$ can be used to benchmark the learnt neural network model.

\begin{figure}[!ht]
    \centering
    \centering
        \includegraphics[scale=.66]{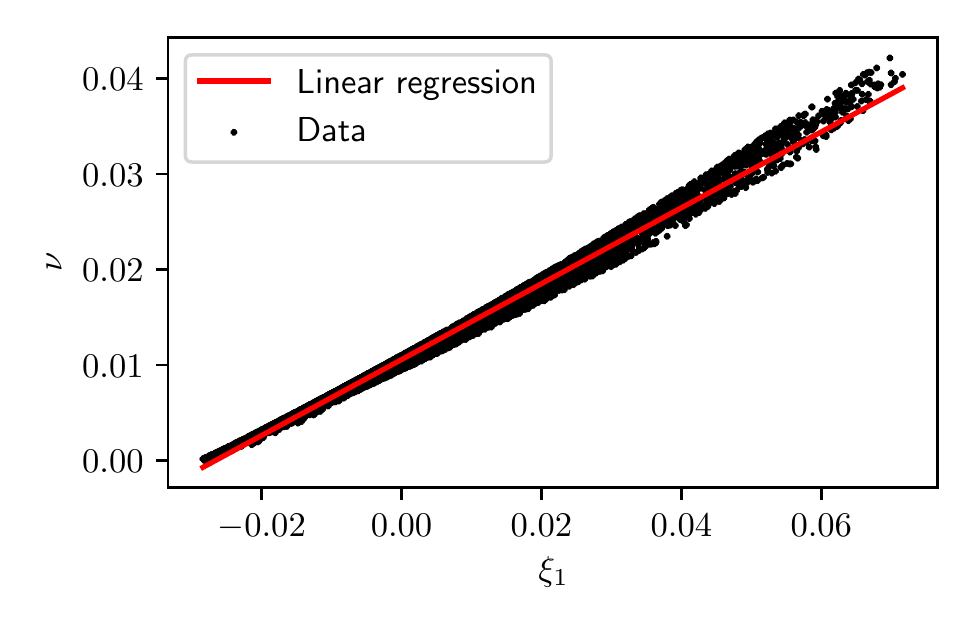}
        \caption{The linear relationship between $\xi_1$ and $\nu$.}
        \label{fig:xi1nu}
\end{figure}

Assuming a simple linear model $\nu = \beta_0 + \beta_1 \xi_1$ with $\beta_1 \neq 0$, we apply It\^{o}'s lemma to the SDE for $\nu$ in \eqref{eq:model_hestonslv} to get
\begin{equation}
    \diff \xi_{1t} = \mu^{\xi_1}(\xi_{1t}) \diff t + \sigma^{\xi_1}(\xi_{1t}) \diff W_t^\nu,
    ~\mu^{\xi_1}(x) = \frac{\kappa (\theta - \beta_0 -\beta_1 x)}{\beta_1}, ~
    \sigma^{\xi_1}(x) = \frac{\eta \sigma \sqrt{\beta_0 + \beta_1 x}}{\beta_1} .
    \label{ref:model_xi1}
\end{equation}

Let $\phi^{\xi_1,\theta}_{\mu}: \mathbb{R}^{d+1} \rightarrow \mathbb{R}$ and $\phi^{\xi_1,\theta}_{\sigma}: \mathbb{R}^{d+1} \rightarrow \mathbb{R}$ be the components of the neural network model that approximate the drift and diffusion functions of $\xi_1$, respectively. Using in-sample data $\{\tilde{\xi}_{t_l} = (S_{t_l}, \xi_{t_l})\}_{l=0,\dots,L}$, we can compute $\{\phi_\mu^{\xi_1, \theta} (\tilde{\xi}_{t_l})\}_{l}$ and $\{\phi_\sigma^{\xi_1, \theta} (\tilde{\xi}_{t_l})\}_{l}$, which are supposed to approximate $\{ \mu^{\xi_1}(\xi_{1t_l}) \}_{l}$ and $\{ \sigma^{\xi_1}(\xi_{1t_l}) \}_{l}$, which were generated from the ground truth model (\ref{ref:model_xi1}). We plot and compare these data against $\{\xi_{1t_l}\}_{l}$ in Figure \ref{fig:test_model_xi1}. We observe that the neural network model has captured the ground truth model well for $\xi_1 < 0.04$, where there are rich data for training. The square-root behavior in the diffusion of $\nu$ has been particularly well captured, as has the decreasing drift, despite the presence of noise.

\begin{figure}[!ht]
    \centering
    \includegraphics[scale=.66]{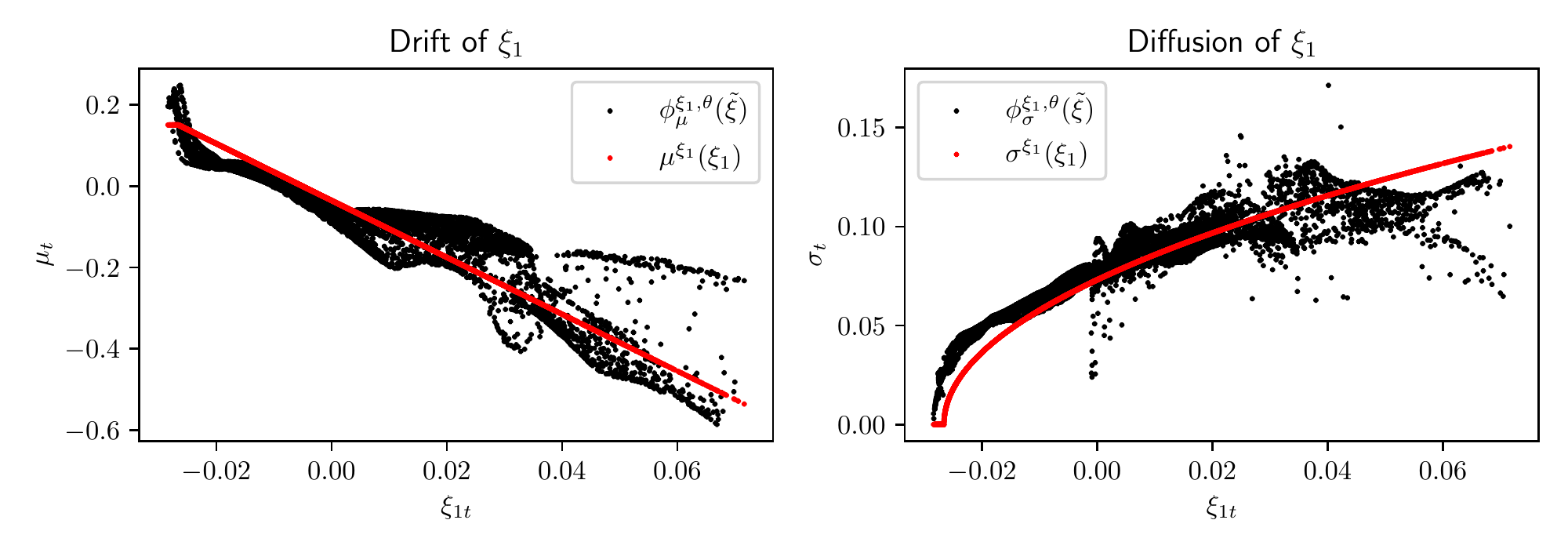}
    \caption{Comparison of the estimated drift and diffusion functions and the (approximated) ground-truth drift and diffusion functions for $\xi_1$.}
    \label{fig:test_model_xi1}
\end{figure}

\paragraph*{Market price of risk}

As discussed in Section \ref{sec:absence_dynamic_arbitrage}, the no-arbitrage HJM-type drift restriction implies an over-determined linear system \eqref{eq:mpor_linear_system}. Given the already estimated stock volatility $\gamma_t$ and the approximated partial derivatives of $\tilde{c}$, we compute $z_t$ and find an approximate solution to \eqref{eq:mpor_linear_system} using the ordinary least squares method,
\begin{equation}
    \hat{\psi}_t = \left( \sigma_t^\top \mathbf{G} \mathbf{G}^\top \sigma_t \right)^{-1} \sigma_t^\top \mathbf{G} \left( \mathbf{G}^\top \mu_t - z_t \right).
\label{eq:mpr_calculation}
\end{equation}
The violation to the drift restriction can be measured by
\begin{equation}
    \chi_t = \left\|  \mathbf{G}^\top \mu_t - z_t - \mathbf{G}^\top \sigma_t \hat{\psi}_t \right\|_2.
    \label{eq:dynamic_arbitrage_term}
\end{equation}

We estimate the market price of risk $\psi_t$ from the learnt drift and diffusion coefficients using (\ref{eq:mpr_calculation}) for all in-sample data points. In Figure \ref{fig:mpr}, we show the heatmaps of the violation of the HJM-type drift restriction $\chi$, as defined in (\ref{eq:dynamic_arbitrage_term}), and the size of the market price of risk $\|\hat{\psi}\|_2$. The calibrated market price of risk $\hat\psi$ initially appears large, given we have simulated under a risk-neutral measure; however, the use of a factor representation, forward translation of the leverage function, interpolation of prices, and the selection of a training path where $S$ does not leave the region of the leverage function (as discussed in Section \ref{sec:inputsimulation}) may lead to a non-vanishing market price of risk in the underlying ground truth model. We also note, comparing with Figures \ref{fig:factors}, \ref{fig:simulation_trajectory} and \ref{fig:simulation_dist}, that the most common values of $\xi_1$ are slightly negative, coinciding with the region where the market price of risk is low.

\begin{figure}[!ht]
    \centering
    \includegraphics[scale=.66]{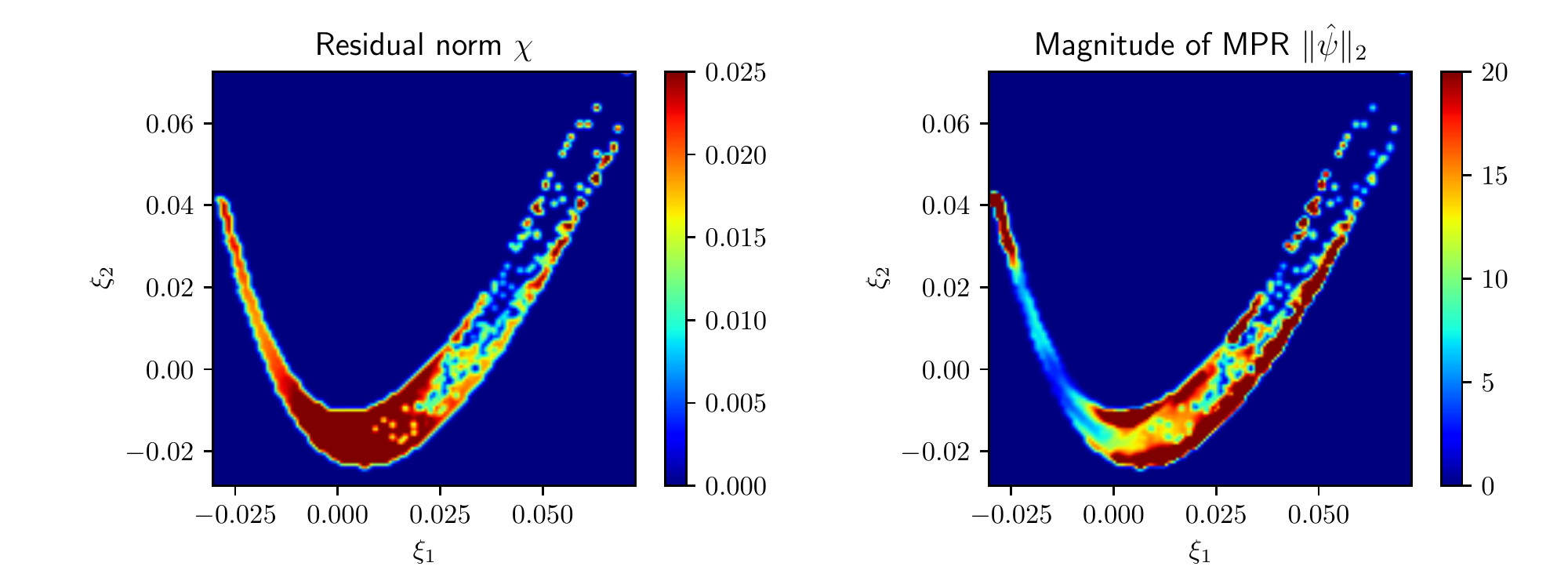}
    \caption{Statistics of the estimated market price of risk.}
    \label{fig:mpr}
\end{figure}

\subsection{Out-of-sample simulation test}
\label{sec:outsample_test}

Many potential applications of our models involve the simulation of $S$ and $\xi$, which can then be used for pricing or risk management. We take the trained model and simulate sample paths using a tamed\footnote{We include the taming method simply to ensure stability of simulations if our neural networks were to produce unusually large values for drifts or volatilities. In our examples, the tamed Euler scheme has very similar performance to the classical Euler--Maruyama scheme. } Euler scheme (see Hutzenthaler, Jentzen and Kloeden \cite{Hutzenthaler2012} or Szpruch and Zhang \cite{Szpruch2018}), which is given by
\begin{equation}
    \begin{aligned}
        S_{t + \Delta t} & = S_t + \frac{\mu^S (\tilde{\xi}_t)}{1 + |\mu^S(\tilde{\xi}_t)| \sqrt{\Delta t}} \Delta t + \frac{\sigma^S(\tilde{\xi}_t)}{1 + |\sigma^S(\tilde{\xi}_t) | \sqrt{\Delta t}} \left( W_{0, t+\Delta t} - W_{0,t} \right), \\
        \xi_{t + \Delta t} & = \xi_t + \frac{\mu (\tilde{\xi}_t)}{1 + |\mu(\tilde{\xi}_t)| \sqrt{\Delta t}} \Delta t + \frac{\sigma(\tilde{\xi}_t)}{1 + \|\sigma(\tilde{\xi}_t)\| \sqrt{\Delta t}} \left( W_{t+\Delta t} - W_t \right).
    \end{aligned}
\label{eq:tamed_euler}
\end{equation}
The values of $\mu^S$, $\sigma^S$, $\mu$ and $\sigma$ are approximated by the trained neural networks (along with the diffusion scaling $\mathcal{G}_\sigma$ and drift corrections $\mathcal{G}_\mu$).

To demonstrate the learnt model's ability to simulate time series data that are alike the input data, we show an independent sample path of $10000$ steps for $\tilde{\xi} = (S, \xi)$ in Figure \ref{fig:simulation_trajectory}, right.\footnote{There is no pathwise similarity here, as our learned factor model is based on a larger number of Brownian motions than the ground truth model.}
In the scatter plot on the left of Figure \ref{fig:simulation_trajectory}, we see that the dependence structure between $\xi_1$ and $\xi_2$ is well captured. In addition, the simulated factors remain within the no-arbitrage region, due to the hard constraints imposed on the drift and diffusion functions.

As expected, when the factor process is close to any no-arbitrage boundary, its diffusion component that is normal to the boundary will tend to vanish, while its drift will point inwards the no-arbitrage region. To illustrate this, we sample a few simulated factor data, and visualize their drift and diffusion coefficients in Figure \ref{fig:simulation_drift}.

\begin{figure}[!ht]
    \centering
    \includegraphics[scale=.66]{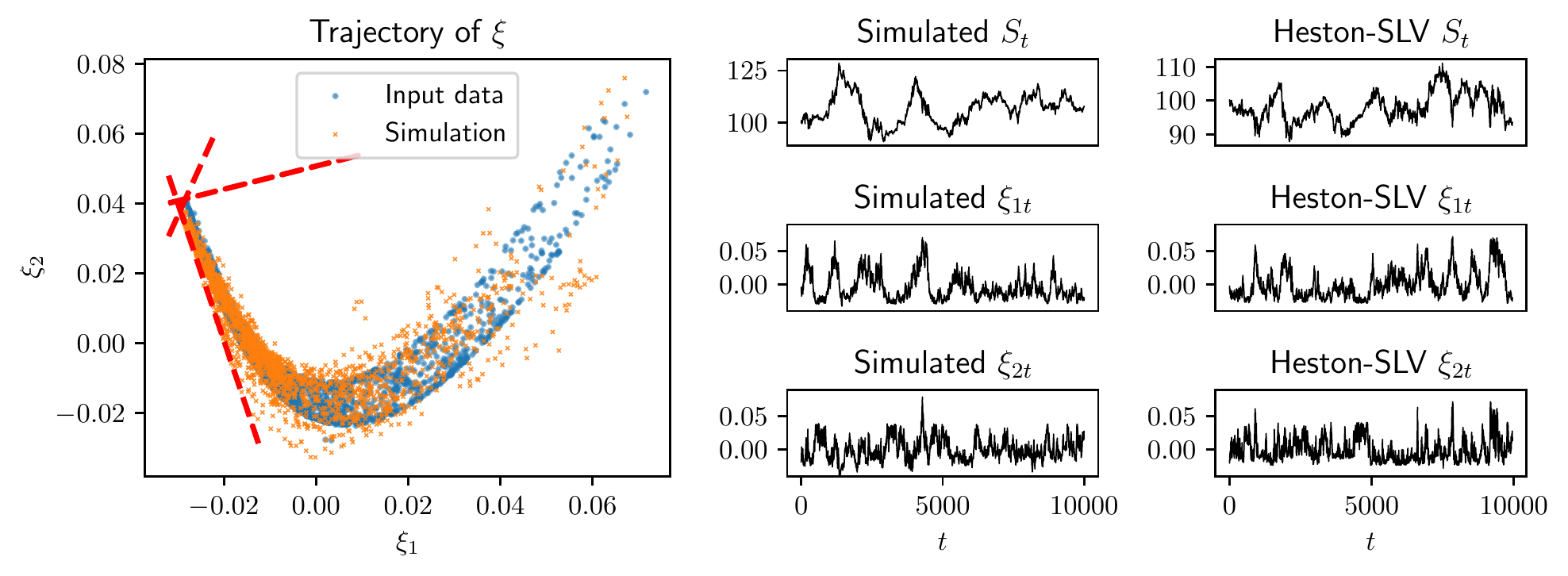}
    \caption{Simulation of $S$ and $\xi$ from the learnt neural network model, compared with the Heston-SLV model generated data.}
    \label{fig:simulation_trajectory}
\end{figure}

\begin{figure}[!ht]
    \centering
    \includegraphics[scale=.66]{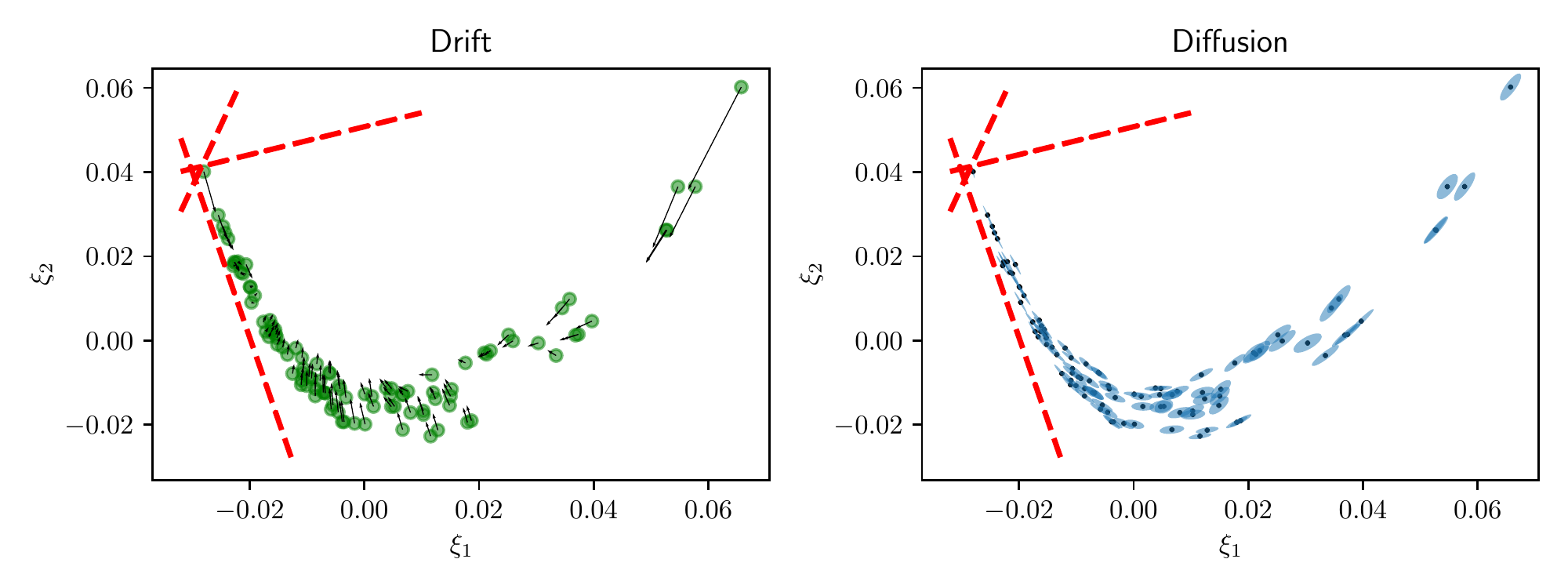}
    \caption{Drift vectors (arrows on the left plot) and diffusion matrices (ellipses representing the principal components of the diffusion on the right plot) for some randomly selected factor data points.}
    \label{fig:simulation_drift}
\end{figure}

In Figure \ref{fig:simulation_dist}, we compare the empirical distributions of the simulated log-return of $S$, $\xi_1$ and $\xi_2$ with those of the input data. We see that the learnt model is capable of generating realistic long time series data that is similar to the input data. More simulation results for implied volatilities can be found in Appendix \ref{sec:iv_simulation}, including the simulated paths of implied volatilities for a variety of option specifications.

\begin{figure}[!ht]
    \centering
    \includegraphics[scale=.66]{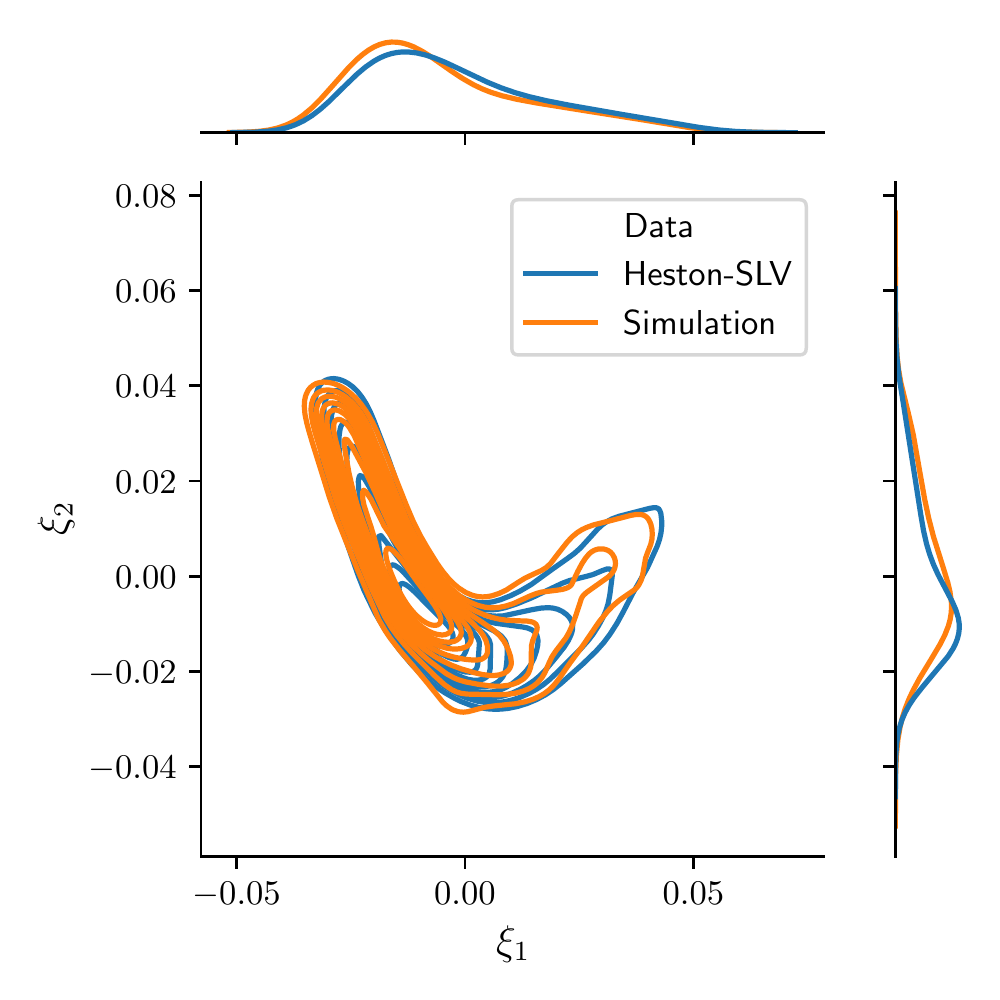}
    \caption{Comparison of the marginal and joint distributions of the simulated $\xi_1$ and $\xi_2$ with the real distributions (generated from the Heston-SLV model).}
    \label{fig:simulation_dist}
\end{figure}

\paragraph*{VIX simulation}

In addition, we follow the CBOE VIX calculation methodology \cite{vix} to compute a \textit{volatility index} from the simulated option prices. Specifically, VIX is a linear combination of OTM call and put option prices, which can be further written as a linear combination of call prices only, provided that put-call parity holds under no-arbitrage. 

Suppose the time $t$ VIX index can be written as $\textrm{VIX}_t := \mathbf{h}^\top \mathbf{c}_t$ for some constant vector $\mathbf{h} \in \mathbb{R}^N$. Given simulated factors $\xi_t^s$, we first reconstruct prices $\mathbf{c}_t^s$ and then compute VIX as
\begin{equation}
	\textrm{VIX}_t^s = \mathbf{h}^\top \mathbf{c}_t^s = \mathbf{h}^\top  \mathbf{G}_0 +  \mathbf{h}^\top \mathbf{G}^\top \xi_t^s.
	\label{eq:vix_simulation}
\end{equation}
In Figure \ref{fig:s_vix1}, we plot the marginal and joint distributions of the log-return of $S$ and the VIX index, for both the input data and the simulation data. While the distributions of the log-return of $S$ are very close, our simulation gives a VIX of much lower kurtosis. This error is principally due to our factor representation, which does not align with the weights involved in the VIX calculation. 

To improve the  performance of our existing model for  VIX simulation, we seek to minimise the impact of the factor reconstruction error. Consider fitting the linear regression model
\begin{equation}
	\textrm{VIX}_t = \beta_0 + \bm\beta^\top\xi_t + \varepsilon_t,
	\label{eq:vix_lr}
\end{equation}
Using this relationship, we can compute VIX directly from the simulated factors $\xi_t^s$ using
\begin{equation}
	\textrm{VIX}_t^r = \hat{\beta}_0 + \hat{\bm\beta} \xi_t^s.
	\label{eq:vix_simulation_regress}
\end{equation}
We plot the distributions for this factor-regression based VIX in Figure \ref{fig:s_vix2}. Both its marginal distribution and the joint distribution with the log-return of $S$ look reasonably similar to those of the Heston-SLV input data. This demonstrates that our model is capturing the dependence structure between the volatility index and the underlying $S$. In addition, the simulated time series of VIX and log-return of $S$ are plotted in Figure \ref{fig:simulation_vix}. We see several occurrences of volatility clustering in the return series, which always coincide with high VIX values.

\begin{figure}[!ht]
    \centering
    \begin{subfigure}[b]{.49\textwidth}
    \centering
        \includegraphics[scale=.65]{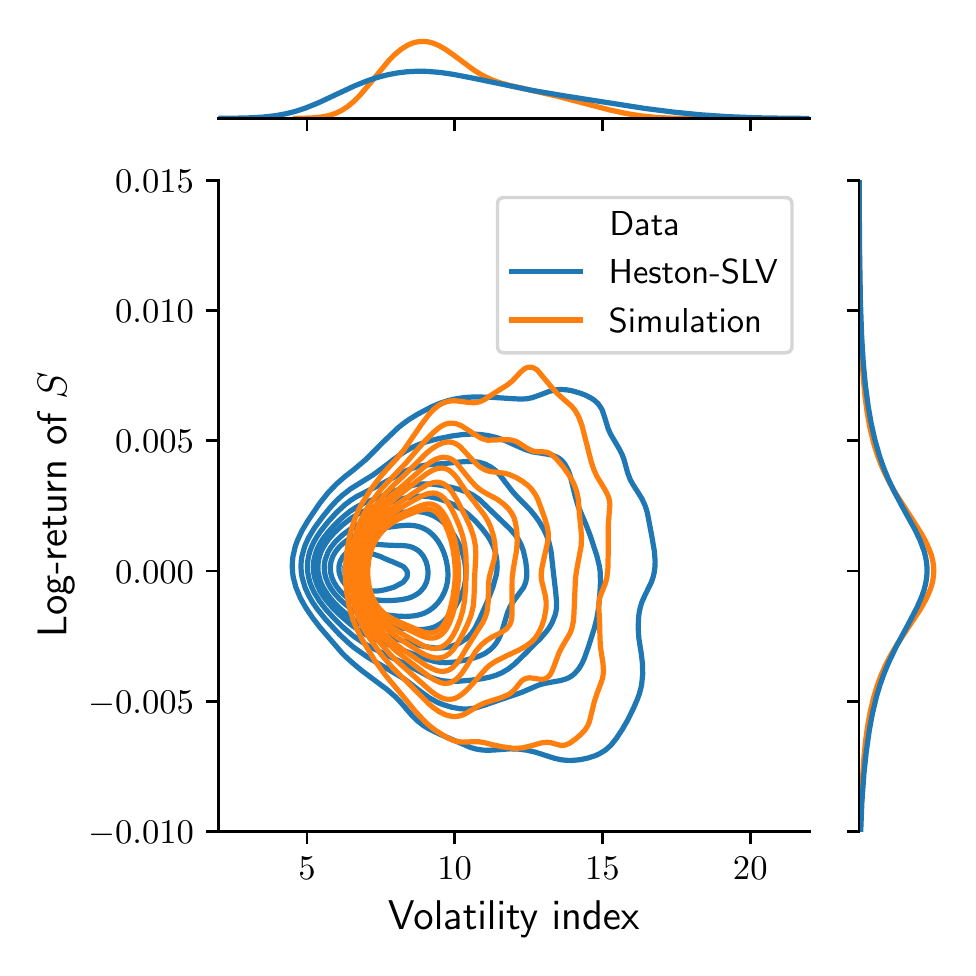}
        \caption{Heston-SLV VIX v.s. VIX calculated from the simulated option prices.}
        \label{fig:s_vix1}
    \end{subfigure}
    \hfill
    \begin{subfigure}[b]{.49\textwidth}
    \centering
        \includegraphics[scale=.65]{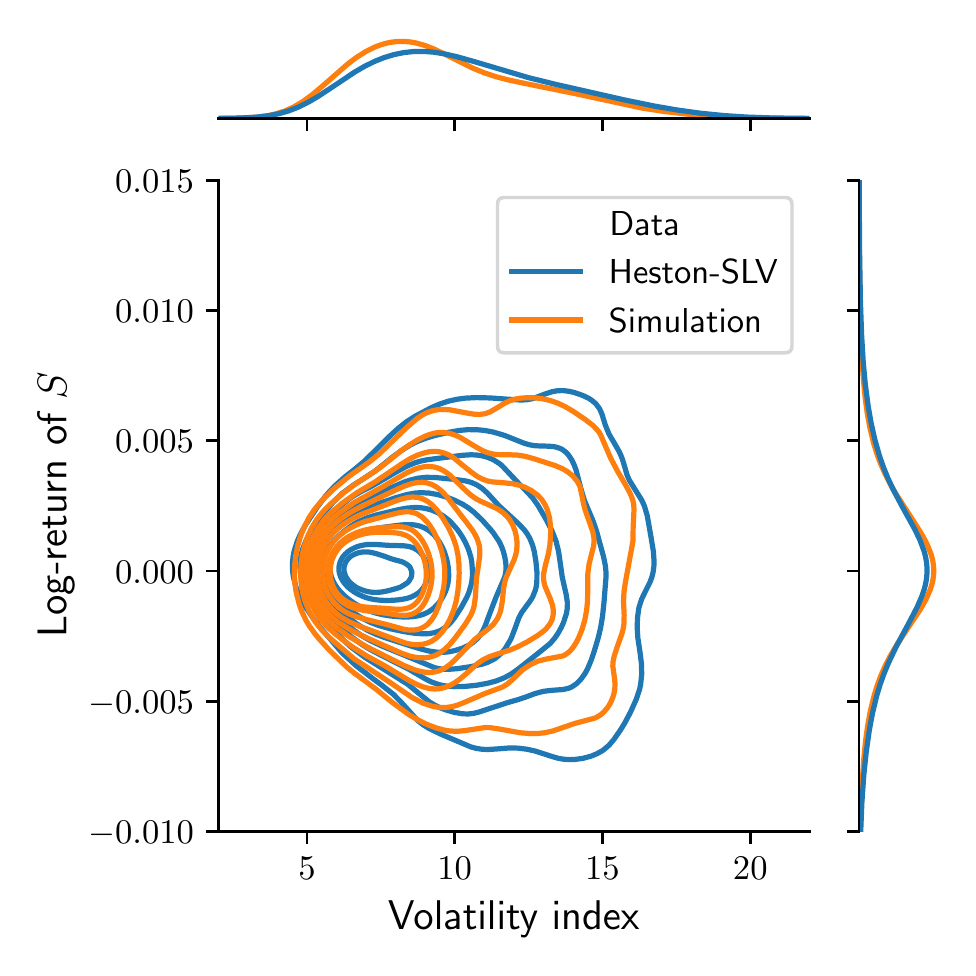}
        \caption{Heston-SLV VIX v.s. $\textrm{VIX}^r$ calculated from the regression model on the simulated factors.}
        \label{fig:s_vix2}
    \end{subfigure}
    \caption{Joint distribution of the log-return of $S$ and the VIX-like volatility index.}
    \label{fig:s_vix}
\end{figure}

\begin{figure}[!ht]
    \centering
    \includegraphics[scale=.66]{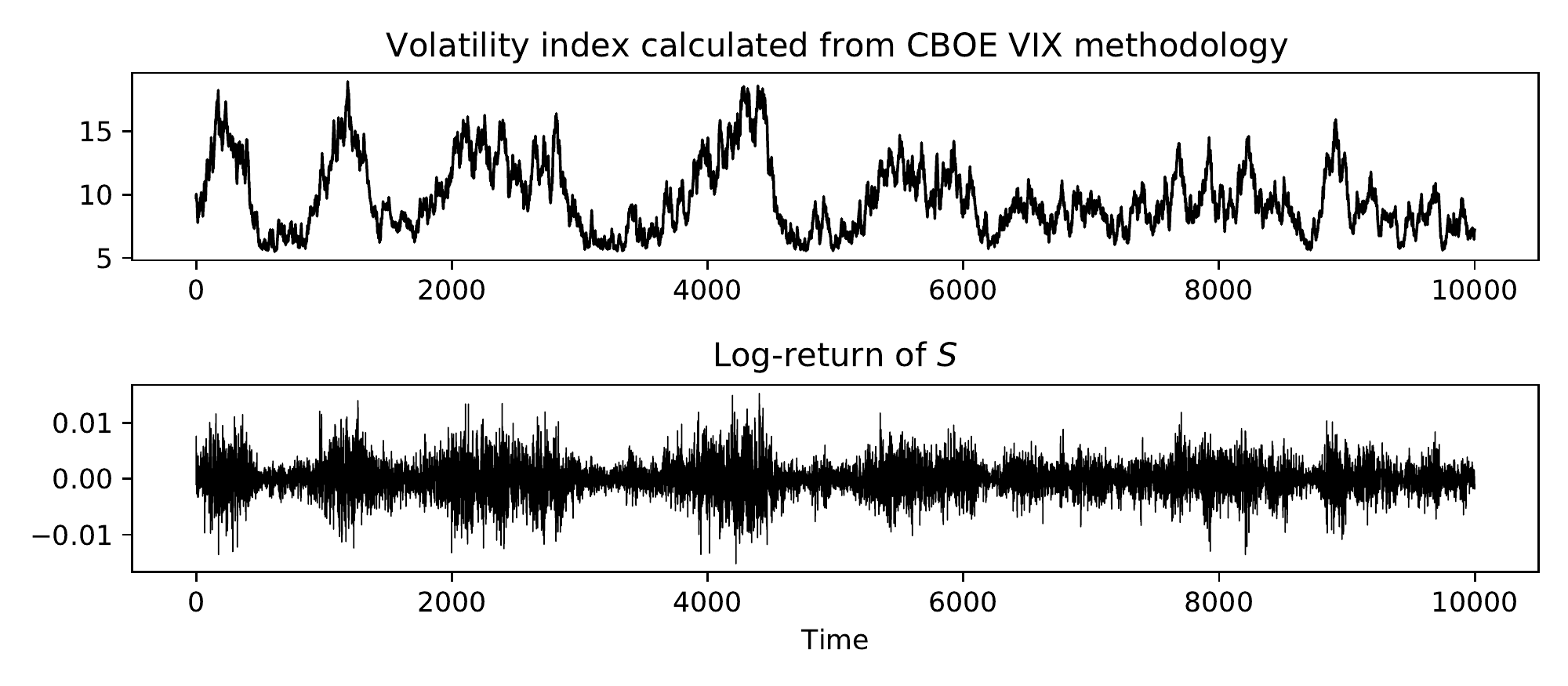}
    \caption{Simulated time series of $\textrm{VIX}^r$, together with log-returns of the underlying stock.}
    \label{fig:simulation_vix}
\end{figure}

\begin{remark}
The nontrivial improvement in VIX simulation from (\ref{eq:vix_simulation}) to (\ref{eq:vix_simulation_regress}) indicates that the decoded factors and their corresponding price basis do not optimise the representation of VIX. If one values a more accurate replication of VIX, an additional ``VIX factor'' could be included as a price basis vector in $\mathbf{G}$, with the objective of minimising the VIX reconstruction error and included in the model. (As VIX is here given by a linear combination of option prices, it can be directly included as a basis vector.)
\end{remark}

\subsection{Concluding remarks on the numerical example}

We have chosen to use three factors $(S, \xi_1, \xi_2)$ to represent a whole collection of call prices, while the ground-truth Heston-SLV model is a two-factor model with states $(S, \nu)$. Our model adopts an extra factor to compensate for the restrictive linear representation (\ref{eq:factr_rep_linear}) of normalised call prices in terms of latent factors. In fact, we see that $\xi_1$ has a strong linear relationship with $\nu$ (Figure \ref{fig:xi1nu}), where most of the residuals from this linearity has been captured by $\xi_2$. In addition, as seen in Figure \ref{fig:simulation_trajectory}, there seems to be a dominating polynomial relationship between $\xi_2$ and $\xi_1$. For higher dimensional models, this polynomial relationship (which is forced by the use of linear factors) seems likely to cause the main difficulty in calibration, and appropriate methods to address this may be valuable. We develop an $\mathbb{R}^{13}$ model and assess its simulation performance in Appendix \ref{sec:highdim_mdl}.

While our focus here is on understanding the presence of true arbitrage in our model, in practice it may also be of concern if the model exhibits statistical arbitrage, that is, where a combination of options moves with very high probability in a given direction. This would correspond to the estimated market price of risk process $\hat\psi$  providing a perfect solution (so \eqref{eq:dynamic_arbitrage_term} is zero) but taking very large values. In practice, therefore, an additional regularization term to discourage large  $\hat\psi$ may also be useful (for example the addition of $\|\hat\psi_t\|_2$ to \eqref{eq:dynamic_arbitrage_term}).

\appendix
\small
\section{HJM drift restriction for dynamic arbitrage}

\label{sec:dynamic_arbitrage}

In order to understand and avoid dynamic arbitrage, we will find a condition on the drifts $\alpha$, $\mu$ in (\ref{eq:market_model}) such that the resulting discounted prices $\bar c_t(T,K) := C_t(T,K)/D_t(T) = D_0(t) C_t(T,K)/D_0(T)$ are martingales under some equivalent measure.

From the price transformation (\ref{eq:call_price_transformation}) and the latent factor representation (\ref{eq:factr_rep_linear}), we can write $\bar{c}_t(T,K) = F_t(T) g (t, T-t, M(K;F_t(T)), \xi_t)$. For some fixed $\tau$ and $m$, we write $g(t,y)=g(t,\tau,m,y)$ for notational simplicity. By It\^{o}'s lemma, we have
\begin{equation}
\begin{aligned}
    \diff \tilde{c}_t = \diff g(t, \xi_t) & = \left[ \pderiv{g}{t}(t,\xi_t) + \sum_{i=1}^d \mu_{i,t} \pderiv{g}{y_i}(t,\xi_t) + \frac{1}{2} \sum_{i=1}^d \sum_{k=1}^d \sum_{j=1}^d \sigma_{ij,t} \sigma_{kj,t} \pderiv{g}{y_i, y_k} (t, \xi_t) \right] \diff t \\
     & ~~~~ + \sum_{i=1}^d \sum_{j=1}^d \sigma_{ij,t} \pderiv{g}{y_i}(t,\xi_t) \diff W_{j,t} \\
     & =: \tilde{\eta}_t \diff t + \tilde{\nu}_t \diff W_t.
\end{aligned}
\end{equation}

Let $F_t(T) = F(t, S_t; r_t, q_t)$. The dynamics of the $T$-maturity forward price are then
\begin{equation*}
    \diff F_t = \diff F(t,S_t) = (\alpha_t-r_t) F_t \diff t + \gamma_t F_t d W_{0,t}.
\end{equation*}
Since moneyness is a function of forward price, we use It\^{o}'s lemma again to derive the dynamics of moneyness. With $M(f) = M(K;f)$ for some fixed strike $K$, we have
\begin{equation}
    \diff m_t = \diff M(F_t) = \left[(\alpha_t - r_t)F_t\pderiv{M}{f}(F_t) + \frac{1}{2} \gamma_t^2 F_t^2 \pderiv[2]{M}{f} (F_t) \right] \diff t + \gamma_t F_t\pderiv{M}{f}(F_t) \diff W_{0,t}.
\label{eq:dynamics_moneyness}
\end{equation}

Now, for some fixed expiry $T$ and strike $K$, $\hat{c}_t(T,K) = \tilde{c}_t(\tau_t, m_t)$ where $m_t$ has the dynamics given by (\ref{eq:dynamics_moneyness}) and $\tau_t = T-t$. Using a generalised It\^{o}--Wentzell formula \cite{Venttsel1965},
\begin{equation}
\begin{aligned}
    \diff \hat{c}_t(T,K) & = \diff \tilde{c}_t(T-t, m_t) \\
    & = \Bigg\{ \tilde{\eta}_t(T-t, m_t) - \pderiv{g}{\tau} (t, T-t, m_t) + \frac{1}{2} \left[\gamma_t F_t \pderiv{M}{f} (F_t)\right]^2 \pderiv[2]{g}{m}(t, T-t, m_t) \\
    & ~~~~~~ + \left[ (\alpha_t - r_t)F_t\pderiv{M}{f}(F_t) + \frac{1}{2} \gamma_t^2 F_t^2 \pderiv[2]{M}{f} (F_t) \right] \pderiv{g}{m} (t,T-t, m_t)  \Bigg\} \diff t \\
    & ~~~~ + \gamma_t F_t\pderiv{M}{f}(F_t) \pderiv{g}{m}(t,T-t, m_t) \diff W_{0,t} + \tilde{\nu}_t(T-t, m_t) \diff W_t  \\
    & =: \eta_t(T,K) \diff t + \nu_t(T,K) \diff \overline{W}_t.
\end{aligned}
\end{equation}
Finally, let $\bar{c}_t = h(F_t, \hat{c}_t) = F_t \hat{c}_t$ and applying It\^{o}'s lemma to $h$, we have
\begin{equation}
    \diff \bar{c}_t = \diff h(F_t, \hat{c}_t) = \left[ \eta_t F_t + (\alpha_t - r_t) \bar{c}_t + \nu_{0,t} \gamma_t F_t \right] \diff t + ( \nu_t F_t + \gamma_t \bar{c}_t \mathbf{e}_1) \diff \overline{W}_t,
\label{eq:dynamics_cbar}
\end{equation}
where $\mathbf{e}_{i} \in \mathbb{R}^{(d+1)\times 1}$ is the unit vector with $i$-th entry one and other entries zero.

\subsubsection*{Drift restriction}

We wish to connect the dynamics of $\bar c$ with a no-arbitrage condition. Let $\overline{W} = [W_{0} ~\cdots ~W_{d}]^\top$ and $\widetilde{W}_t = [\widetilde{W}_{0,t} ~\cdots ~\widetilde{W}_{d,t}]^\top$ have dynamics given by
\begin{equation*}
    \widetilde{W}_t = \overline{W}_t + \int_0^t \varphi_t \diff t,
\end{equation*}
where $\varphi_t = [\psi_{0,t} ~\cdots ~\psi_{p,t}]^\top \in L^2_\text{loc}(\mathbb{R}^{d+1})$ is a $(d+1)$-dimensional progressively measurable process satisfying Novikov's\footnote{If Novikov's condition is not satisfied, but the process $\varphi$ is locally square-integrable in time, then some no-arbitrage results are still possible; see Karatzas and Kardaras \cite{Karatzas2007}.} condition $\mathbb{E}^\mathbb{P} [\exp (\frac{1}{2} \int_0^{T^*} \varphi_t^\top \varphi_t \diff t)] < 0$. Then Girsanov's theorem enables us to construct a measure $\mathbb{Q} \sim \mathbb{P}$ with the Radon--Nikodym derivative
\begin{equation}
    \left. \deriv{\mathbb{Q}}{\mathbb{P}} \right|_{\mathscr{F}_t} = \exp \left( -\int_0^t \varphi_s^\top \diff \overline{W}_s - \frac{1}{2} \int_0^t \varphi_s^\top \varphi_s \diff s \right).
\label{eq:girsanov}
\end{equation}
Under $\mathbb{Q}$ the process $\widetilde{W}_t$ is a $(d+1)$-dimensional standard Brownian motion. The process $\psi_i$ is interpreted as the market price of risk associated with the source of randomness $W_i$ for each $i=0,\dots,d$.

To ensure no arbitrage, we require the existence of a measure $\mathbb{Q} \sim \mathbb{P}$ under which the discounted price processes of all tradable assets are martingales. Since $\mathbb{Q}$ can be found through (\ref{eq:girsanov}) from the market price of risk process $\varphi$, we see that no-arbitrage requires the existence of an appropriate $\varphi$. Letting
\begin{equation}
    \psi_{0,t} = \frac{\alpha_t - r_t}{\gamma_t},
\label{eq:change_measure}
\end{equation}
under the measure $\mathbb{Q}$, the discounted stock price $\widetilde{S}_t = S_t \exp (\int_0^t (q_s - r_s) \diff s)$ and the forward price are $\mathbb{Q}$-local martingales, i.e.\ $\diff \widetilde{S}_t = \gamma_t \widetilde{S}_t \diff \widetilde{W}_{0,t}$ and $\diff F_t = \gamma_t F_t \diff \widetilde{W}_{0,t}$. Since we have assumed that $\gamma S \in L^2_\text{loc}(\mathbb{R}^{d+1})$, it follows that $\widetilde{S}$ and $F$ are also $\mathbb{Q}$-martingales.

We choose the rest of the $d$ components of $\varphi$ such that $\bar{c}$ is also a $\mathbb{Q}$-local martingale, i.e.\ it is driftless under $\mathbb{Q}$. Using (\ref{eq:dynamics_cbar}) and (\ref{eq:change_measure}), we have, under $\mathbb{Q}$,
\begin{equation}
    \diff \bar{c}_t = (\eta_t + \nu_{0,t} \gamma_t - \nu_t \cdot \varphi_t ) F_t \diff t + ( \nu_t F_t + \gamma_t \bar{c}_t \mathbf{e}_1) \diff \widetilde{W}_t.
\end{equation}
Hence, the no-arbitrage condition, also known as the \textit{drift restriction}, is that there must exist a process $\varphi$ satisfying
\begin{equation}
    \nu_t (T,K) \cdot \varphi_t (T,K) = \eta_t(T,K) + \nu_{0,t}(T,K) \gamma_t.
\label{eq:drift_restriction}
\end{equation}

Given the parametrisation in terms of moneyness $m_t$ and the market factor representation (\ref{eq:factr_rep_linear}), we can simplify the drift restriction (\ref{eq:drift_restriction}). For some fixed $T$ and $K$, we let $m_t=M(K;F_t(T))$ and $\tau_t=T-t$. Then $\hat{c}_t(T,K)$ has  drift and diffusion
\begin{align}
    \eta_t(T,K)  &=  \sum_{i=1}^d \left\{ \left[ \mu_{i,t} + \xi_{i,t} \left( -\pderiv{}{\tau} +  \left(r_t-\alpha_t+\frac{1}{2} \gamma_t^2 \right) \pderiv{}{m} + \frac{1}{2} \gamma_t^2 \pderiv[2]{}{m} \right) \right] G_i(\tau_t, m_t) \right\}\nonumber \\
     & \qquad + \left[ -\pderiv{}{\tau} +  \left(r_t-\alpha_t+\frac{1}{2} \gamma_t^2 \right) \pderiv{}{m} + \frac{1}{2} \gamma_t^2 \pderiv[2]{}{m} \right] G_0(\tau_t, m_t),\\
\nu_{j,t} (T,K) &=
\begin{dcases}
- \sum_{i=1}^d  \xi_{i,t} \gamma_t \pderiv{G_i(\tau_t, m_t)}{m} - \gamma_t \pderiv{G_0(\tau_t, m_t)}{m} , & j=0; \\
\sum_{i=1}^d  \sigma_{ij,t} G_i(\tau_t, m_t), & 1 \leq j \leq d.
\end{dcases}
\end{align}
Substituting $\eta_t$ and $\nu_t$ into the drift restriction (\ref{eq:drift_restriction}) yields
\begin{equation}
\begin{aligned}
     \sum_{j=1}^d \left( \sum_{i=1}^d \sigma_{ij,t} G_i(\tau_t, m_t) \right) \psi_{j,t}  
    &=  \sum_{i=1}^d \Bigg\{ \mu_{i,t} G_i(\tau_t, m_t) + \left[ -\pderiv{}{\tau} - \frac{1}{2} \gamma_t^2 \pderiv{}{m} + \frac{1}{2} \gamma_t^2 \pderiv[2]{}{m} \right] G_i(\tau_t, m_t) \xi_{i,t} \Bigg\} \\
    &=  \sum_{i=1}^d \mu_{i,t} G_i(\tau_t, m_t) + \left( -\pderiv{}{\tau} - \frac{1}{2} \gamma_t^2 \pderiv{}{m} + \frac{1}{2} \gamma_t^2 \pderiv[2]{}{m} \right) \tilde{c}_t (\tau_t, m_t).
\end{aligned}
\label{eq:drift_restriction_specific}
\end{equation}

It follows that a necessary and sufficient condition for our models to be dynamic arbitrage-free is that (\ref{eq:drift_restriction_specific}) admits solutions with sufficient integrability. 

\section{Adaptive diffusion shrinking scales}
\label{sec:adaptive_shrinking}

In Algorithm \ref{alg:diffusion_shrinking} , we would ideally like to shrink the diffusion component projected on $\mathbf{v}_{(k)}$ by $\sqrt{\varepsilon_{(k)}}$, for each $k$. However, since $\{ \mathbf{v}_{(k)} \}_k$ is, in general, not orthogonal, shrinking along one direction will also shrink along all other non-orthogonal directions.

To mitigate this compounded shrinking issue, a modification of the definition of $\mathbf{P}(y)$ (\ref{eq:construct_P}) in Algorithm \ref{alg:diffusion_shrinking} is as follows:

Similarly to \ref{alg:diffusion_shrinking}, we define
\begin{equation*}
    \mathbf{P}(y) = \text{diag} \left( \sqrt{\epsilon_{1}}, \cdots, \sqrt{\epsilon_{d}} \right) \times \mathbf{Q},
\end{equation*}
where $\{\sqrt{\epsilon_k}\}_k$ is a set of adaptive shrinking scales. The novelty here is that we compute these scales recursively, together with the Gram--Schmidt process in \eqref{eq:alg2GramSchmidt}. 

Let $\epsilon_1 = \varepsilon_{(1)}$. For each $k \geq 2$, suppose $\{\epsilon_j\}_{j\leq k-1}$ is known. By the Gram--Schimdt process, we have
\begin{equation*}
    \mathbf{v}_{(k)} = \sum_{j=1}^{k-1} \langle \mathbf{q}_j,  \mathbf{v}_{(k)} \rangle \mathbf{q}_j + \sqrt{ 1 - \sum_{j=1}^{k-1} \langle \mathbf{q}_j,  \mathbf{v}_{(k)} \rangle^2} \mathbf{q}_k.
\end{equation*}
Since we shrink along $\mathbf{q}_j$ by $\sqrt{\epsilon_j}$ for all $j$,  $\mathbf{v}_k$ will shrink to
\begin{equation*}
    \hat{\mathbf{v}}_{(k)} = \sum_{j=1}^{k-1} \sqrt{\epsilon_j} \langle \mathbf{q}_j,  \mathbf{v}_{(k)} \rangle \mathbf{q}_j + \sqrt{\epsilon_k} \sqrt{ 1 - \sum_{j=1}^{k-1} \langle \mathbf{q}_j,  \mathbf{v}_{(k)} \rangle^2} \mathbf{q}_k.
\end{equation*}
In addition, we want the rescaled vector to have norm $\| \hat{\mathbf{v}_{(k)}} \|^2 = \varepsilon_{(k)} $, which yields
\begin{equation*}
    \epsilon_k = \frac{1}{1 - \sum_{j=1}^{k-1} \langle \mathbf{q}_j,  \mathbf{v}_{(k)} \rangle^2} \left( \varepsilon_{(k)} - \sum_{j=1}^{k-1} \langle \mathbf{q}_j,  \mathbf{v}_{(k)} \rangle^2 \epsilon_j \right).
\end{equation*}
If using this method, it is then natural to re-sort the boundary directions, so that at each step we shrink the direction corresponding to the largest remaining value of $\epsilon_k$ (rather than the largest value of the original shrinking scale $\varepsilon_k$). The matrix $\mathbf{Q}$ is then built accordingly.

\section{The $\rho^*$-interior points for drift correction}
\label{sec:interior_point}

We apply Algorithm \ref{alg:interior_points} to define the interior points for drift correction. The correction is more efficient when the correction vectors formed by the interior points align with the inward orthogonal vector to the corresponding boundaries. Therefore, for each boundary, we first compute its midpoint (as a boundary of the polytope $\mathcal{P}$) and then, starting from the midpoint, we find the furthest possible $\rho^*$-interior point along the inward orthogonal direction. We can then pre-compute the set of drift correction vectors $\{\zeta_k - y_{t_i}\}_{k=1,\dots,R}$ for every observation $y_{t_i}$, where $i=0,\dots,L$.

\begin{algorithm}[H]
\footnotesize
\SetAlgoLined
\SetKwInOut{Input}{Input}\SetKwInOut{Output}{Output}
\Input{A non-empty $d$-polytope $\mathcal{P} = \{x \in \mathbb{R}^d : \mathbf{V} x \geq \mathbf{b}, \mathbf{V} \in \mathbb{R}^{R \times d}, \mathbf{b} \in \mathbb{R}^d \}$, and a small positive constant $\rho^*$.}
\Output{One $\rho^*$-interior point corresponding to each of the $R$ hyperplane boundaries of $\mathcal{P}$.}
\BlankLine

Apply the double description algorithm (Motzkin, Raiffa, Thompson and Thrall \cite{Motzkin1953}) to identify the set of vertices of $\mathcal{P}$, denoted by $\mathcal{V}$;

\ForEach{$k = 1,\dots, R$}{
	Find the set of vertices of $\mathcal{P}$ that are passed through by the $k$-th boundary, denoted by $\mathcal{V}^k = \mathcal{V} \cap \{x\in\mathbb{R}^d: \mathbf{v}_k^\top x = b_k \}$\;
	Compute the midpoint of the $k$-th boundary as
	\begin{equation*}
	    m_k = \frac{1}{|\mathcal{V}^k|} \sum_{v \in \mathcal{V}^k} v;
	\end{equation*}\\
	Solve $c^* = \arg\max_{c \in \mathbb{R}_{>0}} \{c:\mathbf{V}(c\mathbf{v}_k + m_k) \geq \mathbf{b} - \rho^*\mathbf{1}\}$. The $\rho^*$-interior point is given by $\zeta_k = c^*\mathbf{v}_k + m_k$;
	}
\caption{Computation of $\rho^*$-interior points}
\label{alg:interior_points}
\end{algorithm}

\section{Implied volatility simulation}
\label{sec:iv_simulation}

As a further description of the performance of our model, we consider the simulated time series of implied volatilities (IV) and IV smiles for multiple expiries and moneynesses. These are given in Figure \ref{fig:simulation_ivs_ts} and Figure \ref{fig:simulation_ivs_smile}.

We see that the long-maturity and short-maturity-ITM implied volatilities are similar, but the short-maturity-OTM volatility processes have noticeably different behavior. This is also reflected in the varied shapes of the volatility smiles at different points in time.

\begin{figure}[!h]
    \centering
    \includegraphics[scale=.66]{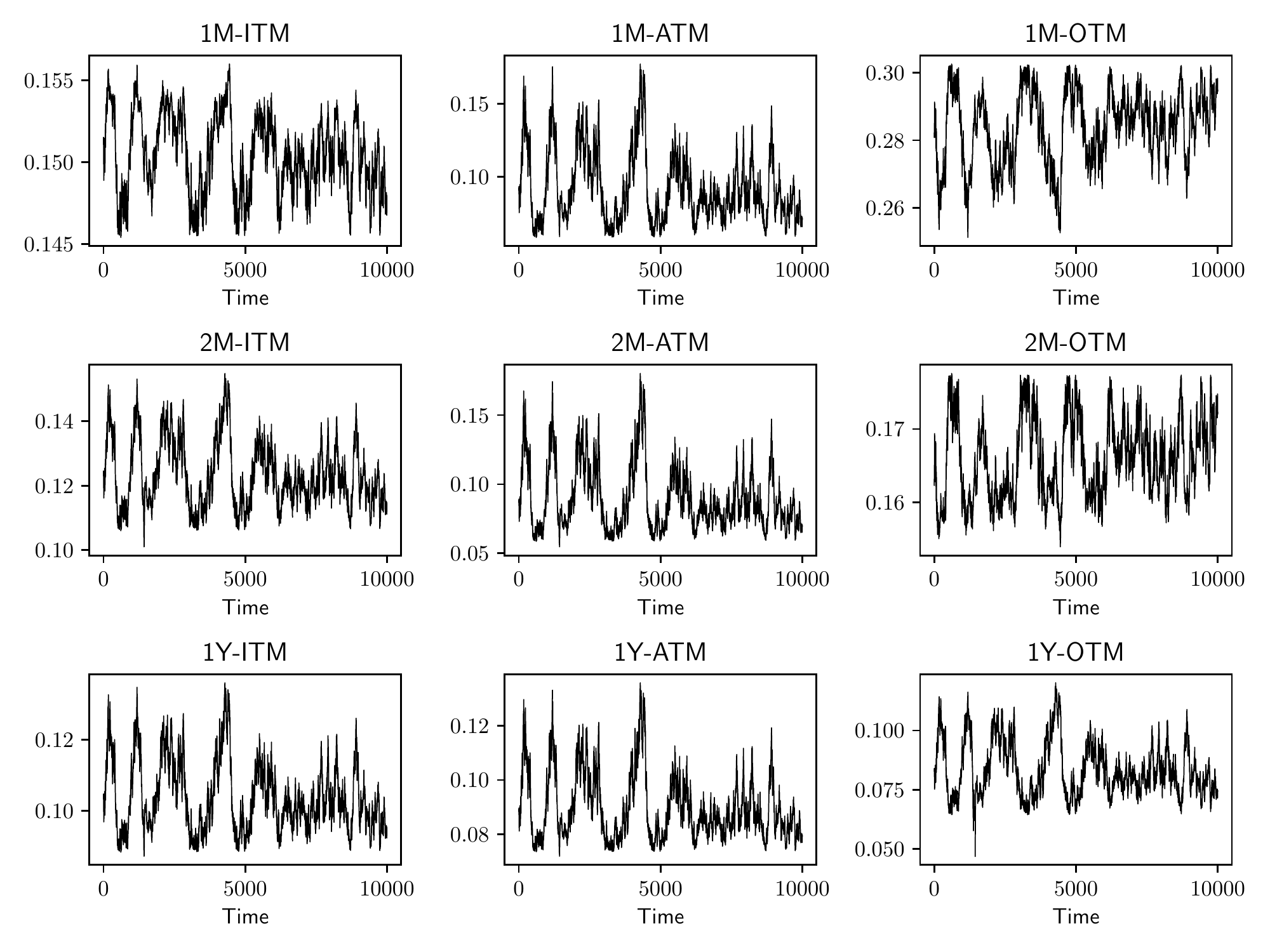}
    \caption{Simulated evolution of implied volatilities for multiple expiries and moneynesses. ATM: $m=0$; ITM: $m=-0.0943$; OTM: $m=0.1133$.}
    \label{fig:simulation_ivs_ts}
\end{figure}

\begin{figure}[!h]
    \centering
    \includegraphics[scale=.66]{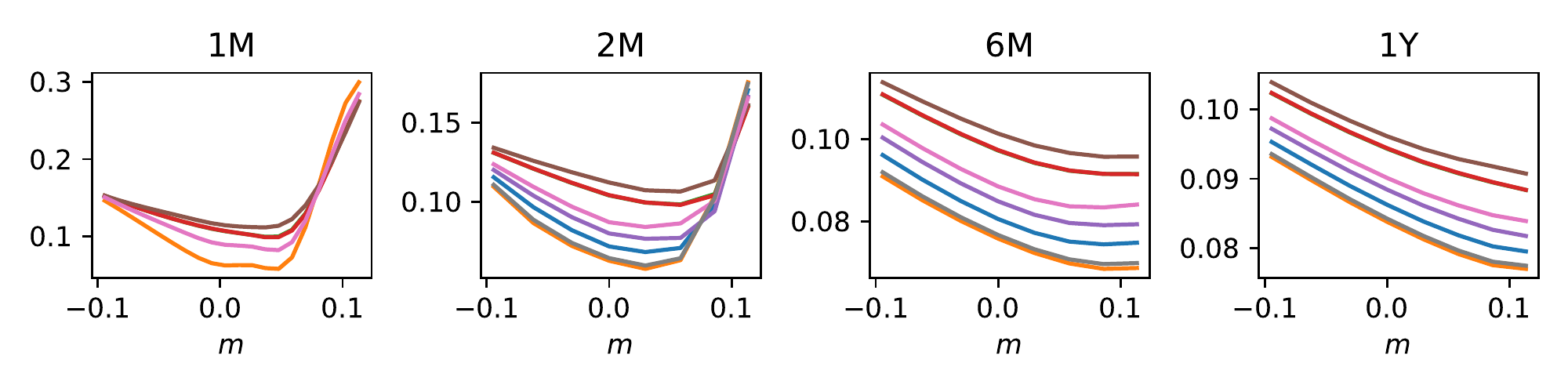}
    \caption{Simulated implied volatility smiles for 1M, 2M, 6M and 1Y options on multiple days.}
    \label{fig:simulation_ivs_smile}
\end{figure}

\section{Necessity of drift and diffusion operators}

Given the underlying drift and diffusion functions  are represented by neural networks, our approach is to shrink the diffusion and correct the drift near boundaries such the Friedman--Pinsky conditions \eqref{eq:friedman1973_1} are satisfied. These conditions are sufficient for constraining the factor process within the state space where there is no static arbitrage.

The question is whether it is necessary in practice to do the diffusion shrinking and drift correction. In particular, since the input data are already arbitrage-free, is the neural network capable of learning the correct boundary behaviours for the drift and diffusion functions without enforcing extra constraints?

To address this question, we estimate a model using exactly the same input data, neural network architecture and other technical details, except for neither shrinking the diffusion nor correcting the drift. The evolution of training and validation losses over epochs is shown in Figure \ref{fig:loss_history_uncon}. We find that, compared with Figure \ref{fig:loss_history},
\begin{enumerate}[leftmargin=*, label=(\arabic*)]
\setlength\itemsep{1pt}
    \item The losses converge faster. This is probably because that diffusion shrinking and drift correction constrain the gradients of the loss function when using backpropagation.
    \item The initial loss value is much larger. By shrinking diffusion and correcting drift at every epoch, we force the neural network training to start from and stay within the correct function subspace.
\end{enumerate}

\begin{figure}[!h]
    \centering
    \includegraphics[scale=.66]{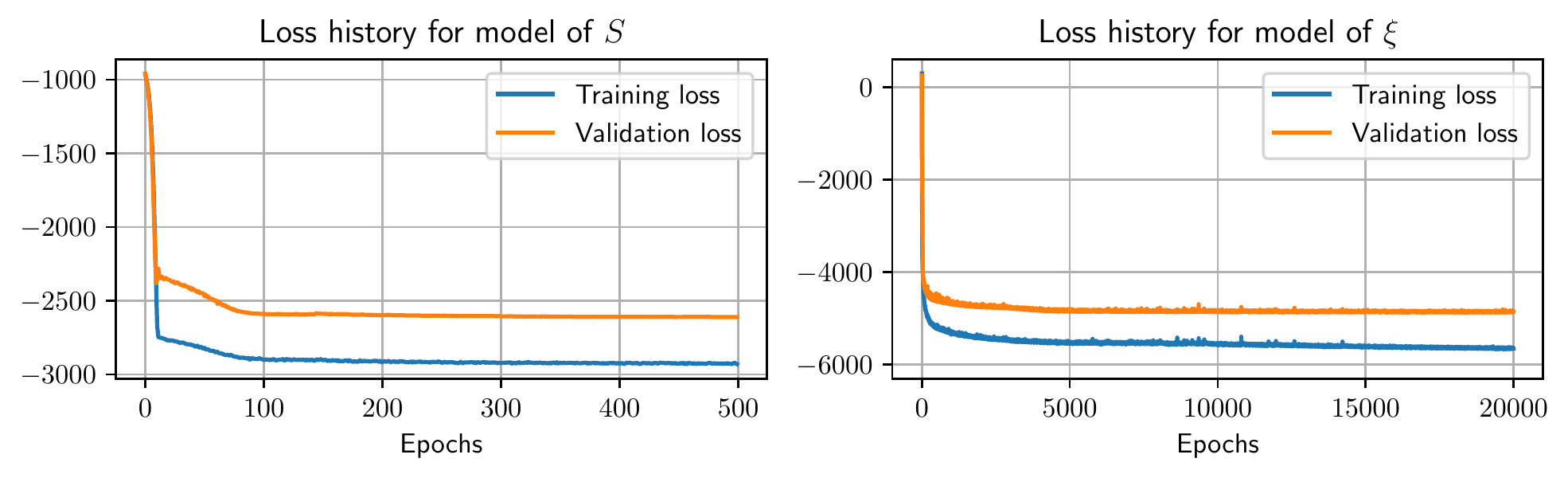}
    \caption{Evolution of training losses and validation losses (without imposing drift correction and diffusion shrinking).}
    \label{fig:loss_history_uncon}
\end{figure}
\begin{figure}[!h]
    \centering
    \includegraphics[scale=.66]{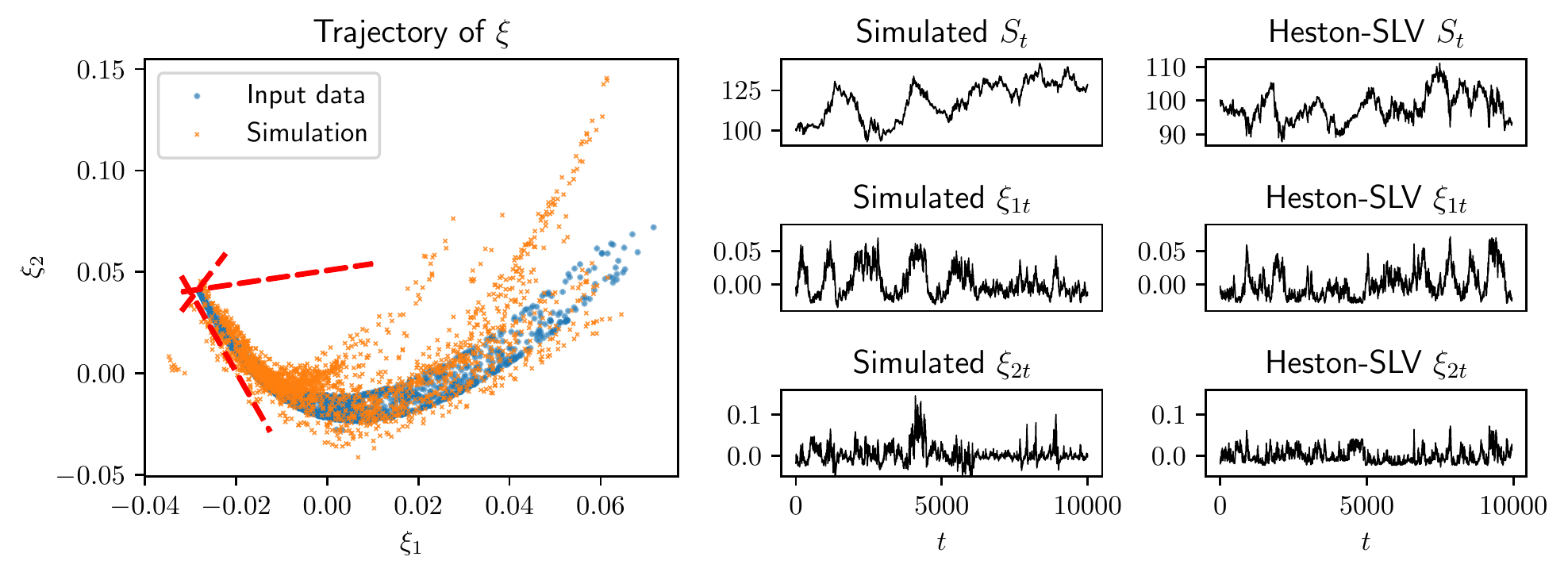}
    \caption{Simulation of $S$ and $\xi$ from the learnt neural network model (without imposing drift correction and diffusion shrinking).}
    \label{fig:simulation_trajectory_uncon}
\end{figure}

We investigate how well the learnt model can simulate out-of-sample data. As shown in Figure \ref{fig:simulation_trajectory_uncon}, the simulated factors can easily break the static arbitrage constraints (the dashed red lines), and generate very unrealistic values. Compared with the simulation results shown in Figure \ref{fig:simulation_trajectory} (where the same random seed is used), drift correction and diffusion shrinking are effective at producing the correct boundary behaviours, which is crucial for an arbitrage-free model.

\section{Neural network sensitivity analysis}
\label{sec:nn_sensitivity}

We demonstrate the sensitivity of the model estimation outcomes against a few neural network hyperparameters, including:
\begin{itemize}
\setlength\itemsep{1pt}
    \item Neural network depth: the number of hidden layers plus one (the output layer);
    \item Neural network width: the number of neurons per layer;
    \item Sparsity ratio: the fraction (between 0 and 1) of the network layers' weights that are pruned to zero;
    \item Activation function.
\end{itemize}

We call the network given in  Section \ref{sec:training_results} the \textit{benchmark} network. The values of the hyperparameters for the benchmark network for $\xi$ are listed in Table \ref{tab:params_values}. To carry out  sensitivity analysis, we will train a collection of modified networks by varying one hyperparameter from the benchmark network.

\begin{table}[!h]
    \centering
    \footnotesize
    \begin{tabular}{lccccc}
        \toprule
        \textbf{Hyperparameters} & NN depth &  NN width & Sparsity ratio & Activation function \\
        \cmidrule(lr){1-1} \cmidrule(lr){2-5}
        \textbf{Values} & 3 & 256 & 0.5 & ReLU \\
        \bottomrule
    \end{tabular}
    \caption{Hyperparameters used for the benchmark model.}
    \label{tab:hp_benchmark}
\end{table}

We compare the evolution of training losses and simulation performance between the benchmark network and the modified networks. The training losses are shown in Figure \ref{fig:sens_loss}. We have shown in Figure \ref{fig:simulation_dist} that the trained benchmark network simulates data with a similar distribution to the training data. To quantify this similarity between the distributions of simulated data and input data, we compute the Wasserstein distance of order one between the two distributions. We compare the computed distances for the benchmark network and the modified networks and list the results in Table \ref{tab:wasserstein}.

\begin{figure}[!h]
    \centering
    \includegraphics[scale=.66]{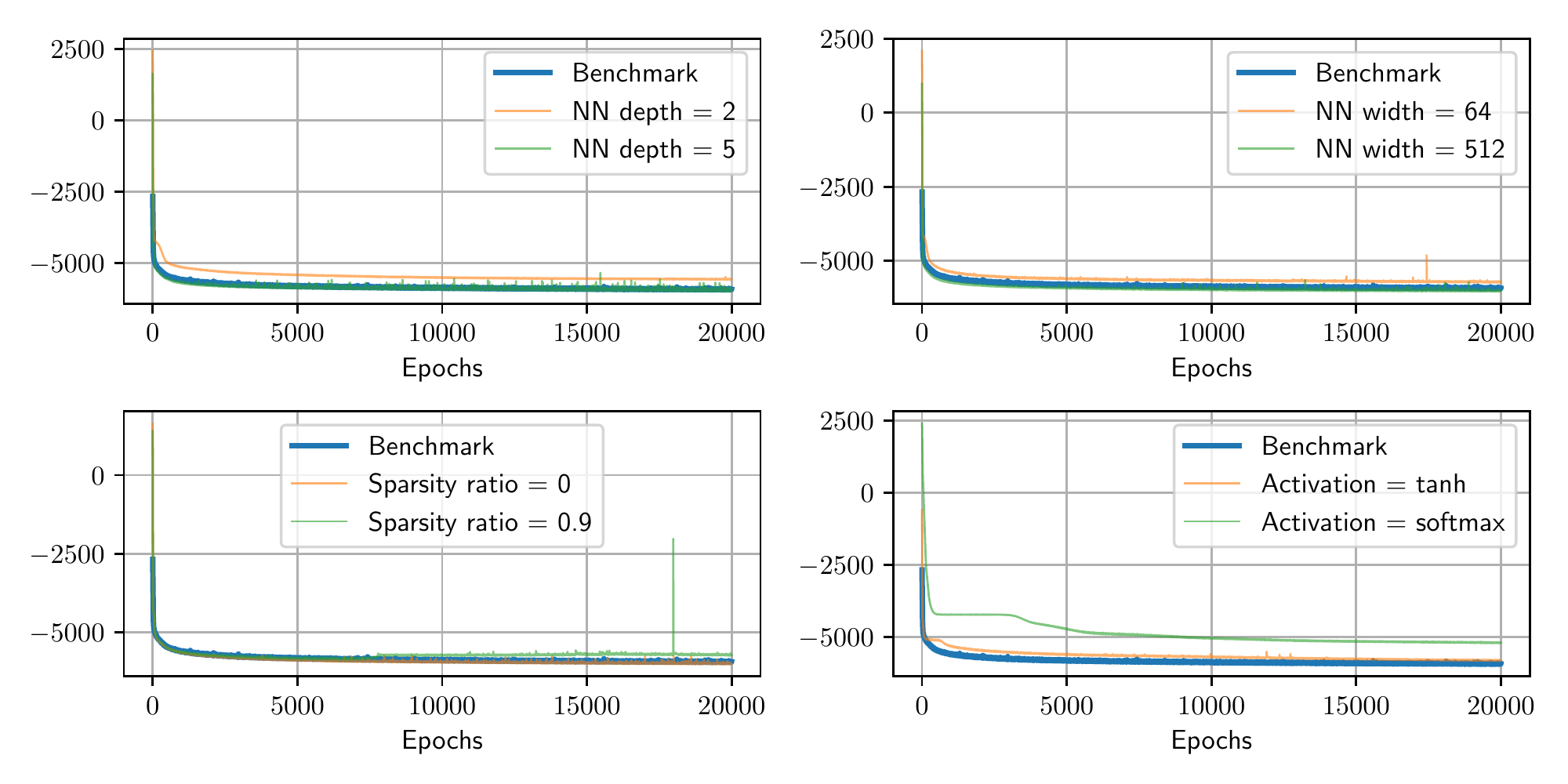}
    \caption{Evolution of training losses under different neural network architectures.}
    \label{fig:sens_loss}
\end{figure}

\begin{table}[!ht]
    \centering
    \footnotesize
    \begin{tabular}{lccc}
        \toprule
        \textbf{Neural network} & \textbf{Log-return of} $S$ & $\xi_1$ & $\xi_2$ \\
        \cmidrule(lr){1-1} \cmidrule(lr){2-4}
        Benchmark               & 1.71  & 40.44     & 20.30 \\
        NN depth = 2            & 3.79  & 176.89     & 71.99 \\
        NN depth = 5            & 2.83  & 55.07     & 52.14 \\
        NN width = 64           & 2.32  & 52.46     & 29.17 \\
        NN width = 512          & 1.82  & 58.90     & 24.97 \\
        Sparsity ratio = 0      & 1.51  & 53.92     & 57.47 \\
        Sparsity ratio = 0.9    & 3.68  & 69.69     & 105.61 \\
        Activation = tanh       & 2.17  & 46.57     & 35.26 \\
        Activation = softmax    & 2.97 & 148.58    & 462.74\\
    \bottomrule
    \end{tabular}
    \caption{Wasserstein distances ($\times 10^{-4}$) between the empirical distributions of simulated data and the input data.}
    \label{tab:wasserstein}
\end{table}

The training losses of all networks drop fairly quickly during the first 1000 epochs, and then gradually converge. Together with the simulation performance, we find that:
\begin{itemize}
\setlength\itemsep{1pt}
    \item The 2-layer shallow neural network gives higher convergent losses and much worse simulation results. Increasing the depth of the neural network, or varying its width, has a relatively small effect on performance, with mixed direction.
    \item The ReLU activation function generally produces smaller convergent losses and more similar simulations (i.e.\ smaller Wasserstein distances). The performance using a softmax activation function (as implemented by Tensorflow) was surprisingly poor.
    \item Pruning the network with moderate sparsity improves the extrapolation capability of the learnt neural network, thus resulting in better simulation results for $\xi$.
\end{itemize}

\section{Higher-dimensional models}
\label{sec:highdim_mdl}

In our numerical example, we observe that representing the call price data with more factors can further reduce reconstruction error, which improves the accuracy of simulations for option portfolios, such as VIX. However, a few challenges arise:
\begin{itemize}
\setlength\itemsep{1pt}
\item The inclusion of more factors leads to higher-dimensional models; and the number of unknown functions to estimate increases quadratically with the number of factors, increasing the chance of overfitting and making the estimates less robust.
\item There are dominating polynomial relationships between factors, specifically because the call price data are generated from a two-factor model.
\end{itemize}
These issues led to poor performance of a high dimensional version of our model, in particular, highly unrealistic simulations of prices.

To overcome these issues, we assume that some \textit{secondary} factors could be expressed as polynomials of some \textit{primary} factors plus noise. Specifically, suppose there are $d'$ primary factors and $d-d'$ secondary factors. For each secondary factor $\xi_i$, where $i=d'+1,\dots,d$, we have an independent Brownian motion $W_i$, and the model
\begin{equation}
	\xi_i = f_i(\xi_1, \dots, \xi_{d'}) + \varepsilon_i, \text{ and } \diff \varepsilon_{it} = \kappa_i (\theta_i - \varepsilon_{it}) \diff t + \varsigma_i \diff W_{it}. 
\end{equation}
In other words, noises are assumed to be mean-reverting Ornstein-Uhlenbeck (OU) processes with constant parameters $\kappa$, $\theta$ and $\varsigma$. Here $f$ is calibrated to the observed dominating polynomial relations between factors. Let $\xi^p$ be the collection of all primary factors, with drift $\mu^p: \mathbb{R}^{d'+1} \rightarrow \mathbb{R}^{d'}$ and diffusion $\sigma^p: \mathbb{R}^{d'+1} \rightarrow \mathbb{R}^{d' \times d'}$. Consequently, the dynamics for the secondary factor $\xi_i$ is
\begin{equation}
	\diff \xi_i = \mu_i (\xi^p, \xi_i ) \diff t + \sigma_i (\xi^p) \diff W^p_{t} + \varsigma_i \diff W_{it}, \text{ with } 
	\begin{cases}
	\mu_i = \kappa_i (\theta_i - \xi_i + f_i) + (\nabla f_i)^\top \mu^p +  \frac{1}{2} \sum_{j,k} \Gamma_{jk}, \\
	\sigma_i = (\nabla f_i)^\top \sigma^p,
	\end{cases}
	\label{eq:model_sec_factor}
\end{equation}
where $\Gamma = (\Gamma_{jk}) = [\sigma^p (\sigma^p)^\top] \odot \nabla^2 f_i$ is the element-wise product of the covariance matrix and the Hessian matrix of $f_i$, $W^p_t = [W_{1t} \cdots W_{d't}]^\top$ and $\nabla = [\partial/\partial \xi_1 \cdots \partial/\partial \xi_{d'}]^\top$ is the gradient operator.

\paragraph*{Factor decoding}

We keep the two factors in our numerical example as the primary ones, and append another 11 secondary factors, decoded consecutively with the objective of maximizing statistical accuracy using Algorithm \ref{alg:decode_factor}. We choose to append 11 secondary factors because this is the minimal number of factors which ensures no static arbitrage, as seen in Table \ref{tab:metrics_secondary_factor}. Each secondary factor is modelled as a cubic function of $(\xi_1, \xi_2)$ plus OU noises. In Figure \ref{fig:factor_polynomial}, we show the scattergrams for $\xi_7$ and the primary factors $(\xi_1, \xi_2)$, and compare with those for the polynomial implied values. The residual time series is also plotted.
\begin{table}[!ht]
\tiny
\centering
\begin{tabular}{lcccccccccccc}
\toprule
\multicolumn{1}{c}{\multirow{2}{*}{Metrics}} & \multicolumn{12}{c}{Number of secondary factors}  \\
\cmidrule(lr){2-13}
\multicolumn{1}{c}{} & 1 & 2 & 3 & 4 & 5 & 6 & 7 & 8 & 9 & 10 & 11 & 12 \\
\cmidrule(lr){1-1} \cmidrule(lr){2-13}
MAPE (\%) &  2.82 & 1.82  & 1.66  & 1.31  & 0.59  & 0.55  & 0.28  & 0.24 & 0.21  & 0.17  & 0.13  &  0.10  \\
PDA (\%) & 2.81  & 0.89  & 0.77  & 0.70  & 0.60  & 0.59  & 0.52  & 0.38  & 0.31  & 0.28   &  0.23  &  0.21  \\
PSAS (\%) & 27.51  & 20.88  & 18.25  & 15.33  & 12.35  & 10.92  & 7.65  & 7.45  & 2.09  & 2.01   &  0.00  & 0.00   \\
\bottomrule
\end{tabular}
\caption{MAPE, PDA and PSAS metrics when appending different number of secondary factors.}
\label{tab:metrics_secondary_factor}
\end{table}
\begin{figure}[!h]
\centering
\includegraphics[scale=0.66]{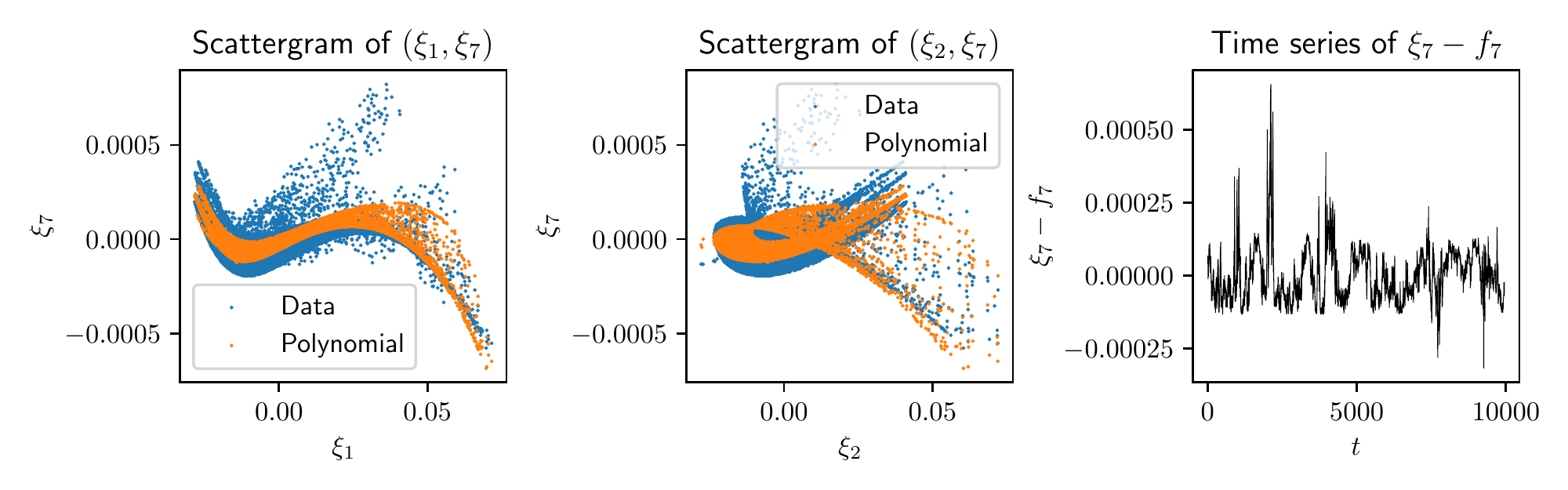}
\caption{Polynomial fitting results for $\xi_7$.}
\label{fig:factor_polynomial}
\end{figure}

\paragraph*{Model estimation strategy}

We estimate the neural-SDE model for primary factors and the OU noise processes for secondary factors separately rather than jointly. Since primary factors are much more representative of the input price data, it is crucial to learn their dynamics correctly, while joint estimation with secondary factors dilute the weight of primary factors in the objective likelihood function. Precisely, with the 2 primary factors and the 11 secondary factors, we
\begin{enumerate}[leftmargin=*, label=(\roman*)]
\item estimate a neural-SDE model for the 2 primary factors, as we did in the numerical example (Section \ref{sec:numerics});
\item calibrate a cubic function $f_i(\xi_1, \xi_2)$ for each secondary factor using the least-squares method, and compute the corresponding residual time series data $\{\varepsilon_{i t_l}\}_{l=1,\dots,L+1}$;
\item discretize the OU model for $\varepsilon_i$ using the Euler-Maruyama scheme, which yields an AR(1) model with normal noises, and fit with the residual time series data.
\end{enumerate}

\paragraph*{Arbitrage-free simulation}

Unlike the neural-SDE model for primary factors, the model for secondary factors \eqref{eq:model_sec_factor} does not necessarily rule out static arbitrage, particularly because its drift and diffusion do not satisfy Friedman and Pinsky's condition outlined in \eqref{eq:friedman1973_1}. Therefore, when forward simulating factors, there needs to be additional transformations on particularly secondary factors to ensure no-arbitrage. Rather than scaling diffusion and correcting drift of \eqref{eq:model_sec_factor}, we follow a simpler strategy, considering that secondary factors are strongly dependent on primary factors and have much smaller magnitudes.

Let $\mathcal{P}$ be the $d$-polytope factor state space where there is no static arbitrage. Suppose $\mathcal{P}^p$  is the arbitrage-free $d'$-polytope state space for primary factors, then $\mathcal{P}^p$ is the affine projection of $\mathcal{P} \subset \mathbb{R}^d$ onto $\mathbb{R}^{d'}$ given by $\mathcal{P}^p = \{ y \in \mathbb{R}^{d'} :  \exists x \in \mathcal{P},  y=[x_1 \dots x_{d'}]^\top\}$. Therefore, if we use the neural-SDE model to simulate arbitrage-free primary factors, then there must exist some secondary factors such that the full set of factors reproduces arbitrage-free call prices. Hence, we are able to simulate arbitrage-free factors by:
\begin{enumerate}[leftmargin=*, label=(\roman*)]
\item simulating primary factors using the arbitrage-free neural-SDE model;
\item for each secondary factor, computing the polynomial term from simulated primary factors, and simulating noises from the estimated AR(1) model;
\item adding up the polynomial and noise terms for each secondary factor; if needed, perturb the collection of secondary factors using the arbitrage repair algorithm \cite{Cohen2020}.
\end{enumerate}

We simulate factors from the estimated models, where the simulation results for the 2 primary factors have been presented in Section \ref{sec:outsample_test}. In Figure \ref{fig:factor_distribution}, we compare the marginal distributions of input data, simulated data and arbitrage-repaired data for the 11 secondary factors. We see that repairing arbitrage does not significantly affect the marginal distributions of most factors.

\begin{figure}[!h]
\centering
\includegraphics[scale=0.65]{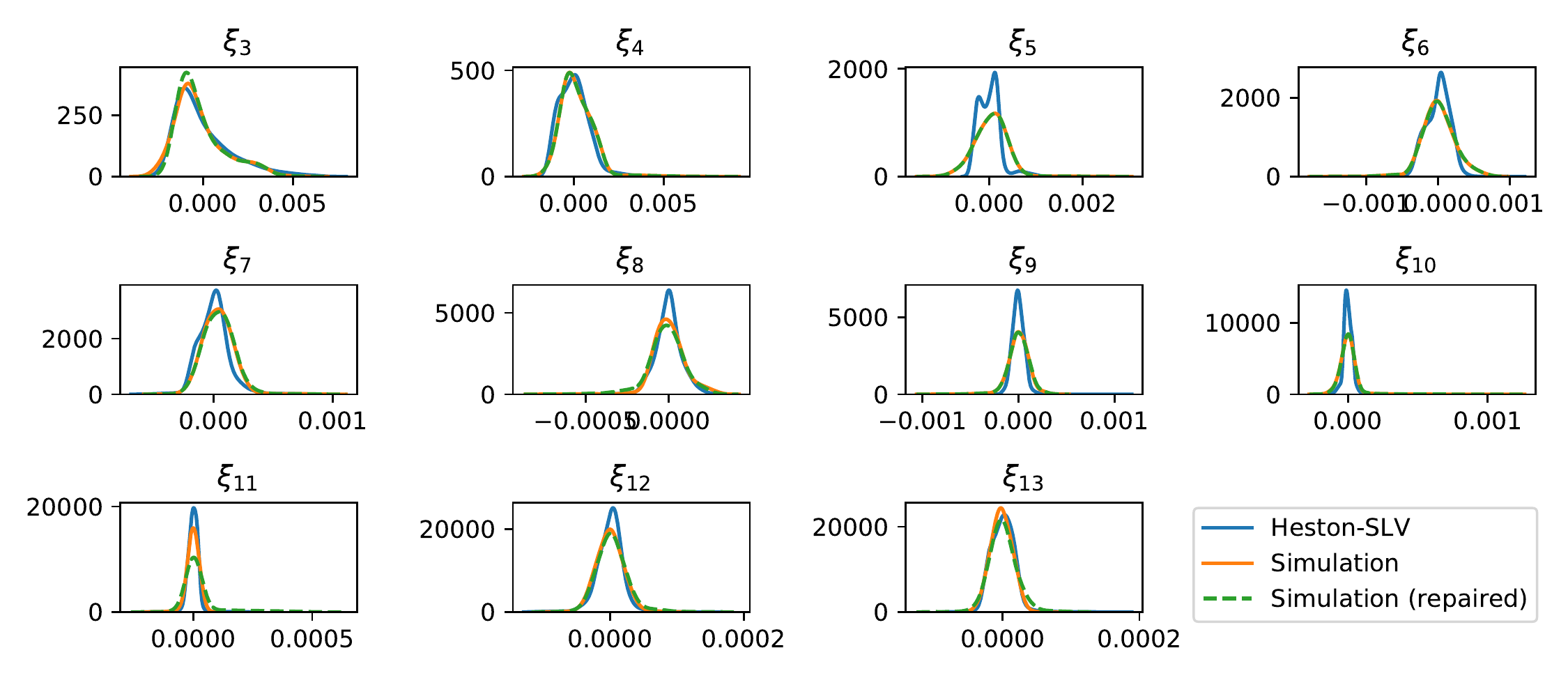}
\caption{Marginal distributions of secondary factors.}
\label{fig:factor_distribution}
\end{figure}

Finally, we compute the VIX index using the simulated factor through \eqref{eq:vix_simulation}, and compare its distribution with that of the Heston-SLV VIX. As seen in Figure \ref{fig:s_vix_full}, its marginal distribution and the joint distribution with the log-return of $S$ look reasonably close to those of the Heston-SLV VIX. Repairing arbitrage has trivial impact on the distributions.  Compared with Figure \ref{fig:s_vix1}, this higher-dimensional model greatly improves the VIX simulation, indicating that the inclusion of more factors leads to better option price reconstruction, thus a more accurate replication of VIX.
\begin{figure}[!ht]
    \centering
    \begin{subfigure}[b]{.49\textwidth}
    \centering
        \includegraphics[scale=.63]{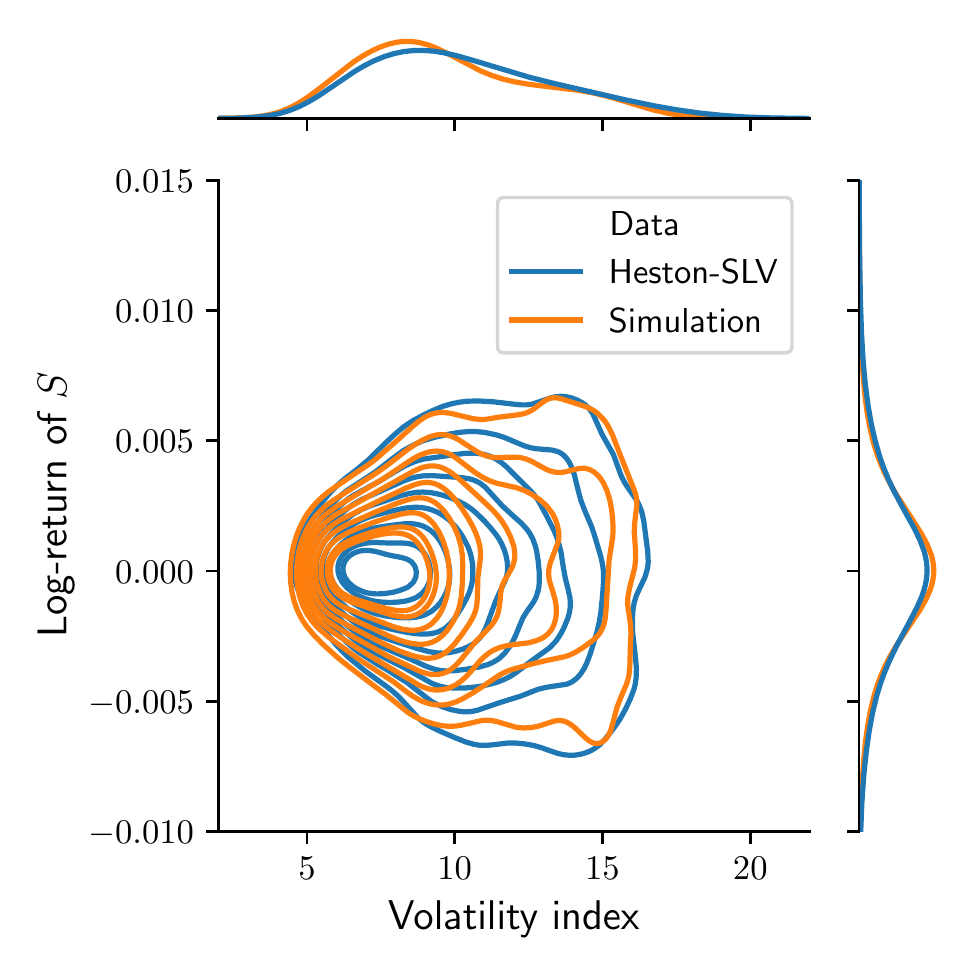}
        \caption{Heston-SLV VIX v.s. VIX calculated from the simulated 13 factors.}
        \label{fig:s_vix11}
    \end{subfigure}
    \hfill
    \begin{subfigure}[b]{.49\textwidth}
    \centering
        \includegraphics[scale=.63]{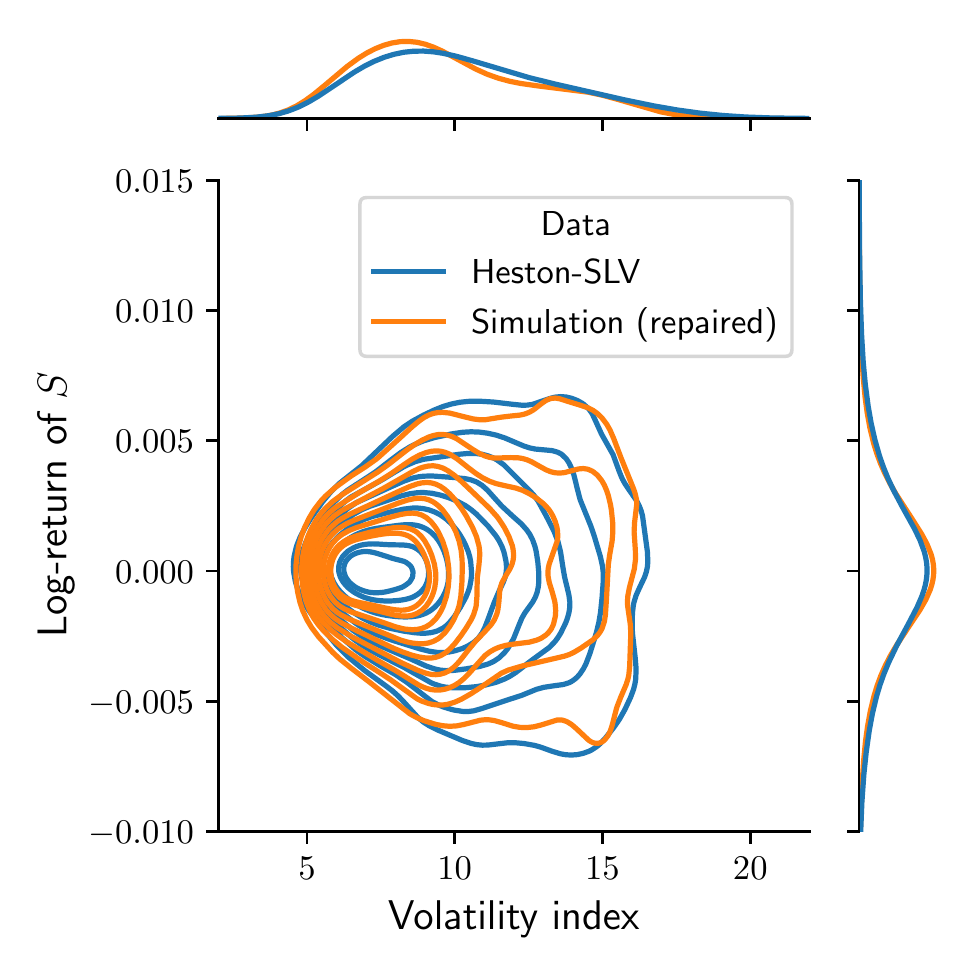}
        \caption{Heston-SLV VIX v.s. VIX calculated from the simulated 13 factors with arbitrage repair.}
        \label{fig:s_vix22}
    \end{subfigure}
    \caption{Joint distribution of the log-return of $S$ and the VIX-like volatility index.}
    \label{fig:s_vix_full}
\end{figure}

\section*{Acknowledgements}
This publication is based on work supported by the EPSRC Centre for Doctoral Training in Industrially Focused Mathematical Modelling (EP/L015803/1) in collaboration with CME Group. We thank Florian Huchede, Director of Quantitative Risk Management, and other colleagues at CME for providing valuable data access, suggestions from the business perspective, and continued support. 

Samuel N. Cohen and Christoph Reisinger acknowledge the support of the Oxford-Man Institute for Quantitative Finance, and Samuel N. Cohen also acknowledges the support of the Alan Turing Institute under the Engineering and Physical Sciences Research Council grant EP/N510129/1.

\setlength{\bibsep}{0.0pt}
\bibliographystyle{abbrv}
\bibliography{reference} 

\begin{thebibliography}{10}

\bibitem{quantlib}
{QuantLib}, a free/open-source library for quantitative finance.
\newblock \url{https://www.quantlib.org/}.
\newblock Accessed: 2021-04-13.

\bibitem{tensorflow2015-whitepaper}
M.~Abadi, et al.
\newblock {TensorFlow}: Large-scale machine learning on heterogeneous systems,
  2015.
\newblock Software available from tensorflow.org.

\bibitem{Bain2021}
A.~Bain, M.~Mariapragassam, and C.~Reisinger.
\newblock Calibration of local-stochastic and path-dependent volatility models
  to vanilla and no-touch options.
\newblock {\em The Journal of Computational Finance}, 24(4), 2021.

\bibitem{BlackScholes1973}
F.~Black and M.~Scholes.
\newblock The pricing of options and corporate liabilities.
\newblock {\em Journal of Political Economy}, 81(3):637--54, 1973.

\bibitem{breeden1978}
D.~T. Breeden and R.~H. Litzenberger.
\newblock Prices of state-contingent claims implicit in option prices.
\newblock {\em The Journal of Business}, 51(4):621--51, 1978.

\bibitem{Carmona2007}
R.~Carmona.
\newblock {\em HJM: A Unified Approach to Dynamic Models for Fixed Income,
  Credit and Equity Markets}, pages 1--50.
\newblock Paris--Princeton Lectures on Mathematical Finance 2004. Springer,
  Berlin Heidelberg, 2007.

\bibitem{Carmona2009}
R.~Carmona and S.~Nadtochiy.
\newblock Local volatility dynamic models.
\newblock {\em Finance and Stochastics}, 13(1):1--48, 2009.

\bibitem{Carmona2012}
R.~Carmona and S.~Nadtochiy.
\newblock Tangent {L\'evy} market models.
\newblock {\em Finance and Stochastics}, 16(1):63--104, 2012.

\bibitem{caron1989}
R.~J. Caron, J.~F. McDonald, and C.~M. Ponic.
\newblock A degenerate extreme point strategy for the classification of linear
  constraints as redundant or necessary.
\newblock {\em Journal of Optimization Theory and Applications},
  62(2):225--237, 1989.

\bibitem{CGMY2003}
P.~Carr, H.~Geman, D.~Madan, and M.~Yor.
\newblock {Stochastic volatility for L{\'e}vy processes}.
\newblock {\em Mathematical Finance}, 13(3):345--382, 2003.

\bibitem{Carr2005}
P.~Carr and D.~B. Madan.
\newblock A note on sufficient conditions for no arbitrage.
\newblock {\em Finance Research Letters}, 2(3):125--130, 2005.

\bibitem{cartis2019}
C.~Cartis, J.~Fiala, B.~Marteau, and L.~Roberts.
\newblock Improving the flexibility and robustness of model-based
  derivative-free optimization solvers.
\newblock {\em ACM Transactions on Mathematical Software}, 45(3):Article 32,
  2019.

\bibitem{vix}
CBOE.
\newblock {White paper: CBOE volatility index}, 2019.
\newblock \url{https://cdn.cboe.com/resources/vix/vixwhite.pdf}, Accessed:
  2021-04-08.

\bibitem{Chataigner2020}
M.~Chataigner, S.~Cr\'epey, and M.~Dixon.
\newblock Deep local volatility.
\newblock {\em Risks}, 8(3), 2020.

\bibitem{Cohen2020}
S.~N. Cohen, C.~Reisinger, and S.~Wang.
\newblock Detecting and repairing arbitrage in traded option prices.
\newblock {\em Applied Mathematical Finance}, 27(5):345--373, 2020.

\bibitem{Cont2002}
R.~Cont and J.~{da Fonseca}.
\newblock Dynamics of implied volatility surfaces.
\newblock {\em Quantitative Finance}, 2(1):45--60, 2002.

\bibitem{Cont20022}
R.~Cont, J.~{da Fonseca}, and V.~Durrleman.
\newblock Stochastic models of implied volatility surfaces.
\newblock {\em Economic Notes}, 31(2):361--377, 2002.

\bibitem{cousot2007}
L.~Cousot.
\newblock {Conditions on option prices for absence of arbitrage and exact
  calibration}.
\newblock {\em Journal of Banking \& Finance}, 31(11):3377--3397, 2007.

\bibitem{Cuchiero2020}
C.~Cuchiero, W.~Khosrawi, and J.~Teichmann.
\newblock A generative adversarial network approach to calibration of local
  stochastic volatility models.
\newblock {\em Risks}, 8(4), 2020.

\bibitem{Daglish2007}
T.~Daglish, J.~Hull, and W.~Suo.
\newblock {Volatility surfaces: theory, rules of thumb, and empirical
  evidence}.
\newblock {\em Quantitative Finance}, 7(5):507--524, 2007.

\bibitem{DavisObloj2008}
M.~Davis and J.~Ob\l{}\'{o}j.
\newblock Market completion using options.
\newblock {\em Advances in Mathematics of Finance}, 83:49--60, 2008.

\bibitem{davis2007}
M.~H.~A. Davis and D.~G. Hobson.
\newblock The range of traded option prices.
\newblock {\em Mathematical Finance}, 17(1):1--14, 2007.

\bibitem{Delbaen1994}
F.~Delbaen and W.~Schachermayer.
\newblock A general version of the fundamental theorem of asset pricing.
\newblock {\em Mathematische Annalen}, 300(1):463--520, 1994.

\bibitem{Derman1994}
E.~Derman and I.~Kani.
\newblock Riding on a smile.
\newblock {\em Risk}, 7, 1994.

\bibitem{Derman1998}
E.~Derman and I.~Kani.
\newblock Stochastic implied trees: Arbitrage pricing with stochastic term and
  strike structure of volatility.
\newblock {\em International Journal of Theoretical and Applied Finance},
  01(01):61--110, 1998.

\bibitem{Dugas2009}
C.~Dugas, Y.~Bengio, F.~B\'{e}lisle, C.~Dadeau, and R.~Garcia.
\newblock Incorporating functional knowledge in neural networks.
\newblock {\em Journal of Machine Learning Research}, 10:1239--1262, 2009.

\bibitem{Dupire1994}
B.~Dupire.
\newblock Pricing with a smile.
\newblock {\em Risk Magazine}, 7:18--20, 1994.

\bibitem{Fengler2009}
M.~Fengler.
\newblock Arbitrage-free smoothing of the implied volatility surface.
\newblock {\em Quantitative Finance}, 9:417--428, 2009.

\bibitem{Fengler2015}
M.~R. Fengler and L.-Y. Hin.
\newblock Semi-nonparametric estimation of the call-option price surface under
  strike and time-to-expiry no-arbitrage constraints.
\newblock {\em Journal of Econometrics}, 184(2):242--261, 2015.

\bibitem{friedman1973}
A.~Friedman and M.~A. Pinsky.
\newblock Asymptotic stability and spiraling properties for solutions of
  stochastic equations.
\newblock {\em Transactions of the American Mathematical Society},
  186:331--358, 1973.

\bibitem{Gatheral2014}
J.~Gatheral and A.~Jacquier.
\newblock Arbitrage-free {SVI} volatility surfaces.
\newblock {\em Quantitative Finance}, 14(1):59--71, 2014.

\bibitem{gierjatowicz2020robust}
P.~Gierjatowicz, M.~Sabate-Vidales, D.~{\v S}i{\v s}ka, {\L}.~Szpruch, and {\v
  Z}.~{\v Z}uri{\v c}.
\newblock Robust pricing and hedging via neural {SDEs}, 2020.
\newblock arXiv:2007.04154.

\bibitem{Hagan2002}
P.~S. Hagan, D.~Kumar, A.~S. Lesniewski, and D.~E. Woodward.
\newblock Managing smile risk.
\newblock {\em Wilmott Magazine}, 1:84--108, 2002.

\bibitem{Hagan2014}
P.~S. Hagan, D.~Kumar, A.~S. Lesniewski, and D.~E. Woodward.
\newblock {Arbitrage-free SABR}.
\newblock {\em Wilmott}, 69:60--75, 2014.

\bibitem{harrison1979}
J.~M. Harrison and D.~Kreps.
\newblock Martingales and arbitrage in multiperiod securities markets.
\newblock {\em Journal of Economic Theory}, 20(3):381--408, 1979.

\bibitem{HJM1992}
D.~Heath, R.~Jarrow, and A.~Morton.
\newblock Bond pricing and the term structure of interest rates: A new
  methodology for contingent claims valuation.
\newblock {\em Econometrica}, 60(1):77--105, 1992.

\bibitem{Heston1993}
S.~L. Heston.
\newblock A closed-form solution for options with stochastic volatility with
  applications to bond and currency options.
\newblock {\em Review of Financial Studies}, 6:327--343, 1993.

\bibitem{Hutzenthaler2012}
M.~Hutzenthaler, A.~Jentzen, and P.~E. Kloeden.
\newblock Strong convergence of an explicit numerical method for {SDEs} with
  nonglobally {L}ipschitz continuous coefficients.
\newblock {\em The Annals of Applied Probability}, 22(4):1611--1641, 2012.

\bibitem{itkin2019deep}
A.~Itkin.
\newblock Deep learning calibration of option pricing models: some pitfalls and
  solutions, 2019.
\newblock arXiv:1906.03507.

\bibitem{Schonbucher1999}
P.~J.~Schonbucher.
\newblock A market model for stochastic implied volatility.
\newblock {\em Philosophical Transactions of the Royal Society A: Mathematical,
  Physical and Engineering Sciences}, 357:2071--2092, 1999.

\bibitem{Jacod2010}
J.~Jacod and P.~Protter.
\newblock Risk-neutral compatibility with option prices.
\newblock {\em Finance and Stochastics}, 14(2):285--315, 2010.

\bibitem{Jex1999}
M.~Jex, R.~Henderson, and D.~Wang.
\newblock Pricing exotics under the smile.
\newblock {\em Risk}, pages 72--75, November 1999.

\bibitem{Kahale2003}
N.~Kahale.
\newblock An arbitrage-free interpolation of volatilities.
\newblock {\em Risk Magazine}, 17:102--106, 2004.

\bibitem{Kallsen2013}
J.~Kallsen and P.~Krühner.
\newblock On a {Heath--Jarrow--Morton} approach for stock options.
\newblock {\em Finance and Stochastics}, 19:583--615, 2013.

\bibitem{Karatzas2007}
I.~Karatzas and C.~Kardaras.
\newblock The num\'eraire portfolio in semimartingale financial models.
\newblock {\em Finance and Stochastics}, 11:447--493, 2007.

\bibitem{kellerer1972}
H.~G. Kellerer.
\newblock {Markov-Komposition} und eine {Anwendung} auf {Martingale}.
\newblock {\em Mathematische Annalen}, 198:99--122, 1972.

\bibitem{Kreps1981}
D.~Kreps.
\newblock Arbitrage and equilibrium in economies with infinitely many
  commodities.
\newblock {\em Journal of Mathematical Economics}, 8(1):15--35, 1981.

\bibitem{Motzkin1953}
T.~Motzkin, H.~Raiffa, G.~L. Thompson, and R.~M. Thrall.
\newblock The double description method.
\newblock In {\em Contributions to the Theory of Games II}. Princeton
  University Press, 1953.

\bibitem{Piterbarg2006}
V.~Piterbarg.
\newblock Markovian projection method for volatility calibration.
\newblock SSRN preprint 906473, 2006.

\bibitem{Reiswich2009}
D.~Reiswich and U.~Wystup.
\newblock {FX} volatility smile construction.
\newblock CPQF Working Paper Series~20, Frankfurt School of Finance and
  Management, Centre for Practical Quantitative Finance (CPQF), 2009.
\newblock Available at \url{https://EconPapers.repec.org/RePEc:zbw:cpqfwp:20};
  accessed 2021-05-20.

\bibitem{Schwarz2017}
D.~C. Schwarz.
\newblock Market completion with derivative securities.
\newblock {\em Finance and Stochastics}, 21(1):263--284, 2017.

\bibitem{Schweizer2008}
M.~Schweizer and J.~Wissel.
\newblock Arbitrage-free market models for option prices: the multi-strike
  case.
\newblock {\em Finance and Stochastics}, 12(4):469--505, 2008.

\bibitem{Schweizer2007}
M.~Schweizer and J.~Wissel.
\newblock Term structures of implied volatilities: Absence of arbitrage and
  existence results.
\newblock {\em Mathematical Finance}, 18(1):77--114, 2008.

\bibitem{Szpruch2018}
{\L}.~Szpruch and X.~Zhang.
\newblock V-integrability, asymptotic stability and comparison theorem of
  explicit numerical schemes for {SDEs}.
\newblock {\em Mathematics of Computations}, 87:755--783, 2018.

\bibitem{Venttsel1965}
A.~Ventzel'.
\newblock On equations of the theory of conditional {Markov} processes.
\newblock {\em Theory of Probability and its Applications}, 10(2):357--360,
  1965.

\bibitem{Wissel2008}
J.~S. Wissel.
\newblock {\em Arbitrage-free market models for liquid options}.
\newblock PhD thesis, ETH Z\"urich, 2008.

\end{thebibliography}

\end{document}